\newcommand{\KWD}[1]{\ensuremath{\textit{#1}}\xspace}
\newcommand{\PI}{\KWD{PI}}
\newcommand{\IN}{\KWD{in}}
\newcommand{\LogicName}[1]{\ensuremath{\text{\textup{\sffamily #1}}}\xspace}
\newcommand{\MSO}{\LogicName{S1S}}
\newcommand{\MSOE}{\LogicName{S1S[E]}}
\newcommand{\HU}{\LogicName{H$_\mu$}}
\newcommand{\KLTL}{\LogicName{KLTL}}
\newcommand{\HPDL}{\LogicName{HyperPDL$-\Delta$}}
\newcommand{\HQPTL}{\LogicName{HyperQPTL}}
\newcommand{\PDL}{\LogicName{PDL}}
\newcommand{\AHLTL}{\LogicName{A-HyperLTL}}
\newcommand{\HAA}{\LogicName{HAA}}
\newcommand{\SNBA}{\LogicName{SNBA}}
\newcommand{\NBA}{\LogicName{NBA}}
\newcommand{\FOE}{\LogicName{FO[$<$,E]}}
\newcommand{\HCTLStar}{\LogicName{HyperCTL$^{*}$}}
\newcommand{\HLTL}{\LogicName{HyperLTL}}
\newcommand{\CHLTL}{\LogicName{HyperLTL$_{\textsf{C}}$}}
\newcommand{\SHLTL}{\LogicName{HyperLTL$_{\textsf{S}}$}}
\newcommand{\GHLTL}{\LogicName{GHyperLTL$_{\textsf{S}+\textsf{C}}$}}
\newcommand{\SGHLTL}[1]{\LogicName{SHyperLTL$_{\textsf{S}+\textsf{C}}^{#1}$}}
\newcommand{\LTL}{\LogicName{LTL}}
\newcommand{\CTL}{\LogicName{CTL}}
\newcommand{\CTLStar}{\LogicName{CTL$^{*}$}}
\newcommand{\PLTL}{\LogicName{PLTL}}
\newcommand{\QPTL}{\LogicName{QPTL}}
\newcommand{\Impl}{\rightarrow}
\newcommand{\tempfont}[1]{\textbf{#1}}
\newcommand{\tempbin}[1]{\mathbin{\mathbf{#1}}}
\newcommand{\unt}{\tempbin{U}}
\newcommand{\Until}{\unt}
\newcommand{\nxt}{\tempfont{X}}
\newcommand{\Next}{\nxt}
\newcommand{\future}{\tempbin{F}}
\newcommand{\Future}{\future}
\newcommand{\always}{\tempbin{G}}
\newcommand{\Always}{\always}
\newcommand{\Release}{\tempbin{R}}
\newcommand{\since}{\tempbin{S}}
\newcommand{\Since}{\since}
\newcommand{\PastRelease}{\tempbin{P}}
\newcommand{\once}{\tempbin{O}}
\newcommand{\Once}{\once}
\newcommand{\historically}{\tempbin{H}}
\newcommand{\Historically}{\historically}
\newcommand{\yesterday}{\tempfont{Y}}
\newcommand{\Yesterday}{\yesterday}
\newcommand{\balways}[1]{\tempbin{G}}
\newcommand{\know}{\tempfont{K}}
\newcommand{\trace}{\sigma}
\newcommand{\vartrace}{x}
\newcommand{\vartraceAux}{y}
\newcommand{\Traces}{\Pi}
\newcommand{\TracesMap}{\Traces}
\newcommand{\ActTraces}{\mathcal{L}}
\newcommand{\Lang}{\ActTraces}
\newcommand{\ctx}[1]{\langle #1 \rangle}
\newcommand{\Ctx}{C}
\newcommand{\Undef}{\texttt{und}}
\newcommand{\mktuple}[1]{\langle #1 \rangle}
\newcommand{\kstr}{\mathcal{K}}
\newcommand{\KS}{\kstr}
\newcommand{\States}{\ensuremath{S}}
\newcommand{\state}{\ensuremath{s}}
\newcommand{\Path}{\ensuremath{\pi}}
\newcommand{\Lab}{\ensuremath{Lab}}
\newcommand{\FStates}{\ensuremath{F}}
\newcommand{\Trans}{\ensuremath{E}}
\newcommand{\AP}{\textsf{AP}}
\newcommand{\nat}{\mathbb{N}}
\newcommand{\Rel}[2]{\ensuremath{#1[#2]}}
 \newcommand{\Var}{\textsf{VAR}}
 \newcommand{\Halt}{\textsf{Halt}}
\newcommand{\Dom}{{\textit{Dom}}}
\newcommand{\SUCC}{\textit{succ}}
\newcommand{\PRED}{\textit{pred}}
\newcommand{\Pt}{\textsf{P}}
\newcommand{\ie}{i.e.\xspace}
 \newcommand{\stfr}{\textit{stfr}}
  \newcommand{\pad}{\#}
  \newcommand{\TMap}{\textsf{T}}
  \newcommand{\Qf}{\textsf{Q}}
    \newcommand{\Frag}{\mathcal{F}}
  \newcommand{\Tag}{\textit{tag}}
\newcommand{\en}{\textit{en}}
\newcommand{\ini}{\textit{in}}
\newcommand{\dir}{\textit{dir}}
\newcommand{\details}[1]{}
\def\EXPSPACE{{\sc EXPSPACE}}
\def\NLOGSPACE{{\sffamily NLOGSPACE}}
\newcommand{\Au}{\mathcal{A}}
\newcommand{\Family}{\mathcal{F}}
\newcommand{\PosBool}{\ensuremath{\mathcal{B}^{+}}}
\newcommand{\true}{\texttt{true}}
\newcommand{\false}{\texttt{false}}
\newcommand{\FamStratum}{\ensuremath{\mathsf{F}}}
\newcommand{\Bu}{\ensuremath{\textsf{B}}}
\newcommand{\Co}{\ensuremath{\textsf{C}}}
\newcommand{\tran}{\texttt{t}}
\newcommand{\Acc}{\textit{Acc}}
\newcommand{\cl}{{\textit{cl}}}
\newcommand{\at}{{\textit{at}}}
\newcommand{\TransR}{\ensuremath{R}}
\newcommand{\Op}{\ensuremath{\mathcal{O}}}
\newcommand{\Plant}{\ensuremath{\mathcal{P}}}
\newcommand{\D}{\ensuremath{\mathcal{D}}}
\newcommand{\TransA}{\delta}
\newcommand{\acc}{\textit{acc}}
\newcommand{\Inf}{\textit{Inf}}
\newcommand{\dual}[1]{\ensuremath{\widetilde{#1}}} 
\newcommand{\sad}{\textit{sad}}
\newcommand{\Sub}{\textit{Sub}}
\newcommand{\Eq}{\textsf{E}}
\newcommand{\Ag}{\textsf{Agts}}
\newcommand{\Obs}{\textsf{Obs}}
\newcommand{\ObsPt}{\textsf{ObsPt}}
\newcommand{\Int}{\textsf{Int}}
\newcommand{\HMap}{\textsf{H}}
\newcommand{\Tower}{\mathsf{Tower}}
\newcommand{\DefinedAs}{\ensuremath{\,:=\,}}
\title{Unifying Asynchronous Logics for Hyperproperties}
\def\orcidID#1{\smash{\href{http://orcid.org/#1}{\protect\raisebox{-1.25pt}{\protect\includegraphics{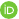}}}}}
\author{Alberto Bombardelli}{FBK, Povo TN, Italy}{abombardelli@fbk.eu}{http://orcid.org/0000-0003-3385-3205}{}
\author{Laura Bozzelli}{University of Napoli ``Federico II'', Napoli, Italy}{}{http://orcid.org/0000-0003-0963-8169}{}
\author{C\'esar S\'anchez}{IMDEA Software Institute, Madrid, Spain}{cesar.sanchez@imdea.org}{http://orcid.org/0000-0003-3927-4773}{}
\author{Stefano Tonetta}{FBK, Povo TN, Italy}{tonettas@fbk.eu}{http://orcid.org/0000-0001-9091-7899}{}
\authorrunning{A. Bombadelli et al.}
\keywords{Asynchronous hyperproperties, Temporal logics for hyperproperties, Expressiveness, Decidability, Model checking}
\begin{document}
\maketitle
\begin{abstract}
  We introduce and investigate a powerful hyper logical framework in
  the linear-time setting that we call \emph{generalized $\HLTL$ with
    stuttering and contexts} ($\GHLTL$ for short).
  $\GHLTL$ unifies the asynchronous extensions of $\HLTL$ called
  $\SHLTL$ and $\CHLTL$, and the well-known extension $\KLTL$ of
  $\LTL$ with knowledge modalities under both the synchronous and
  asynchronous perfect recall semantics.
  As a main contribution, we identify a meaningful
  fragment of $\GHLTL$, that we call \emph{simple} $\GHLTL$, with a
  decidable model-checking problem, which is more expressive than
  $\HLTL$ and known fragments of asynchronous extensions of $\HLTL$
  with a decidable model-checking problem.  Simple $\GHLTL$ subsumes
  $\KLTL$ under the synchronous semantics and the one-agent fragment
  of $\KLTL$ under the asynchronous semantics and to the best of our
  knowledge, it represents the unique hyper logic with a decidable
  model-checking problem which can express powerful non-regular trace
  properties when interpreted on singleton sets of traces. We justify
  the relevance of simple $\GHLTL$ by showing that it can express
  diagnosability properties, interesting classes of information-flow
  security policies, both in the synchronous and asynchronous
  settings, and bounded termination (more in general, global
  promptness in the style of Prompt $\LTL$).
  %
  %
  \end{abstract}

  \section{Introduction}

Temporal logics~\cite{MP92} play a fundamental role in the  formal
verification  of the dynamic behaviour of complex reactive systems.
Classic \emph{regular} temporal logics such as $\LTL$, $\CTL$, and
$\CTLStar$~\cite{Pnueli77,EmersonH86} are suited for the specification of
 \emph{trace properties} which describe  the ordering of events along individual execution
traces of a system. In the last 15 years, a novel specification paradigm has been introduced
that generalizes traditional regular trace properties by properties of
sets of traces, the so called
\emph{hyperproperties}~\cite{ClarksonS10}.
Hyperproperties relate distinct traces and are useful to formalize a wide range of properties
of prime interest which go, in general,
 beyond regular properties and cannot be expressed in standard regular temporal logics.
 A relevant example  concerns information-flow security policies like
noninterference~\cite{goguen1982security,McLean96} and observational
determinism~\cite{ZdancewicM03}  which compare observations made by an
external low-security agent along traces resulting from different
values of not directly observable inputs. Other examples include bounded termination of programs,
diagnosability of critical systems (which amounts to checking whether the available sensor  information is sufficient
to infer the presence of faults on the hidden behaviour of the system)~\cite{SampathSLST95,BozzanoCGT15,BittnerBCGTV22}, and epistemic properties describing the
 knowledge  of agents in distributed systems \cite{HalpernV86,MeydenS99,HalpernO08}.

 In the context of model checking of finite-state reactive systems, many temporal logics for hyperproperties have been proposed~\cite{DimitrovaFKRS12,ClarksonFKMRS14,BozzelliMP15,Rabe2016,FinkbeinerH16,CoenenFHH19,GutsfeldMO20}
for which model checking is decidable, including
$\HLTL$~\cite{ClarksonFKMRS14}, $\HCTLStar$~\cite{ClarksonFKMRS14},
$\HQPTL$~\cite{Rabe2016,CoenenFHH19}, and $\HPDL$~\cite{GutsfeldMO20}
which extend $\LTL$, $\CTLStar$, $\QPTL$~\cite{SistlaVW87}, and
$\PDL$~\cite{FischerL79}, respectively, by explicit first-order
quantification over traces and trace variables to refer to multiple
traces at the same time. The semantics of all these logics is \emph{synchronous}:
the temporal modalities are evaluated by a
lockstepwise traversal of all the traces assigned to the quantified
trace variables. Other  approaches for the formalization of synchronous hyper logics are either based on hyper variants
of monadic second-order logic over traces or trees~\cite{CoenenFHH19}, or the
adoption of a \emph{team semantics} for standard temporal logics, in particular, $\LTL$~\cite{KrebsMV018,Luck20,VirtemaHFK021}.
For the first approach in the linear-time setting,
we recall the logic $\MSOE$~\cite{CoenenFHH19} (and its first-order fragment $\FOE$~\cite{Finkbeiner017})
which syntactically  extends monadic second-order logic of one successor $\MSO$ with the \emph{equal-level predicate}
$\Eq$, which relates the same time point on different traces. More recently, an extension of $\HLTL$ with second-order quantification over traces
has been introduced~\cite{BeutnerFFM23} which allows to express common knowledge in multi-agent distributed systems. Like $\MSOE$, model checking of this extension
  of $\HLTL$ is highly undecidable~\cite{BeutnerFFM23}. 

Hyper logics supporting asynchronous features have been introduced recently~\cite{GutsfeldOO21,BaumeisterCBFS21,BozzelliPS21}.
These logics allow to relate traces at distinct time points which can be arbitrarily far
from each other. Asynchronicity is ubiquitous in many real-world systems, for example, in multithreaded environments in which threads are not scheduled lockstepwise,
and traces associated with distinct threads progress with different speed.
Asynchronous hyperproperties  are also useful in information-flow security and diagnosability settings where an
observer cannot distinguish
consecutive time points along an execution having the same
observations. This requires to match asynchronously  sequences of observations along
distinct execution traces.
 The first systematic study of asynchronous hyperproperties was done by Gutsfeld et al.~\cite{GutsfeldOO21}, who introduced
 the  temporal fixpoint calculus $\HU$ and its automata-theoretic counterpart for expressing such properties in the linear-time
 setting.

 More recently, three temporal logics~\cite{BaumeisterCBFS21,BozzelliPS21} which syntactically extend \HLTL\ have been introduced
for expressing  asynchronous hyperproperties:  \emph{Asynchronous $\HLTL$}
($\AHLTL$)~\cite{BaumeisterCBFS21} and  \emph{Stuttering $\HLTL$} (\SHLTL)~\cite{BozzelliPS21}, both
useful for asynchronous security analysis, and \emph{Context $\HLTL$} (\CHLTL)~\cite{BozzelliPS21}, useful for expressing hyper-bounded-time response
requirements. The logic  $\AHLTL$, which is expressively incomparable with both $\HLTL$ and $\SHLTL$~\cite{BozzelliPS22}, models asynchronicity   by means of an additional quantification layer over the so called \emph{trajectories} which control  the relative speed at which traces progress by choosing
 at each instant which traces move and which traces stutter. On the other hand, the logic $\SHLTL$
 exploits relativized versions of
the temporal modalities with respect to finite sets $\Gamma$ of $\LTL$
formulas: these modalities are evaluated by a lockstepwise traversal of the sub-traces of the
given traces which are obtained by removing ``redundant'' positions
with respect to the pointwise evaluation of the $\LTL$ formulas in
$\Gamma$. Finally, the logic $\CHLTL$ is more expressive than $\HLTL$ and is  not expressively subsumed  by either $\AHLTL$ or $\SHLTL$~\cite{BozzelliPS22}.  $\CHLTL$ extends $\HLTL$ by unary modalities $\ctx{C}$ parameterized by
a non-empty subset $C$ of trace variables---called the
\emph{context}---which restrict the evaluation of the temporal
modalities to the traces associated with the variables in $C$.
Note that the temporal modalities in $\CHLTL$ are evaluated by a lockstepwise
 traversal of the  traces assigned to the variables in the current context, and unlike $\HLTL$, the current time points of these traces from which
   the evaluation starts are in general different.
  It is known that these three syntactical extensions of $\HLTL$ are less expressive than
$\HU$~\cite{BozzelliPS22} and like $\HU$, model checking  the respective quantifier  alternation-free
fragments are already undecidable~\cite{BaumeisterCBFS21,BozzelliPS21}. The works~\cite{BaumeisterCBFS21,BozzelliPS21} identify practical fragments
of the logics $\AHLTL$ and $\SHLTL$ with a decidable model checking problem. In particular, we recall the so called  \emph{simple fragment} of $\SHLTL$~\cite{BozzelliPS21}, which is more expressive than $\HLTL$~\cite{BozzelliPS22} and can specify interesting security policies in both the asynchronous and synchronous settings.

Formalization of asynchronous hyperproperties in the \emph{team semantics setting} following an approach similar to the \emph{trajectory construct} of $\AHLTL$ has been investigated in~\cite{GutsfeldMOV22}. It is worth noting that unlike other hyper logics (including logics with team semantics) which only capture regular trace properties when interpreted on singleton sets of traces, the logics $\CHLTL$, $\AHLTL$, and $\HU$ can express non-regular trace properties~\cite{BozzelliPS22}.

\vspace{0.1cm}

\noindent \textbf{Our contribution.} Specifications in $\HLTL$ and in the known asynchronous extensions of $\HLTL$, whose most expressive representative is 
 $\HU$~\cite{GutsfeldOO21}, consist  of a prefix of trace quantifiers followed by a quantifier-free formula which expresses temporal requirements on a fixed number of traces. Thus, these hyper logics lack mechanisms to relate directly an unbounded number of traces, which are required for example to express bounded termination or diagnosability properties~\cite{SampathSLST95,BozzanoCGT15,BittnerBCGTV22}. This ability is partially supported by temporal logics with team semantics~\cite{KrebsMV018,Luck20,VirtemaHFK021} and extensions of temporal logics
with the knowledge modalities of epistemic logic~\cite{fagin1995reasoning}, which
relate computations whose histories are observationally equivalent for a
  given agent. In this paper, we introduce and investigate a hyper logical framework in the linear-time setting which unifies two known asynchronous extensions of $\HLTL$ and  the well-known extension $\KLTL$~\cite{HalpernV86} of $\LTL$ with  knowledge modalities under both the
  synchronous and asynchronous perfect recall semantics (where an agent remembers the
  whole sequence of its observations). The novel logic, that we call \emph{generalized $\HLTL$ with stuttering and contexts} ($\GHLTL$ for short),
merges $\SHLTL$ and $\CHLTL$ and adds two
new natural modeling facilities: past temporal modalities for asynchronous hyperproperties and general trace quantification where trace quantifiers can occur in
the scope of temporal modalities. Past temporal modalities used in combination with context modalities provide a powerful mechanism to compare histories of computations at distinct time points. Moreover, unrestricted trace quantification allows to relate an unbounded number of traces.

As a main contribution, we identify a meaningful fragment of $\GHLTL$ with a decidable model-checking problem, that we call  \emph{simple} $\GHLTL$.
This fragment is obtained from $\GHLTL$ by carefully imposing restrictions on the use of the stuttering and context modalities. Simple $\GHLTL$  allows quantification over arbitrary \emph{pointed} traces (i.e., traces plus time points) in the style of $\FOE$~\cite{Finkbeiner017}, it is more expressive than the simple fragment of $\SHLTL$~\cite{BozzelliPS21}, and it is expressively incomparable with full $\SHLTL$ and $\MSOE$. Moreover, this fragment subsumes both $\KLTL$ under the synchronous semantics and the one-agent fragment of $\KLTL$ under the asynchronous semantics.
In fact, simple $\GHLTL$ can be seen as a very large fragment of
$\GHLTL$ with a decidable model checking problem which (1) strictly
subsumes $\HLTL$ and the simple fragment of $\SHLTL$, (2) is closed
under Boolean connectives, and (3) allows an unrestricted nesting of
temporal modalities.
We justify the relevance of simple $\GHLTL$ by showing
that it can express diagnosability properties, interesting classes of
information-flow security policies, both in the synchronous and
asynchronous settings, and bounded termination (more in general,
global promptness in the style of Prompt
$\LTL$~\cite{KupfermanPV09}).
To the best of our knowledge, simple $\GHLTL$ represents the unique
hyper logic with a decidable model-checking problem which can express
powerful non-regular trace properties when interpreted over singleton
sets of
traces.
Due to lack of space, some proofs are omitted and are given in the Appendix.

  \section{Background}\label{sec:Background}

We denote by $\nat$ the set of natural numbers.
Given $i,j\in\nat$, we write $[i,j]$ for the set of natural numbers
$h$ such that $i\leq h\leq j$, we use $[i,j)$ for the set
$[i,j]\setminus\{j\}$, we use $(i,j]$ for the set
$[i,j]\setminus\{i\}$, and $[i,\infty)$ for the set of natural numbers
$h$ such that $h\geq i$.
Given a word $w$ over some alphabet $\Sigma$, $|w|$ is the length of
$w$ ($|w|=\infty$ if $w$ is infinite).
For each $0\leq i<|w|$, $w(i)$ is the $(i+1)^{th}$ symbol of $w$,
$w^{i}$ is the suffix of $w$ from position $i$, that is, the word
$w(i)w(i+1)\ldots$, and $w[0,i]$ is the prefix of $w$ that ends at
position $i$.

We fix a finite set $\AP$ of atomic propositions.
A \emph{trace} is an infinite word over $2^{\AP}$, while a \emph{finite trace} is a
nonempty finite word over $2^{\AP}$.
A \emph{pointed trace} is a pair $(\trace,i)$ consisting of a trace
$\trace$ and a position (timestamp) $i\in\nat$ along
$\trace$.

\vspace{0.5em}
\noindent \textbf{Kripke structures.}
We define the dynamic behaviour of reactive systems by 
\emph{Kripke structures} $\KS=\mktuple{\States,\States_0,\Trans,\Lab}$
over a finite set $\AP$ of atomic propositions, where
$\States$ is a set of states, $\States_0\subseteq \States$ is the set of initial
states, $\Trans\subseteq \States\times \States$ is a transition
relation   which is total in the first argument (i.e., for each
$\state\in \States$ there is  $\state'\in \States$ with
$(\state,\state')\in \Trans$), and $\Lab:\States \rightarrow 2^{\AP}$ is a labeling map
assigning to each state $\state$ the set of
propositions holding at $\state$.
The Kripke structure $\KS$ is finite if $\States$ is finite.
A \emph{path} $\Path$ of $\KS$ is an infinite word
$\Path= \state_0,\state_1,\ldots$ over $\States$ such that $\state_0\in \States_0$  and for all
$i\geq 0$, $(\state_{i},\state_{i+1})\in \Trans$.
The path $\Path= \state_0,\state_1,\ldots$ induces the trace
$\Lab(\state_0)\Lab(\state_1)\ldots$.
A \emph{trace of $\KS$}  is a trace induced by some path of $\KS$.
We denote by $\Lang(\KS)$ the set of traces of $\KS$.
A \emph{finite path} of $\KS$ is a non-empty infix of some path of $\KS$.
We also consider \emph{fair finite Kripke structures}
$(\KS,\FStates)$, that is, finite Kripke structures
$\KS$ equipped with a subset $\FStates$ of $\KS$-states.
A path $\Path$ of $\KS$ is \emph{$\FStates$-fair} if $\Path$ visits
infinitely many times some state in $\FStates$.
We denote by $\Lang(\KS,\FStates)$ the set of traces of
$\KS$ associated with the $\FStates$-fair paths of
$\KS$.\vspace{0.5em}

\noindent  \textbf{Standard $\LTL$ with past ($\PLTL$ for
short)~\cite{Pnueli77}.}
%
%
 Formulas $\psi$ of $\PLTL$ over the given finite set $\AP$ of atomic propositions
 are defined by the following grammar:
\[
\psi ::= \top \mid  p  \mid \neg \psi \mid  \psi \vee \psi \mid \Next \psi   \mid  \Yesterday \psi   \mid \psi \Until \psi \mid \psi \Since \psi
\]
\noindent where $p\in \AP$, $\Next$ and $\Until$ are the \emph{next} and
\emph{until} temporal modalities respectively, and $\Yesterday$
(\emph{previous} or \emph{yesterday}) and $\Since$ (\emph{since}) are
their past counterparts.
$\LTL$ is the fragment of $\PLTL$ that does not contain the past
temporal modalities $\Yesterday$ and $\Since$.
We also use the following abbreviations:
$\Future\psi:=\top \Until \psi$ (\emph{eventually}),
$\Once\psi:=\top \Since \psi$ (\emph{past eventually} or
\emph{once}), and their duals
$\Always\psi:=\neg \Future \neg\psi$ (\emph{always}) and
$\Historically\psi:=\neg \Once \neg\psi$ (\emph{past always} or
\emph{historically}).

The semantics of $\PLTL$ is defined over pointed traces $(\trace,i)$.
The satisfaction relation $(\trace,i)\models \psi$, that defines
whether formula $\psi$ holds at position $i$ along $\trace$, is
inductively defined as follows (we omit the semantics for the Boolean
connectives which is standard):
\[
\begin{array}{ll}
     (\trace, i) \models  p  &  \Leftrightarrow  p \in \trace(i)\\
     (\trace,i)\models  \Next\psi &  \Leftrightarrow (\trace,i+1)\models  \psi\\
     (\trace,i)\models  \Yesterday\psi &  \Leftrightarrow i>0 \text{ and } (\trace,i-1)\models  \psi\\
     (\trace, i) \models  \psi_1\Until \psi_2  &
  \Leftrightarrow  \text{for some $j\geq i$}: (\trace, j)
  \models  \psi_2
  \text{ and }  (\trace, k) \models   \psi_1 \text{ for all }i\leq k<j \\
     (\trace, i) \models  \psi_1\Since \psi_2  &
  \Leftrightarrow  \text{for some $j\leq i$}: (\trace, j)
  \models  \psi_2
  \text{ and }  (\trace, k) \models   \psi_1 \text{ for all }j< k\leq i \\
\end{array}
\]
\noindent A trace $\trace$ is a \emph{model} of $\psi$, written
$\trace\models \psi$, whenever $(\trace,0)\models \psi$.
\vspace{0.5em}

\noindent  \textbf{The logic $\HLTL$~\cite{ClarksonFKMRS14}.}
The syntax of $\HLTL$ formulas $\varphi$ over the given finite set $\AP$ of atomic propositions
and a finite set $\Var$ of trace variables is as follows:
\[
    \varphi := \exists \vartrace.\,\varphi     \mid \forall \vartrace.\,\varphi \mid \psi \quad \quad
     \psi :=   \top \mid \Rel{p}{\vartrace}  \mid \neg \psi \mid \psi \vee \psi   \mid \Next \psi \mid
     \psi \Until  \psi
 \]
where $p\in\AP$, $\vartrace\in\Var$, and
$\exists \vartrace$ and $\forall \vartrace$ are the
\emph{hyper} existential and universal trace quantifiers for variable $\vartrace$, respectively,
which allow relating different traces of the given set of traces. Note that a
$\HLTL$ formula consists of a prefix of traces quantifiers followed by a quantifier-free formula, where the latter corresponds to an $\LTL$
formula whose atomic propositions $p$ are replaced with $\vartrace$-relativized versions $\Rel{p}{\vartrace}$.
Intuitively, $\Rel{p}{\vartrace}$ asserts that $p$ holds at the pointed trace assigned to variable $x$.
 A \emph{sentence} is a formula
where each relativized proposition $\Rel{p}{\vartrace}$ occurs in the scope of trace quantifier
$\exists \vartrace$ or $\forall \vartrace$.

In order to define the semantics of $\HLTL$, we need additional definitions. The \emph{successor} $\SUCC(\trace,i)$ of a pointed trace $(\trace,i)$
is the pointed trace $(\trace,i+1)$, which captures the standard local successor of a position along a trace.

Given a set of traces $\ActTraces$, a (\emph{pointed}) \emph{trace assignment} over $\ActTraces$
is a  partial mapping $\TracesMap : \Var \rightarrow \ActTraces \times \nat$ assigning to each trace variable
$\vartrace$---where $\TracesMap$ is defined---a pointed trace $(\trace,i)$ such that $\trace\in \ActTraces$.
We use $\Dom(\TracesMap)$ to refer to the trace variables for which $\TracesMap$ is defined.
The \emph{successor $\SUCC(\TracesMap)$ of $\TracesMap$} is
the trace assignment over $\ActTraces$ having domain $\Dom(\TracesMap)$ such that $\SUCC(\TracesMap)(\vartrace)=\SUCC(\TracesMap(\vartrace))$
for each $\vartrace\in \Dom(\TracesMap)$. For each $i\geq 0$, we use $\SUCC^i$
for the function obtained by $i$ applications of the function $\SUCC$: $\SUCC^0(\TracesMap)\DefinedAs \TracesMap$ and
$\SUCC^{i+1}(\TracesMap)\DefinedAs \SUCC(\SUCC^i(\TracesMap))$.

Given 
$\vartrace\in \Var$ and a pointed trace $(\trace,i)$ with $\trace\in \ActTraces$,
we denote by $\TracesMap[\vartrace \mapsto (\trace,i)]$ the trace assignment that is identical
to $\TracesMap$ besides for $\vartrace$, which is mapped to $(\trace, i)$.

Given a formula $\varphi$, a set of traces $\ActTraces$, and
a trace assignment $\TracesMap$ over $\ActTraces$ such that $\Dom(\TracesMap)$ contains all the trace variables occurring free in $\varphi$,
the satisfaction relation $\TracesMap \models_{\ActTraces} \varphi$
  is inductively defined as follows (we again omit the semantics of the Boolean connectives):
\[
 \hspace{-0.2cm}\begin{array}{ll}
     \TracesMap  \models_{\ActTraces} \Rel{p}{\vartrace} & \Leftrightarrow   \Traces(\vartrace) = (\trace,i) \text{ and }
    p\in \trace(i)\\
    \TracesMap  \models_{\ActTraces} \exists \vartrace.\, \varphi & \Leftrightarrow  \text{ for some } \trace \in\ActTraces,\,
    \TracesMap[\vartrace \mapsto (\trace, 0)]  \models_{\ActTraces}\varphi\\
     \TracesMap \models_{\ActTraces} \forall \vartrace.\, \varphi & \Leftrightarrow  \text{ for all } \trace\in\ActTraces,\,
     \TracesMap[\vartrace \mapsto (\trace, 0)]  \models_{\ActTraces}\varphi\\
     \TracesMap  \models_{\ActTraces} \Next \psi & \Leftrightarrow   \SUCC (\TracesMap) \models \psi \\
      \TracesMap \models_{\ActTraces} \psi_1 \unt  \psi_2 &\Leftrightarrow
      \text{for some } i\geq 0:\,  \SUCC^i (\TracesMap)  \models_{\ActTraces} \psi_2 \text{ and }  \SUCC^j (\TracesMap) \models_{\ActTraces} \psi_1
       \text{ for all } 0 \le j < i
      \end{array}
\]
\noindent Note that trace quantification ranges over \emph{initial} pointed traces $(\trace,0)$ over $\ActTraces$ (the timestamp is $0$).
As an example, the sentence
$\forall \vartrace_1.\,\forall \vartrace_2.\, \bigwedge_{p\in \AP}\Always (\Rel{p}{\vartrace_1} \leftrightarrow \Rel{p}{\vartrace_2})$ captures the sets of traces which are singletons.

 For a sentence $\varphi$ and a set of traces $\ActTraces$, $\ActTraces$ is a \emph{model} of $\varphi$, written $\ActTraces\models \varphi$, if
$\TracesMap_\emptyset\models_{\ActTraces} \varphi$ where $\TracesMap_\emptyset$ is the trace assignment with empty domain.

  \section{Unifying Framework for Asynchronous Extensions of
HyperLTL}\label{sec:NovelLogic}

 In this section, we introduce a novel logical framework for specifying both asynchronous and synchronous linear-time hyperproperties which
 unifies two known more expressive extensions of $\HLTL$~\cite{ClarksonFKMRS14}, namely
 \emph{Stuttering $\HLTL$} ($\SHLTL$ for short)~\cite{BozzelliPS21} and
\emph{Context $\HLTL$} ($\CHLTL$ for short)~\cite{BozzelliPS21}.
%
%
%
The proposed hyper logic, that we call \emph{generalized $\HLTL$ with stuttering and contexts} ($\GHLTL$ for short),
merges $\SHLTL$ and $\CHLTL$ and adds two
new features: past temporal modalities for asynchronous/synchronous hyperproperties and general trace quantification where trace quantifiers can occur in
the scope of temporal modalities. Since model checking of the logics $\SHLTL$ and $\CHLTL$ is already undecidable~\cite{BozzelliPS21}, we also
consider a meaningful fragment of $\GHLTL$ which is strictly more expressive than the known \emph{simple fragment} of $\SHLTL$~\cite{BozzelliPS21}. Our fragment is
able to express relevant classes of hyperproperties and, as we show in Section~\ref{sec:modelCheckingFragment},  its model checking problem  is decidable.



\subsection{PLTL-Relativized Stuttering and Context Modalities}\label{subSec:RelativizedStuttering}

Classically, a trace is stutter-free if there are no consecutive
positions having the same propositional valuation unless the valuation
is repeated ad-infinitum.
We can associate to each trace a unique stutter-free trace by removing
``redundant'' positions.
The logic $\SHLTL$~\cite{BozzelliPS21}  generalizes these notions with respect to the
pointwise evaluation of a finite set of $\LTL$ formulas. Here, we
consider $\LTL$ with past ($\PLTL$).

\begin{definition}[$\PLTL$ stutter factorization~\cite{BozzelliPS21}]
  Let $\Gamma$ be a finite set of $\PLTL$ formulas and $\trace$ a
  trace.
  The \emph{$\Gamma$-stutter factorization of $\trace$} is the unique
  increasing sequence of positions $\{i_k\}_{k\in[0,m_{\infty}]}$ for
  some $m_{\infty}\in \nat\cup\{\infty\}$ such that the following holds
  for all $j <m_{\infty}$:
  \begin{itemize}
  \item $i_0=0$ and $i_j<i_{j+1}$;
  \item for each $\theta\in \Gamma$, the truth value of $\theta$ along
    the segment $[i_j,i_{j+1})$ does not change, that is, for all
    $h,k\in [i_j,i_{j+1})$, $(\trace,h)\models \theta$ iff
    $(\trace,k)\models \theta$, and the same holds for the infinite segment
    $[m_{\infty}, \infty]$ in case $m_{\infty}\neq \infty$;
  \item the truth value of some formula in $\Gamma$ changes along
    adjacent segments, that is, for some $\theta\in \Gamma$ (depending on
    $j$), $(\trace,i_j)\models \theta$ iff $(\trace,i_{j+1})\not\models \theta$.
  \end{itemize}
\end{definition}

Thus, the $\Gamma$-stutter factorization
$\{i_k\}_{k\in[0,m_{\infty}]}$ of $\trace$ partitions the trace in
adjacent non-empty segments such that the valuation of formulas in
$\Gamma$ does not change within a segment, and changes in moving from
a segment to the adjacent ones.
This factorization induces in a natural way a trace obtained by
selecting the first positions of the finite segments and all the
positions of the unique tail infinite segment, if any. These positions
form an infinite increasing sequence $\{\ell_k\}_{k\in \nat}$ called \emph{$(\Gamma,\omega)$-stutter factorization} of $\trace$,
where:
\[
\ell_0,\ell_1, \ldots \DefinedAs
  \begin{cases}
    i_0,i_1,\ldots & \text{if } m_{\infty}=\infty \\
    i_0,  i_1, \ldots,  i_{m_{\infty}}, i_{m_{\infty}}+1,\ldots & \text{otherwise }
  \end{cases}
\]
The \emph{$\Gamma$-stutter trace $\stfr_{\Gamma}(\trace)$ of $\trace$} (see~\cite{BozzelliPS21})  is defined as follows: $\stfr_{\Gamma}(\trace)\DefinedAs \trace(\ell_0)\trace(\ell_1)\ldots $.
 Note that for $\Gamma=\emptyset$, $\stfr_{\Gamma}(\trace)=\trace$. A trace $\trace$ is \emph{$\Gamma$-stutter free} if it coincides with its $\Gamma$-stutter trace, \ie
$\stfr_{\Gamma}(\trace)=\trace$.

As an example, assume that $\AP =\{p,q,r\}$ and let
$\Gamma= \{p\Until q\}$.
Given $h,k\geq 1$, let $\trace_{h,k}$ be the trace
$\trace_{h,k} = p^{h} q^{k} r^{\omega}$.
These traces have the same $\Gamma$-stutter trace 
given by $p r^{\omega}$.
%

The semantics of the $\Gamma$-relativized temporal modalities
in $\SHLTL$ is based on the notion of \emph{$\Gamma$-successor} $\SUCC_\Gamma(\trace,i)$ of a pointed trace $(\trace,i)$~\cite{BozzelliPS21}:
$\SUCC_\Gamma(\trace,i)$ is the pointed trace $(\trace,\ell)$
where $\ell$ is the smallest position $\ell_j$ in the
$(\Gamma,\omega)$-stutter factorization $\{\ell_k\}_{k\in \nat}$ of $\trace$ which is greater than $i$.
Note that for $\Gamma=\emptyset$,  $\SUCC_\emptyset(\trace,i)=\SUCC(\trace,i)=(\trace,i+1)$. Hence,
$\emptyset$-relativized temporal modalities
in $\SHLTL$ correspond to the temporal modalities of $\HLTL$.

In this paper we extend $\SHLTL$ with past temporal modalities, so that  we introduce the past counterpart
of the successor function. The \emph{$\Gamma$-predecessor} $\PRED_\Gamma(\trace,i)$ of a pointed trace
$(\trace,i)$ is undefined if $i=0$ (written $\PRED_\Gamma(\trace,i)=\Undef$); otherwise,
$\PRED_\Gamma(\trace,i)$ is the pointed trace $(\trace,\ell)$
where $\ell$ is the greatest position $\ell_j$ in the
$(\Gamma,\omega)$-stutter factorization $\{\ell_k\}_{k\in \nat}$ of $\trace$ which is smaller than $i$ (since $\ell_0=0$ such an $\ell_j$ exists).
Note that for $\Gamma=\emptyset$, $\PRED_\emptyset(\trace,i)$ captures the standard local predecessor of a position along a trace.\vspace{0.1cm}

\noindent \textbf{Successors and predecessors of trace  assignments.}
We now define a generalization of the successor  $\SUCC(\TracesMap)$ of a trace assignment  $\TracesMap$ in $\HLTL$.
This generalization is based on the notion of $\Gamma$-successor $\SUCC_\Gamma(\trace,i)$ of a pointed trace $(\trace,i)$
and also takes into account the context modalities $\ctx{\Ctx}$ of  $\CHLTL$~\cite{BozzelliPS21}, where a \emph{context} $\Ctx$ is a
non-empty subset of $\Var$. Intuitively, modality  $\ctx{\Ctx}$ allows reasoning over a subset of the traces assigned to the variables in the formula,
by restricting the temporal progress to those traces.

Formally, let  $\TracesMap$ be a trace assignment over some set of traces $\ActTraces$,  $\Gamma$ be a finite set of $\PLTL$ formulas,
and $\Ctx$ be a context. The \emph{$(\Gamma,\Ctx)$-successor of
$\TracesMap$}, denoted by $\SUCC_{(\Gamma,\Ctx)}(\TracesMap)$,  is the trace assignment over $\ActTraces$  having domain
 $\Dom(\TracesMap)$, and defined as follows for each $\vartrace\in \Dom(\TracesMap)$:
\[
\SUCC_{(\Gamma,\Ctx)}(\TracesMap)(\vartrace) :=
  \begin{cases}
    \SUCC_{\Gamma}(\TracesMap(\vartrace)) & \text{if }    \vartrace\in \Ctx \\
    \TracesMap(\vartrace) & \text{otherwise }
  \end{cases}
\]
Note that $\SUCC_{(\emptyset,\Var)}(\TracesMap)= \SUCC(\TracesMap)$. Moreover, we define the \emph{$(\Gamma,\Ctx)$-predecessor $\PRED_{(\Gamma,\Ctx)}(\TracesMap)$ of
$\TracesMap$}  as follows:
 $\PRED_{(\Gamma,\Ctx)}(\TracesMap)$ is \emph{undefined}, written $\PRED_{(\Gamma,\Ctx)}(\TracesMap)= \Undef$, if
 there is $\vartrace\in \Dom(\TracesMap)$ such that $\PRED_\Gamma(\TracesMap(\vartrace))=\Undef$. Otherwise,
 $\PRED_{(\Gamma,\Ctx)}(\TracesMap)$  is the trace assignment over $\ActTraces$  having domain
 $\Dom(\TracesMap)$, and defined as follows for each $\vartrace\in \Dom(\TracesMap)$:
\[
\PRED_{(\Gamma,\Ctx)}(\TracesMap)(\vartrace) :=
  \begin{cases}
    \PRED_{\Gamma}(\TracesMap(\vartrace)) & \text{if }    \vartrace\in \Ctx \\
    \TracesMap(\vartrace) & \text{otherwise }
  \end{cases}
\]
 Finally, for each $i\geq 0$, we define the $i^{th}$ application $\SUCC^{i}_{(\Gamma,\Ctx)}$ of $\SUCC_{(\Gamma,\Ctx)}$ and
 the $i^{th}$ application $\PRED^{\,i}_{(\Gamma,\Ctx)}$ of $\PRED_{(\Gamma,\Ctx)}$ as follows, where $\PRED_{(\Gamma,\Ctx)}(\Undef)\DefinedAs \Undef$:
 \begin{itemize}
   \item $\SUCC^{0}_{(\Gamma,\Ctx)}(\TracesMap)\DefinedAs \TracesMap$ and  $\SUCC^{i+1}_{(\Gamma,\Ctx)}(\TracesMap)\DefinedAs \SUCC_{(\Gamma,\Ctx)}(\SUCC^{i}_{(\Gamma,\Ctx)}(\TracesMap))$.
   \item $\PRED^{\,0}_{(\Gamma,\Ctx)}(\TracesMap)\DefinedAs \TracesMap$ and  $\PRED^{\,i+1}_{(\Gamma,\Ctx)}(\TracesMap)\DefinedAs \PRED_{(\Gamma,\Ctx)}(\PRED^{\,i}_{(\Gamma,\Ctx)}(\TracesMap))$.
 \end{itemize}

\subsection{Generalized HyperLTL with Stuttering and Contexts}\label{subSec:GeneralizedHLTL}

We introduce now the novel logic $\GHLTL$. $\GHLTL$ formulas $\varphi$ over
$\AP$ and a finite set $\Var$ of trace variables are defined by the following syntax:
\[
    \varphi :=\top \mid \Rel{p}{\vartrace}  \mid \neg \varphi \mid \varphi \vee \varphi \mid \exists \vartrace.\,\varphi     \mid
    \ctx{\Ctx} \varphi  \mid \Next_{\Gamma} \varphi \mid \Yesterday_{\Gamma} \varphi \mid
     \varphi \Until_{\Gamma} \varphi \mid
      \varphi \Since_{\Gamma} \varphi
 \]
where $p\in\AP$, $\vartrace\in\Var$,
$\ctx{\Ctx}$  is the context modality
with $\emptyset \neq \Ctx\subseteq \Var$, $\Gamma$ is a finite set of $\PLTL$ formulas,
and $\Next_{\Gamma}$, $\Yesterday_{\Gamma}$,  $\Until_\Gamma$ and $\Since_{\Gamma}$ are the
stutter-relativized versions of the $\PLTL$ temporal modalities.
Intuitively, the context modality $\ctx{\Ctx}$
 restricts the evaluation of the temporal
modalities to the traces associated with the variables in $\Ctx$, while
the temporal modalities $\Next_{\Gamma}$, $\Yesterday_{\Gamma}$,  $\Until_\Gamma$ and $\Since_{\Gamma}$
are evaluated by a
lockstepwise traversal of the $\Gamma$-stutter traces associated to the traces assigned to the  variables in the current context $\Ctx$.
Note that  the hyper universal  quantifier $\forall\vartrace$ can be introduced as an abbreviation:
$\forall\vartrace.\,\varphi\equiv \neg \exists \vartrace.\,\neg\varphi$. For a  variable
$\vartrace$, we write $\ctx{\vartrace}$ instead of $\ctx{\{\vartrace\}}$.
Moreover, we write $\Next$, $\Yesterday$,  $\Until$ and $\Since$ instead of
$\Next_{\emptyset}$, $\Yesterday_{\emptyset}$,  $\Until_\emptyset$ and $\since_{\emptyset}$, respectively.
Furthermore, for a $\PLTL$ formula $\psi$ and a   variable $\vartrace$,
   $\Rel{\psi}{\vartrace}$ is the formula obtained from $\psi$ by replacing each  proposition $p$ with its $\vartrace$-version
$\Rel{p}{\vartrace}$.
 A \emph{sentence} is a formula
where each relativized proposition $\Rel{p}{\vartrace}$ occurs in the scope of trace quantifier
$\exists \vartrace$ or $\forall \vartrace$, and each temporal modality occurs in the scope of a trace quantifier.\vspace{0.2cm}

 \noindent \textbf{The known logics  $\SHLTL$,  $\CHLTL$, and simple $\SHLTL$.} A formula $\varphi$ of  $\GHLTL$ is in \emph{prenex} form
  if it is of the form $\Qf_1\vartrace_1.\,\ldots \Qf_n\vartrace_n.\,\psi$ where $\psi$ is quantifier-free and  $\Qf_i\in \{\exists,\forall\}$
  for all $i\in [1,n]$. The logics $\SHLTL$ and $\CHLTL$ introduced in~\cite{BozzelliPS21}  correspond to  syntactical fragments of
  $\GHLTL$ where the formulas are in prenex form and past temporal modalities are not used. Moreover, in $\SHLTL$, the context modalities are not allowed, while in $\CHLTL$, the subscript $\Gamma$ of every temporal modality must be the empty set. Note that in $\HLTL$~\cite{ClarksonFKMRS14},  both the context modalities and the temporal modalities where the subscript $\Gamma$ is not empty are disallowed. Finally, we recall the \emph{simple} fragment of $\SHLTL$~\cite{BozzelliPS21}, which
  is more expressive than $\HLTL$ and is parameterized by a finite set $\Gamma$ of $\LTL$ formulas. The  \emph{quantifier-free} formulas  of simple $\SHLTL$ for the parameter $\Gamma$ are defined
as  Boolean combinations of formulas of  the form $\Rel{\psi}{\vartrace}$, where $\psi$ is an $\LTL$ formula, and formulas $\psi_\Gamma$  defined by the following grammar:\vspace{0.1cm}

$
    \psi_\Gamma :=  \top \mid   \Rel{p}{\vartrace} \mid \neg \psi_\Gamma \mid \psi_\Gamma \vee \psi_\Gamma  \mid   \Next_{\Gamma} \psi_\Gamma    \mid \psi_\Gamma \Until_{\Gamma} \psi_\Gamma
 $\vspace{0.2cm}

\noindent  \textbf{Semantics of $\GHLTL$.} Given a formula $\varphi$, a set of traces $\ActTraces$,
a trace assignment $\TracesMap$ over $\ActTraces$ such that $\Dom(\TracesMap)$ contains all the trace variables occurring free in $\varphi$, and a context $\Ctx\subseteq \Var$,
the satisfaction relation $(\TracesMap, \Ctx) \models_{\ActTraces} \varphi$, meaning that the assignment $\TracesMap$ over $\ActTraces$ satisfies
$\varphi$ under the context $\Ctx$,
  is inductively defined as follows (we again omit the semantics of the Boolean connectives):
\[
 \hspace{-0.2cm}\begin{array}{ll}
     (\TracesMap, \Ctx) \models_{\ActTraces} \Rel{p}{\vartrace} & \Leftrightarrow   \Traces(\vartrace) = (\trace,i) \text{ and }
    p\in \trace(i)\\
    (\TracesMap, \Ctx) \models_{\ActTraces} \exists \vartrace.\, \varphi & \Leftrightarrow  \text{ for some } \trace \in\ActTraces,\,
    (\TracesMap[\vartrace \mapsto (\trace, 0)], \Ctx) \models_{\ActTraces}\varphi\\
      (\TracesMap, \Ctx) \models_{\ActTraces} \ctx{\Ctx'} \varphi & \Leftrightarrow   (\TracesMap, \Ctx') \models_{\ActTraces} \varphi\\
    (\TracesMap, \Ctx) \models_{\ActTraces} \Next_\Gamma \varphi &\Leftrightarrow  (\SUCC_{(\Gamma,\Ctx)}(\TracesMap),\Ctx) \models \varphi \\
    (\TracesMap, \Ctx) \models_{\ActTraces} \Yesterday_\Gamma \varphi &\Leftrightarrow
    \PRED_{(\Gamma,\Ctx)}(\TracesMap) \neq \Undef \text{ and } (\PRED_{(\Gamma,\Ctx)}(\TracesMap), \Ctx) \models_{\ActTraces} \varphi\\
      (\TracesMap, \Ctx) \models_{\ActTraces} \varphi_1 \unt_\Gamma \varphi_2 &\Leftrightarrow
      \text{for some $i\geq 0$: }  (\SUCC^i_{(\Gamma,\Ctx)}(\TracesMap),\Ctx) \models_{\ActTraces} \varphi_2 \text{ and }\\
    & \phantom{\Leftrightarrow} (\SUCC^j_{(\Gamma,\Ctx)}(\TracesMap), \Ctx) \models_{\ActTraces} \varphi_1 \text{ for all } 0 \le j < i,\,  \\
    (\TracesMap, \Ctx) \models_{\ActTraces} \varphi_1 \since_\Gamma \varphi_2 &\Leftrightarrow
      \text{for some  $i\geq 0$ such that  $\PRED^{\,i}_{(\Gamma,\Ctx)}(\TracesMap) \neq \Undef$: }
      (\PRED^{\,i}_{(\Gamma,\Ctx)}(\Traces),\Ctx) \models_{\ActTraces} \varphi_2   \\
      & \phantom{\Leftrightarrow} \text{and } (\PRED^{\,j}_{(\Gamma,\Ctx)}(\TracesMap), \Ctx) \models_{\ActTraces} \varphi_1 \text{ for all } 0 \le j < i
      \end{array}
\]
\noindent For a sentence $\varphi$ and a set of traces $\ActTraces$, $\ActTraces$ is a \emph{model} of $\varphi$, written $\ActTraces\models \varphi$, if
$(\TracesMap_\emptyset, \Var) \models_{\ActTraces} \varphi$ where $\TracesMap_\emptyset$ is the trace assignment with empty domain.\vspace{0.2cm}

\noindent \textbf{Fair model checking and standard model checking.}
For a fragment $\Frag$ of $\GHLTL$, 
the \emph{fair model checking problem} for $\Frag$ consists on deciding,
given a fair finite Kripke structure $(\KS,\FStates)$ and a sentence
$\varphi$ of $\Frag$, whether $\Lang(\KS,\FStates)\models \varphi$.
The previous problem is simply called \emph{model checking problem}
whenever $\FStates$ coincides with the set of $\KS$-states.
We consider fair model checking just for technical convenience.
For the decidable fragment of $\GHLTL$ introduced in
Section~\ref{subSec:MeaningfulNovelFragments}, we will obtain the same
complexity bounds for both fair model checking and standard model
checking (see Section~\ref{sec:modelCheckingFragment}).

\subsection{The Simple Fragment of $\GHLTL$}\label{subSec:MeaningfulNovelFragments}

We introduce now a fragment of $\GHLTL$, that we call \emph{simple $\GHLTL$}, which syntactically subsumes the simple fragment of $\SHLTL$~\cite{BozzelliPS21}.

In order to define the syntax of simple $\GHLTL$, we first consider some shorthands, obtained by a restricted use of the context modalities. The \emph{pointed existential quantifier}  $\exists^{\Pt}\vartrace$ and the \emph{pointed universal
   quantifier} $\forall^{\Pt}\vartrace$ are defined
 as follows: $\exists^{\Pt}\vartrace.\,\varphi \DefinedAs \exists\vartrace.\, \ctx{\vartrace}  \future \ctx{\Var} \varphi$ and $\forall^{\Pt}\vartrace.\,\varphi::= \neg \exists^{\Pt}\vartrace.\,\neg\varphi$.
   Thus the pointed quantifiers quantify on arbitrary pointed traces over the given set of traces and set the global context for the given operand.
   Formally, $(\TracesMap, \Ctx) \models_{\ActTraces} \exists^{\Pt} \vartrace.\, \varphi$ if   for some pointed trace $(\trace,i)$ with  $\trace \in\ActTraces$,
    $(\TracesMap[\vartrace \mapsto (\trace, i)], \Var) \models_{\ActTraces}\varphi$.

   \noindent For example, the sentence $\exists \vartrace_1.\,\exists^{\Pt} \vartrace_2.\, \bigl( \bigwedge_{p\in \AP}\Always(p[x_1] \leftrightarrow p[x_2])\wedge \ctx{\vartrace_2}\yesterday\top\bigr)$ asserts that  there are two traces $\trace_1$ and $\trace_2$ in the given model s.t.~some \emph{proper} suffix of
    $\trace_2$ coincides with $\trace_1$.

Simple $\GHLTL$  is parameterized by a finite set $\Gamma$ of $\PLTL$ formulas. The set of formulas $\varphi_\Gamma$ in the $\Gamma$-fragment is defined
as follows:
\[
    \varphi_\Gamma :=   \top \mid  \ctx{\vartrace}\Rel{\psi}{\vartrace} \mid \neg \varphi_\Gamma \mid \varphi_\Gamma \vee \varphi_\Gamma   \mid
    \exists^{\Pt} \vartrace.\,\varphi_\Gamma    \mid
     \Next_{\Gamma} \varphi_\Gamma \mid \Yesterday_{\Gamma} \varphi_\Gamma \mid
     \varphi_\Gamma \Until_{\Gamma} \varphi_\Gamma \mid
      \varphi_\Gamma \Since_{\Gamma} \varphi_\Gamma
 \]
\noindent where $\psi$ is a $\PLTL$ formula.  Note that  $\exists \vartrace.\,\varphi$ can be expressed as
$\exists^{\Pt} \vartrace.\,(\varphi\wedge \ctx{\vartrace}\neg\Yesterday\top)$.
$\SGHLTL{\Gamma}$ is the class of formulas obtained with the syntax above for a given $\Gamma$.
Simple $\GHLTL$ is the union $\SGHLTL{\Gamma}$ for all $\Gamma$.
We say that a formula $\varphi$ of simple $\GHLTL$ is \emph{singleton-free} if for each subformula  $\ctx{x}\Rel{\psi}{\vartrace}$ of $\varphi$, $\psi$ is an atomic proposition.
Evidently, for an atomic proposition $p$,
$\ctx{x}\Rel{p}{\vartrace}$ is equivalent to $\Rel{p}{\vartrace}$.

In Section~\ref{sec:modelCheckingFragment}, we will show that (fair)
model checking of simple $\GHLTL$ is decidable.
Simple $\GHLTL$ can be seen as a very large fragment of $\GHLTL$ with
a decidable model checking problem which subsumes the simple fragment
of $\SHLTL$, is closed under Boolean connectives, and allows an
unrestricted nesting of temporal modalities.
We conjecture (without proof) that this is the largest such sub-class
of $\GHLTL$ because:
\begin{enumerate}
\item Model checking of $\SHLTL$ is already
  undecidable~\cite{BozzelliPS21} for sentences whose relativized
  temporal modalities exploit two distinct sets of $\LTL$ formulas
  and, in particular, for two-variable quantifier alternation-free
  sentences of the form
  $\exists \vartrace_1.\,\exists \vartrace_2.\,(\varphi\wedge
  \Always_\Gamma \psi)$, where $\psi$ is a propositional formula,
  $\Gamma$ is a nonempty set of propositions, and $\varphi$ is a
  quantifier-free formula which use only the temporal modalities
  $\Future_\emptyset$ and $\Always_\emptyset$.
\item Model checking of $\CHLTL$ is undecidable~\cite{BozzelliPS22}
  even for the fragment consisting of two-variable quantifier
  alternation-free sentences of the form
  $\exists \vartrace_1.\exists \vartrace_2.\, \psi_0\wedge
  \ctx{\vartrace_2}\Future \ctx{\{\vartrace_1,\vartrace_2\}}\psi$,
  where $\psi_0$ and $\psi$ are quantifier-free $\HLTL$ formulas (note
  that $\psi_0$ is evaluated in the global context
  $\ctx{\{\vartrace_1,\vartrace_2\}}$).
\end{enumerate}

The second undecidability result suggests to consider the extension of
simple $\GHLTL$ where singleton-context subformulas of the form
$\ctx{\vartrace}\Rel{\psi}{\vartrace}$ are replaced with
\emph{quantifier-free} $\GHLTL$ formulas with multiple variables of
the form $\ctx{\vartrace}\xi$, where $\xi$ only uses singleton
contexts $\ctx{\vartraceAux}$ and temporal modalities with subscript
$\emptyset$. However, we can show that the resulting logic is not more
expressive than simple $\GHLTL$: a sentence in the considered
extension can be translated into an equivalent simple $\GHLTL$
sentence though with a non-elementary blowup (for details, see~Appendix~\ref{app:extensionSimpleGHLTL}).

\subsection{Examples of Specifications in Simple
  $\GHLTL$}\label{sec:SHLTLspecifications}

We consider some relevant properties which can be expressed in simple
$\GHLTL$. 
Simple $\GHLTL$ subsumes the simple fragment of $\SHLTL$, and this
fragment (as shown in~\cite{BozzelliPS21}) can express relevant
information-flow security properties for asynchronous frameworks such
as distributed systems or cryptographic protocols.
An example is the asynchronous variant of the \emph{noninterference}
property, as defined by Goguen and Meseguer~\cite{goguen1982security},
which asserts that the observations of low users (users accessing
only to public information) do not change when all inputs of high
users (users accessing secret information) are removed.

\paragraph*{Observational Determinism.} 
An important information-flow property is observational determinism,
which states that traces which have the same initial low inputs are
indistinguishable to a low user.
In an asynchronous setting, a user cannot infer that a transition
occurred if consecutive observations remain unchanged.
Thus, for instance, observational determinism with equivalence of
traces up to stuttering (as formulated in~\cite{ZdancewicM03}) can be
captured by the following simple $\SHLTL$ sentence (where $LI$ is the
set of propositions describing inputs of low users and $LO$ is set of
propositions describing outputs of low users):
\[
  \forall \vartrace.\,\forall \vartraceAux.\,   \bigwedge_{p\in LI}  (\Rel{p}{\vartrace} \leftrightarrow \Rel{p}{\vartraceAux}) \rightarrow \Always_{LO} \bigwedge_{p\in LO} (\Rel{p}{\vartrace} \leftrightarrow \Rel{p}{\vartraceAux})
\]

\paragraph*{After-initialization Properties.} 
Simple $\GHLTL$ also allows to specify complex combinations of
asynchronous and synchronous constraints.
As an example, we consider the property~\cite{GutsfeldOO21} that for
an $\HLTL$ sentence
$\Qf_1 \vartrace_1.\,\ldots \Qf_n\vartrace_n.\,
\psi(\vartrace_1,\ldots,\vartrace_n)$, the quantifier-free formula
$\psi(\vartrace_1,\ldots,\vartrace_n)$ holds along the traces bound by
variables $\vartrace_1,\ldots,\vartrace_n$ after an initialization
phase.
Note that this phase can take a different (and unbounded) number of
steps on each trace.
Let $\IN$ be a proposition characterizing the initialization phase.
The formula
$\PI(\vartrace)\DefinedAs \ctx{\vartrace}
(\neg\Rel{\IN}{\vartrace}\wedge(\neg\Yesterday\top \vee \Yesterday
\Rel{\IN}{\vartrace}))$
is a simple $\GHLTL$ formula that asserts that for the pointed trace
$(\trace,i)$ assigned to variable $\vartrace$, the position $i$ is the
first position of $\trace$ following the initialization phase.
In other words, $i$ is the first position at which $\neg\Rel{\IN}$
holds.
Then, the previous requirement can be expressed in simple $\GHLTL$ as
follows:
\[
  \Qf_1\vartrace_1.\ldots \Qf_n\vartrace_n.\big(\PI(x_1) \circ_1 \ldots \PI(x_n)\circ_n \psi(x_1,\ldots x_n)
\]
where $\circ_i$ is $\wedge$ if $\Qf_i=\exists$ and $\circ_i$ is
$\Impl$ if $\Qf_i=\forall$.

\paragraph*{Global Promptness.} 
As another meaningful example, we consider global promptness (in the
style of Prompt $\LTL$~\cite{KupfermanPV09}), where one need to check
that there is a uniform bound on the response time in all the traces
of the system, that is, ``\emph{there is $k$ such that for every
  trace, each request $q$ is followed by a response $p$ within $k$
  steps}''.
Global promptness is expressible in simple $\GHLTL$ as follows:
\[
\exists^{\Pt}\vartrace.\,\bigl(\Rel{q}{\vartrace}\,\wedge\,  \forall^{\Pt}\vartraceAux.\,  (\Rel{q}{\vartraceAux} \rightarrow  (\neg\Rel{p}{\vartrace}\,\Until\, \Rel{p}{\vartraceAux})) \bigr)
\]
The previous sentence asserts that there is request ($x$ in the
formula) that has the longest response.
Note that $y$ is quantified universally (so it can be instantiated to
the same trace as $x$), and that the use of until in
$(\neg\Rel{p}{\vartrace}\,\Until\, \Rel{p}{\vartraceAux})) \bigr)$
implies that the response $p[y]$ eventually happens.
Hence, all requests, including receive a response.
Now, the pointed trace $(\trace_{\vartrace},i)$ assigned to $x$ is
such that $\trace_{\vartrace}(i)$ is a request ($q[x]$) and for every
pointed trace $(\trace_{\vartraceAux},j)$ in the model such that
$\trace_{\vartraceAux}(j)$ is a request ($q[y]$), it holds that (i)
the request $\trace_{\vartraceAux}(j)$ is followed by a response
$\trace_{\vartraceAux}(j+k)$ for some $k\geq 0$, and (ii) no response
occurs in $\trace_{\vartrace}$ in the interval $[i,i+k)$.
Therefore, the response time $h$ for $x$ is the smallest $h$ such that
$\trace_{\vartrace}(i+h)$ is a response is a global bound on the
response time.

\paragraph*{Diagnosability.}
We now show that simple $\GHLTL$ is also able to express
\emph{diagnosability} of
systems~\cite{SampathSLST95,BozzanoCGT15,BittnerBCGTV22} in a general
asynchronous setting.
In the diagnosis process, faults of a critical system (referred as the
plant) are handled by a dedicated module (the \emph{diagnoser}) which
runs in parallel with the plant.
The diagnoser analyzes the observable information from the
plant---made available by predefined sensors---and triggers suitable
alarms in correspondence to (typically unobservable) conditions of
interest, called faults.
An alarm condition specifies the relation (delay) between a given
diagnosis condition and the raising of an alarm.
A plant $\Plant$ is \emph{diagnosable} with respect to a given alarm
condition $\alpha$, if there is a diagnoser $\D$ which satisfies
$\alpha$ when $\D$ runs in parallel with $\Plant$.

The given set of propositions $\AP$ is partitioned into a set of
observable propositions $\Obs$ and a set of unobservable propositions
$\Int$.
Two finite traces $\trace$ and $\trace'$ are \emph{observationally
equivalent} iff the projections of
$\stfr_{\Obs}(\trace\cdot P^{\omega})$ and
$\stfr_{\Obs}(\trace'\cdot (P')^{\omega})$ over $\Obs$ coincide, where
$P$ is the last symbol of $\trace$ and similarly $P'$ is the last
symbol of $\trace'$.
Given a pointed trace $(\trace,i)$, $i$ is an \emph{observation point}
of $\trace$ if either $i=0$, or $i>0$ and
$\trace(i-1)\cap \Obs \neq \trace(i)\cap \Obs$.
Then a plant $\Plant$ can be modeled as a finite Kripke structure
$\mktuple{\States,\States_0,\Trans,\Lab}$, where $\Trans$ is
partitioned into internal transitions $(\state,\state')$ where
$\Lab(\state)\cap \Obs = \Lab(\state')\cap \Obs$ and observable
transitions where $\Lab(\state)\cap \Obs \neq \Lab(\state')\cap \Obs$.
A diagnoser $\D$ is modelled as a finite deterministic Kripke
structure over $\AP' \supseteq \Obs$ (with $\AP'\cap \Int=\emptyset$).
In the behavioural composition of the plant $\Plant$ with $\D$, the
diagnoser only reacts to the observable transitions of the plant, that
is, every transition of the diagnoser is associated with an observable
transition of the plant.
Simple $\GHLTL$ can express diagnosability with \emph{finite delay},
\emph{bounded delay}, or \emph{exact
  delay} as defined in~\cite{BozzanoCGT15,BittnerBCGTV22}.
Here, we focus for simplicity on finite delay diagnosability.
Consider a diagnosis condition specified by a $\PLTL$ formula $\beta$.
A plant $\Plant$ is \emph{finite delay diagnosable} with respect to
$\beta$ whenever for every pointed trace $(\trace,i)$ of $\Plant$ such
that $(\trace,i)\models\beta$, there exists an observation point
$k\geq i$ of $\trace$ such that for all pointed traces $(\trace',k')$
of $\Plant$ so that $k'$ is an observation point of $\trace'$ and
$\trace[0,k]$ and $\trace'[0,k']$ are observationally equivalent, it
holds that $(\trace',i')\models \beta$ for some $i'\leq k'$.
Finite delay diagnosability w.r.t.~$\beta$ can be expressed in simple
$\GHLTL$ as follows:

\begin{align*}
&\forall^\Pt\vartrace.\,\Bigl(\ctx{x}\Rel{\beta}{\vartrace} \rightarrow \Future_{\Obs}\bigl(\ObsPt(\vartrace)\wedge \forall^\Pt\vartraceAux.\,\{(\ObsPt(\vartraceAux)\wedge
  \theta_{\Obs}(\vartrace,\vartraceAux)) \rightarrow \ctx{\vartraceAux}\Once\Rel{\beta}{\vartraceAux}\}\bigr)\Bigr)\\
  \intertext{where}
 \theta_{\Obs}(\vartrace,\vartraceAux)&\DefinedAs \displaystyle{\bigwedge_{p\in \Obs}}\Historically_{\Obs}(\Rel{p}{\vartrace} \leftrightarrow \Rel{p}{\vartraceAux})\,\wedge\,
  \Once_{\Obs}(\ctx{\vartrace}\neg \Yesterday\top \wedge \ctx{\vartraceAux}\neg \Yesterday\top)\\
  \ObsPt(\vartrace)&\DefinedAs \ctx{\vartrace} (\neg\Yesterday\top\wedge
\bigvee_{p\in\Obs}(\Rel{p}{\vartrace}\leftrightarrow \neg \Yesterday
\Rel{p}{\vartrace})
\end{align*}

Essentially $\ObsPt(\vartrace)$ determines the observation points and
$\theta_{\Obs}$ captures that both traces have the same history of
observations.
The main formula establishes that if $x$ detects a failure $\beta$
then there is future observation point in $x$ and for all other traces
that are observationally equivalent to $x$ have also detected $\beta$.

\subsection{Expressiveness Issues}
\label{subSec:Expressiveness}
In this section, we present some results and conjectures about the
expressiveness comparison among $\GHLTL$ (which subsumes $\SHLTL$ and
$\CHLTL$), its simple fragment and the logic $\SHLTL$.
We also consider the logics for linear-time hyperproperties based on
the equal-level predicate whose most powerful representative is
$\MSOE$.
Recall that the first-order fragment $\FOE$ of $\MSOE$ is already
strictly more expressive than $\HLTL$~\cite{Finkbeiner017} and, unlike
$\MSOE$, its model-checking problem is decidable~\cite{CoenenFHH19}.
Moreover, we show that $\GHLTL$ and its simple fragment represent a unifying
framework in the linear-time setting for specifying both
hyperproperties and the knowledge modalities of epistemic temporal
logics under both the synchronous and asynchronous perfect recall
semantics.

Our expressiveness results about linear-time hyper logics can be
summarized as follows.

\begin{theorem}\label{theorem:expressivenessResults}
  The following hold:
\begin{itemize}
\item $\GHLTL$ is more expressive than $\SHLTL$, simple $\GHLTL$, and
  $\FOE$.
\item Simple $\GHLTL$ is more expressive than simple $\SHLTL$.
\item Simple $\GHLTL$ and $\SHLTL$ are expressively incomparable.
\item Simple $\GHLTL$ and $\MSOE$ are expressively incomparable.
\end{itemize}
\end{theorem}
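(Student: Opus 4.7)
My plan is to split the theorem into its containment and separation parts and handle each by the appropriate technique. The containments are largely syntactic: simple $\SHLTL$ is by definition a syntactic subclass of simple $\GHLTL$, and both $\SHLTL$ and $\CHLTL$ are obtained from $\GHLTL$ by restricting to prenex form, forbidding past modalities, and (for $\SHLTL$) removing context modalities or (for $\CHLTL$) forcing the subscripts $\Gamma$ to be empty. Hence $\SHLTL\subseteq\GHLTL$, simple $\SHLTL\subseteq$ simple $\GHLTL$, and simple $\GHLTL\subseteq\GHLTL$ are immediate. For $\FOE\subseteq\GHLTL$ I would translate an $\FOE$ formula compositionally: each first-order position variable becomes a pointed-trace variable introduced by $\exists^{\Pt}\vartrace$, the order $x<y$ is captured via a context modality that walks both variables from a common initial point in lockstep until reaching the chosen timestamps, and the equal-level predicate is encoded by requiring the two pointed variables to be at the same timestamp (a condition enforced by a lockstep $\ctx{\{\vartrace,\vartraceAux\}}$ walk from the beginning of both traces).

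For the strict containments and the incomparability results I would exhibit distinguishing hyperproperties. For $\GHLTL\not\subseteq\SHLTL$, take a $\CHLTL$ sentence proved in~\cite{BozzelliPS22} to lie outside $\SHLTL$ (for instance, the hyper-bounded-response-time property motivating $\CHLTL$), which already sits in $\GHLTL$. For $\GHLTL\not\subseteq$ simple $\GHLTL$ I exploit decidability: by Section~\ref{sec:modelCheckingFragment} simple $\GHLTL$ model checking is decidable, whereas $\SHLTL$ model checking is undecidable~\cite{BozzelliPS21}, and $\SHLTL\subseteq\GHLTL$, so some $\GHLTL$ sentence has no equivalent in simple $\GHLTL$ over finite Kripke structures. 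For $\GHLTL\not\subseteq\FOE$, I would use an asynchronous stuttering property such as observational determinism up to stuttering from Section~\ref{sec:SHLTLspecifications}, and argue that $\FOE$, being synchronous (its equal-level only relates identical timestamps), cannot ``collapse'' redundant observationally equivalent positions. The strictness simple $\SHLTL\subsetneq$ simple $\GHLTL$ would be witnessed by the proper-suffix sentence of Section~\ref{subSec:MeaningfulNovelFragments} or by global promptness of Section~\ref{sec:SHLTLspecifications}, both of which use pointed quantifiers and/or past modalities that are unavailable in simple $\SHLTL$.

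For the incomparability of simple $\GHLTL$ and $\SHLTL$ the easy direction again uses global promptness or the proper-suffix sentence, which we just showed lie outside $\SHLTL$. For the reverse direction I would take a $\SHLTL$ sentence that genuinely mixes two distinct stutter sets $\Gamma_1\neq\Gamma_2$ inside a single quantifier-free body (the kind of construction driving $\SHLTL$ undecidability in~\cite{BozzelliPS21}) and show that such mixing cannot be reproduced within the simple-$\GHLTL$ grammar, whose formulas are by definition parameterized by a single $\Gamma$. For simple $\GHLTL$ versus $\MSOE$, one direction again uses observational determinism up to stuttering to separate simple $\GHLTL$ from the synchronous $\MSOE$; the converse direction would exploit the second-order power of $\MSOE$ (e.g.\ a set quantifier encoding a common-knowledge-style property over an unbounded family of traces) that cannot be captured by the first-order, fixed-arity pointed quantification of simple $\GHLTL$.

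The main obstacle will be the negative directions, i.e.\ showing that a concrete hyperproperty is \emph{not} expressible in a given target logic. For this I expect to rely on indistinguishability and Ehrenfeucht--Fra\"iss\'e style arguments tailored to each logic, adapting the lower-bound techniques developed for $\HLTL$ and $\SHLTL$ in~\cite{BozzelliPS22} to the pointed-quantifier setting of simple $\GHLTL$, and carefully exploiting the fact that a simple $\GHLTL$ sentence is parameterized by a single finite set $\Gamma$. The separation between $\SHLTL$ and simple $\GHLTL$ in both directions will be the most delicate, since neither logic is a syntactic fragment of the other and the decidability shortcut used for $\GHLTL$ versus simple $\GHLTL$ does not apply.
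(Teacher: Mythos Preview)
Your containment arguments and the $\FOE\to\GHLTL$ translation are fine and match the paper. The gap is in the separation directions: you plan Ehrenfeucht--Fra\"iss\'e arguments and ad hoc semantic reasoning (``$\FOE$ is synchronous so cannot collapse stuttering'', ``mixing two stutter sets cannot be reproduced in simple $\GHLTL$'', ``second-order power of $\MSOE$''), and you flag these as the most delicate part. The paper avoids all of this via two meta-level observations that you partially use but do not exploit fully.

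First, for showing that simple $\GHLTL$ is not subsumed by $\SHLTL$ or by $\MSOE$: it is known from~\cite{BozzelliPS22} that over singleton sets of traces both $\SHLTL$ and $\MSOE$ define only \emph{regular} trace properties. The suffix property (there exists $k>0$ with $\trace^k=\trace$) is non-regular and is expressible in $\SGHLTL{\emptyset}$ via the pointed-quantifier sentence you already identified. That single observation yields simple $\GHLTL\not\subseteq\SHLTL$, simple $\GHLTL\not\subseteq\MSOE$, and (since $\FOE\subseteq\MSOE$) also $\GHLTL\not\subseteq\FOE$. No EF games, no synchrony-versus-asynchrony argument.

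Second, for the reverse directions $\SHLTL\not\subseteq$ simple $\GHLTL$ and $\MSOE\not\subseteq$ simple $\GHLTL$: you already use the decidability shortcut for $\GHLTL$ versus simple $\GHLTL$, but you do not notice that the \emph{same} shortcut applies here. Model checking of $\SHLTL$ and of $\MSOE$ are undecidable~\cite{BozzelliPS21,CoenenFHH19}, while Section~\ref{sec:modelCheckingFragment} proves decidability for simple $\GHLTL$; hence neither logic can be subsumed by simple $\GHLTL$. Your plan to argue directly that a two-$\Gamma$ $\SHLTL$ sentence or a second-order $\MSOE$ sentence has no simple-$\GHLTL$ equivalent is unnecessary, and in fact would be quite hard to carry out rigorously. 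Once you use these two observations, every separation in the theorem falls out in a few lines.
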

\begin{proof}
  We first show that there are hyperproperties expressible in simple
  $\GHLTL$ but not in $\SHLTL$ and in $\MSOE$.
  Given a sentence $\varphi$, the \emph{trace property denoted by
    $\varphi$} is the set of traces $\trace$ such that the singleton
  set of traces $\{\trace\}$ satisfies $\varphi$.
  It is known that $\SHLTL$ and $\MSOE$ capture only \emph{regular}
  trace properties~\cite{BozzelliPS22}.
  In contrast simple $\GHLTL$ can express powerful non-regular trace
  properties.
  For example, consider the so called \emph{suffix property} over
  $\AP=\{p\}$: a trace $\trace$ satisfies the suffix property if there
  exists a proper suffix $\trace^k$ of $\trace$ for some $k>0$ such
  that $\trace^k=\trace$.
  This non-regular trace property can be
  expressed in  $\SGHLTL{\emptyset}$ as follows:
\[
    \forall \vartrace_1.\,\forall \vartrace_2.\, \bigwedge_{p\in \AP}\Always(\Rel{p}{\vartrace_1} \leftrightarrow \Rel{p}{\vartrace_2})\,\wedge\,
    \forall \vartrace_1.\,\exists^{\Pt} \vartrace_2.\, \bigl( \bigwedge_{p\in \AP}\Always(p[x_1] \leftrightarrow p[x_2])\wedge \ctx{\vartrace_2}\yesterday\top\bigr)
\]
The first conjunct asserts that each model is a singleton, and the
second conjunct requires that for the unique trace $\trace$ in a
model, there is $k>0$ such that $\trace(i) = \trace(i +k)$ for all
$i\geq 0$.

Next, we observe that $\FOE$ can be easily translated into $\GHLTL$,
since the pointer quantifiers of $\GHLTL$ correspond to the
quantifiers of $\FOE$.
Moreover, the predicate $\vartrace\leq \vartrace'$ of $\FOE$,
expressing that for the pointed traces $(\trace,i)$ and $(\trace',i')$
bound to $\vartrace$ and $\vartrace'$, $\trace=\trace'$ and
$i\leq i'$, can be easily captured in $\GHLTL$.
This is also the case for the equal-level predicate
$\Eq(\vartrace,\vartrace')$, which can be expressed as
$\ctx{\{\vartrace,\vartrace'\}}\Once(\ctx{\vartrace}\neg\Yesterday\top
\wedge \ctx{\vartrace'}\neg\Yesterday\top)$.

  In Section~\ref{sec:modelCheckingFragment} we show that model checking
  of simple $\GHLTL$ is decidable.
  Thus, since model checking of both $\SHLTL$ and $\MSOE$ are
  undecidable~\cite{BozzelliPS21,CoenenFHH19} and $\GHLTL$ subsumes
  $\SHLTL$, by the previous argumentation, the theorem follows.
\end{proof}

It remains an open question whether $\FOE$ is subsumed by simple
$\GHLTL$.
We conjecture that neither $\CHLTL$ nor the fix-point calculus
$\HU$~\cite{GutsfeldOO21} (which captures both $\CHLTL$ and
$\SHLTL$~\cite{BozzelliPS22}) subsume simple $\GHLTL$.
The motivation for our conjecture is that $\HU$ sentences consist of a
prefix of quantifiers followed by a quantifier-free formula where
quantifiers range over \emph{initial} pointed traces
$(\trace,0)$.
Thus, unlike simple $\GHLTL$, $\HU$ cannot express requirements which
relate at some point in time an unbounded number of traces.
Diagnosability  (see Subsection~\ref{sec:SHLTLspecifications}) falls
in this class of requirements.
It is known that the following property, which can be easily expressed in simple $\GHLTL$, is not definable in
$\HLTL$~\cite{BozzelliMP15}: for some $i>0$, every trace in the given set
of traces does not satisfy proposition $p$ at position $i$.
We conjecture that similarly to $\HLTL$, such a property (and
diagnosability as well) cannot be expressed in $\HU$.
\vspace{0.2cm}

\noindent
\textbf{Epistemic Temporal Logic $\KLTL$ and its relation with
  $\GHLTL$.}
The logic $\KLTL$~\cite{HalpernV86} is a well-known extension of
$\LTL$ obtained by adding the unary knowledge modalities $\know_a$,
where $a$ ranges over a finite set $\Ag$ of agents, interpreted under
the synchronous or asynchronous (perfect recall) semantics.
The semantics is given with respect to an observation map
$\Obs:\Ag \mapsto 2^{\AP}$ that assigns to each agent $a$ the set of
propositions which agent $a$ can observe.
Given two finite traces $\trace$ and $\trace'$ and $a\in\Ag$, $\trace$
and $\trace'$ are \emph{synchronously equivalent for agent $a$},
written $\trace\sim_a^{sy}\trace'$, if the projections of $\trace$ and
$\trace'$ over $\Obs(a)$ coincide.
The finite traces $\trace$ and $\trace'$ are \emph{asynchronously
  equivalent for agent $a$}, written $\trace\sim_a^{as}\trace'$, if
the projections of $\stfr_{\Obs(a)}(\trace\cdot P^{\omega})$ and
$\stfr_{\Obs(a)}(\trace'\cdot (P')^{\omega})$ over $\Obs(a)$ coincide,
where $P$  is the last symbol of $\trace$ and $P'$ is the last symbol of
$\trace'$.
For a set of traces $\Lang$ and a pointed trace $(\trace,i)$ over
$\Lang$, the semantics of the knowledge modalities is as follows,
where $\sim_a$ is $\sim_a^{sy}$ under the synchronous semantics, and
$\sim_a^{as}$ otherwise:
$ (\trace,i) \models_{\Lang,\Obs} \know_a\, \varphi \ \Leftrightarrow
\ \text{for all pointed traces } (\trace',i') \text{ on } \Lang \text{
  such that } \trace[0,i] \sim_a \trace'[0,i'],\, (\trace',i')
\models_{\Lang,\Obs} \varphi $.

We say that \emph{$\Lang$ satisfies $\varphi$ w.r.t.~the observation
  map $\Obs$}, written $\Lang\models_\Obs\varphi$, if for all traces
$\trace\in \Lang$, $(\trace,0) \models_{\Lang,\Obs} \varphi$.
The logic $\KLTL$ can be easily embedded into $\GHLTL$.
In particular, the following holds (for details,
see~Appendix~\ref{app:EmbeddingKnowledge})

\begin{restatable}{theorem}{theoEmbeddingKnowledge}
  \label{theorem:EmbeddingKnowledge}
  Given an observation map $\Obs$ and a $\KLTL$ formula $\psi$ over
  $\AP$, one can construct in linear time a $\SGHLTL{\emptyset}$
  sentence $\varphi_\emptyset$ and a $\GHLTL$ sentence $\varphi$ such
  that $\varphi_\emptyset$ is equivalent to $\psi$ w.r.t.~$\Obs$ under
  the synchronous semantics and $\varphi$ is equivalent to $\psi$
  w.r.t.~$\Obs$ under the asynchronous semantics.
  Moreover, $\varphi$ is a simple $\GHLTL$ sentence if $\psi$ is in
  the single-agent fragment of $\KLTL$.
\end{restatable}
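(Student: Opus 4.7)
The plan is to construct both GHLTL sentences by structural induction on the $\KLTL$ formula $\psi$, through a compositional translation $T(\psi,\vartrace)$ that maps $\psi$ and a distinguished ``anchor'' trace variable $\vartrace$ to a GHLTL formula, such that for every language $\Lang$, observation map $\Obs$, and pointed trace $(\trace,i)$ the translated formula holds at an assignment binding $\vartrace$ to $(\trace,i)$ (under the global context $\Var$) iff $(\trace,i)\models_{\Lang,\Obs}\psi$. The final sentence is $\forall\vartrace.\,T(\psi,\vartrace)$, matching the semantics of $\Lang\models_{\Obs}\psi$. The translation is homomorphic on Boolean connectives and sends $p$ to $\Rel{p}{\vartrace}$. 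The $\KLTL$ temporal modalities are translated either into the corresponding outer GHLTL modalities (with subscript $\emptyset$ in the synchronous case and $\Gamma_a$ in the asynchronous single-agent case) or, whenever the subformula at hand contains no knowledge modality, absorbed into a single PLTL-relativized atom $\ctx{\vartrace}\Rel{\psi'}{\vartrace}$.

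The knowledge modality is the interesting case, set as
\[
T(\know_a\chi,\vartrace)\;=\;\forall^{\Pt}\vartraceAux.\,\bigl(\Eq_a(\vartrace,\vartraceAux)\rightarrow T(\chi,\vartraceAux)\bigr),
\]
where $\Eq_a$ captures the equivalence $\sim_a$. Under the synchronous semantics I take
\[
\Eq_a^{sy}(\vartrace,\vartraceAux)\;:=\;\Once\bigl(\ctx{\vartrace}\neg\yesterday\top\wedge\ctx{\vartraceAux}\neg\yesterday\top\bigr)\;\wedge\;\Historically\bigwedge_{p\in\Obs(a)}(\Rel{p}{\vartrace}\leftrightarrow\Rel{p}{\vartraceAux})
\]
with all subscripts empty: the first conjunct forces both anchors to reach position $0$ after the same number of simultaneous backward steps, i.e., to share the current absolute level; the second enforces pointwise observation agreement along that common history. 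The asynchronous variant $\Eq_a^{as}$ is obtained by replacing every outer modality above with its $\Gamma_a$-relativized counterpart, where $\Gamma_a=\Obs(a)$, thus aligning $\vartrace$ and $\vartraceAux$ at the same $\Obs(a)$-stutter block and enforcing observation agreement at each such block, which is exactly the stutter-projection condition defining $\sim_a^{as}$.

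Correctness is proved by induction on $\psi$, relying on the invariant that after each pointed quantifier $\forall^{\Pt}\vartraceAux$ constrained by $\Eq_a$ the domain variables are aligned at the semantically relevant level (absolute position for the synchronous semantics, $\Gamma_a$-block for the asynchronous one); this invariant is preserved by the lockstep advancement of all domain variables in the global context $\Var$ that is inherent to simple $\GHLTL$ temporal modalities, and the inductive cases for Boolean and temporal connectives reduce to the semantics of PLTL-relativized atoms and of the corresponding GHLTL modalities. The synchronous translation uses only the empty subscript and so lies in $\SGHLTL{\emptyset}$; the single-agent asynchronous translation uses only $\Gamma_a=\Obs(a)$ throughout and so lies in $\SGHLTL{\Gamma_a}$, hence in simple $\GHLTL$; with several agents each knowledge modality contributes its own $\Gamma_{a_i}$ and the translation produces a full $\GHLTL$ sentence. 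All rewrites being local, the construction is linear in the size of $\psi$.

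The main obstacle is the asynchronous case, because $\KLTL$ temporal modalities are step-based whereas the outer temporal modalities of $\SGHLTL{\Gamma_a}$ are $\Gamma_a$-stutter-based, and simple $\GHLTL$ forbids arbitrary context modalities $\ctx{C}$, admitting only singleton contexts on PLTL-relativized atoms $\ctx{\vartrace}\Rel{\psi'}{\vartrace}$. To work within these constraints, my plan is to confine the use of stutter-relativized outer modalities to the $\Eq_a$ tests and to route all step-based temporal evaluation through the PLTL formulas that sit inside $\ctx{\vartrace}\Rel{\cdot}{\vartrace}$ atoms, exploiting the fact that between two nested knowledge modalities the relevant subformula of $\psi$ is pure LTL and can be embedded verbatim in PLTL. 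The delicate step of the correctness argument is then to verify that this interplay between the $\Gamma_a$-stutter outer modalities used by $\Eq_a^{as}$ and the step-based PLTL evaluations inside each anchor preserves the alignment invariant and hence yields the intended equivalence with $\sim_a^{as}$.
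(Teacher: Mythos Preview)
Your overall strategy---an inductive translation $T(\psi,\vartrace)$ anchored at a trace variable, with $\know_a\chi$ rendered as $\forall^{\Pt}\vartraceAux.\,(\Eq_a(\vartrace,\vartraceAux)\rightarrow T(\chi,\vartraceAux))$ and the final sentence $\forall\vartrace.\,T(\psi,\vartrace)$---coincides with the paper's, and your $\Eq_a^{sy}$ and $\Eq_a^{as}$ are essentially the paper's $\theta_{\Obs}(a,\vartrace,\vartraceAux)$.

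The gap is in your handling of the $\LTL$ temporal modalities in the asynchronous single-agent case. You correctly identify the tension (step-based $\KLTL$ modalities versus $\Gamma_a$-stutter outer modalities of $\SGHLTL{\Gamma_a}$), but neither of your two proposed fixes works. Translating an $\LTL$ $\Next$ or $\Until$ that wraps a knowledge modality as $\Next_{\Gamma_a}$ or $\Until_{\Gamma_a}$ is semantically wrong: for example, your translation of $\Next^{k}\know_a p$ would speak about knowledge after $k$ observation changes rather than after $k$ time steps. And the alternative of routing all step-based evaluation through atoms $\ctx{\vartrace}\Rel{\psi'}{\vartrace}$ cannot handle a formula such as $\Next\know_a p$, where the temporal operator sits immediately above the knowledge modality and there is no pure-$\LTL$ subformula to absorb; your claim that ``between two nested knowledge modalities the relevant subformula of $\psi$ is pure $\LTL$'' does not cover this shape.

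The paper does not attempt either detour: it makes the translation \emph{homomorphic} on the $\LTL$ modalities using the standard empty-subscript versions, i.e.\ $T(\Next\phi,\vartrace)=\Next\,T(\phi,\vartrace)$ and $T(\phi_1\Until\phi_2,\vartrace)=T(\phi_1,\vartrace)\,\Until\,T(\phi_2,\vartrace)$, which preserves the intended step semantics exactly. Only the equivalence test $\theta_{\Obs}(a,\vartrace,\vartraceAux)$ carries the $\Obs(a)$-relativized past modalities $\Historically_{\Obs(a)}$ and $\Once_{\Obs(a)}$. This yields a correct $\GHLTL$ sentence in all cases, and the single-agent claim then rests on the fact that only one non-trivial stutter parameter $\Obs(a)$ appears.
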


  \section{Decidability of Model Checking against Simple $\GHLTL$}\label{sec:modelCheckingFragment}

In this section, we show  that (fair) model checking against simple $\GHLTL$ is
decidable.
We first prove the result for the fragment $\SGHLTL{\emptyset}$ of
simple $\GHLTL$ by a linear-time reduction to satisfiability of
\emph{full} Quantified Propositional Temporal Logic ($\QPTL$, for
short)~\cite{SistlaVW87}, where the latter extends $\PLTL$ by
quantification over propositions.
Then, we show that (fair) model checking of simple $\GHLTL$ can be
reduced in time singly exponential in the size of the formula to
fair model checking of $\SGHLTL{\emptyset}$.
We also provide optimal complexity bounds for (fair) model checking the
fragment $\SGHLTL{\emptyset}$ in terms of a parameter of the given
formula called \emph{strong alternation depth}.
For this, we first give similar optimal complexity bounds for
satisfiability of $\QPTL$.

\noindent The syntax of $\QPTL$ formulas $\varphi$ over a finite set
$\AP$ of atomic propositions is as follows:
\[
  \varphi ::= \top \mid p \mid \neg \varphi \mid \varphi \vee \varphi
  \mid \Next \varphi \mid \Yesterday \varphi \mid \varphi \Until
  \varphi \mid \varphi \Since \varphi \mid \exists p\,.\varphi
\]
where $p\in \AP$ and $\exists p$ is the propositional existential quantifier.
A $\QPTL$ formula $\varphi$ is a \emph{sentence} if each proposition
$p$ occurs in the scope of a quantifier binding $p$ and each temporal
modality occurs in the scope of a quantifier.
By introducing $\wedge$ and the operators $\Release$ (\emph{release}, dual of
$\Until$), $\PastRelease$ (\emph{past release}, dual of $\Since$) and
$\forall p$ (propositional universal quantifier), every $\QPTL$
formula can be converted into an equivalent formula in \emph{negation normal
  form}, where negation only appears in front of atomic propositions.
$\QPTL$ formulas are interpreted over pointed traces $(\trace,i)$ over
$\AP$.
All $\QPTL$ temporal operators have the same semantics as in $\PLTL$.
The semantics of propositional quantification is as follows:
\[
(\trace,i) \models \exists p. \varphi \ \Leftrightarrow \  \text{ there is  a trace } \trace'  \text{ such that } \trace =_{\AP\setminus\{p\}} \trace' \text{ and } (\trace',i) \models \varphi
\]
where $\trace =_{\AP\setminus\{p\}} \trace'$ means that the
projections of $\trace$ and $\trace'$ over $\AP\setminus \{p\}$
coincide.
A formula $\varphi$ is satisfiable if $(\trace,0)\models \varphi$ for
some trace $\trace$.
We now give a generalization of the standard notion of alternation
depth between existential and universal quantifiers which corresponds
to the one given in~\cite{Rabe2016} for $\HCTLStar$.
Our notion takes into account also the occurrences of temporal
modalities between quantifier occurrences, but the nesting depth of
temporal modalities is not considered (intuitively, it is collapsed to
one).
Formally, the \emph{strong alternation depth} $\sad(\varphi)$ of a
$\QPTL$ formula $\varphi$ in negation normal form is inductively
defined as follows, where an existential formula is a formula of the
form $\exists p.\,\psi$, a universal formula is of the form
$\forall p.\,\psi$, and for a formula $\psi$, $\dual{\psi}$ denotes
the negation normal form of $\neg\psi$:

\begin{itemize}
\item For $\varphi=p$ and $\varphi=\neg p$ for a given $p\in\AP$:
  $\sad(\varphi)\DefinedAs 0$.
\item For $\varphi=\varphi_1\vee \varphi_2$ and for
  $\varphi=\varphi_1\wedge \varphi_2$:
  $\sad(\varphi)\DefinedAs \max(\{\sad(\varphi_1),\sad(\varphi_2)\})$.
\item For $\varphi = \exists p.\, \varphi_1$: if there is no universal
  sub-formula $\forall\psi$ of $\varphi_1$ such that
  $\sad(\forall\psi)=\sad(\varphi_1)$, then
  $\sad(\varphi)\DefinedAs\sad(\varphi_1)$. Otherwise,
  $\sad(\varphi)\DefinedAs\sad(\varphi_1)+1$.
\item For $\varphi = \forall p.\, \varphi_1$:
  $\sad(\varphi)\DefinedAs\sad(\exists p.\,\dual{\varphi_1})$.
\item For $\varphi = \Next\varphi_1$ or
  $\varphi = \Yesterday\varphi_1$:
  $\sad(\varphi)\DefinedAs \sad(\varphi_1)$.
\item For $\varphi = \varphi_1\Until \varphi_2$ or
  $\varphi = \varphi_1\Since \varphi_2$: let $h$ be the maximum over
  the strong alternation depths of the universal and existential
  sub-formulas of $\varphi_1$ and $\varphi_2$ (the maximum of the empty set is
  $0$).
  If the following conditions are met, then
  $\sad(\varphi)\DefinedAs h$; otherwise,
  $\sad(\varphi)\DefinedAs h+1$:
  \begin{itemize}
  \item there is no existential or universal sub-formula $\psi$ of
    $\varphi_1$ with $\sad(\psi)=h$;
  \item there is no universal sub-formula $\psi$ of $\varphi_2$ with
    $\sad(\psi)=h$;
  \item no existential formula $\psi$ with $\sad(\psi)=h$ occurs in
    the left operand (resp., right operand) of a sub-formula of
    $\varphi_2$ of the form $\psi_1\Op \psi_2$, where
    $\Op\in\{\Until,\Since\}$ (resp.,
    $\Op\in \{\Release,\PastRelease\}$).
  \end{itemize}
\item Finally, for $\varphi = \varphi_1\Release \varphi_2$ or
  $\varphi = \varphi_1\PastRelease \varphi_2$:
  $\sad(\varphi)\DefinedAs \sad(\dual{\varphi})$.
\end{itemize}

\noindent For example, $\sad(\exists p.(p\Until\exists q.q))=0$ and
$\sad(\exists p.(\exists p.p\Until q))=1$.
The strong alternation depth of an arbitrary $\QPTL$ formula
corresponds to the one of its negation normal form.
The strong alternation depth of a simple $\GHLTL$ formula is defined
similarly but we replace quantification over propositions with
quantification over trace variables.
For all $n,h\in\nat$, $\Tower(h,n)$ denotes a tower of exponentials of
height $h$ and argument $n$: $\Tower(0,n)=n$ and
$\Tower(h+1,n)=2^{\Tower(h,n)}$.
Essnetially, the strong alternation depth corresponds to the
(unavoidable) power set construction related to the alternation of
quantifiers to solve the model-checking problem.

The following result represents an improved version of Theorem~6
in~\cite{BozzelliMP15} where $h$-\EXPSPACE\ is the class of languages
decided by deterministic Turing machines bounded in space by functions
of $n$ in $O(\Tower(h,n^c))$ for some constant $c\geq 1$.
While the lower bound directly follows from~\cite{SistlaVW87}, the
upper bound improves the result in~\cite{BozzelliMP15}, since there,
occurrences of temporal modalities immediately preceding propositional
quantification always count as additional alternations (for details,
see Appendix~\ref{app:QPTLsatisfiability}).

\begin{theorem}\label{theorem:QPTLsatisfiability}
  For all $h\geq 0$, satisfiability of $\QPTL$ sentences with strong
  alternation depth at most $h$ is $h$-\EXPSPACE-complete.
\end{theorem}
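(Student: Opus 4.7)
The lower bound is immediate: $\QPTL$ satisfiability is already $h$-\EXPSPACE-hard for sentences of ordinary alternation depth $h$ by~\cite{SistlaVW87}, and such sentences trivially have strong alternation depth at most $h$ in our finer measure.

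For the upper bound, the plan is to induct on $h$ and build, for every $\QPTL$ sentence $\varphi$ in negation normal form with $\sad(\varphi)\leq h$, a nondeterministic B\"uchi automaton $\Au_\varphi$ of size at most $\Tower(h,p(|\varphi|))$ for a fixed polynomial $p$. Non-emptiness of $\Au_\varphi$ checked on the fly then yields an $h$-\EXPSPACE\ decision procedure. The base case $h=0$ reduces to the classical \PSPACE\ translation of an existentially-projectable $\PLTL$ formula into an exponential-size NBA, where the only quantifiers present are absorbed as non-deterministic guesses along the run.

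The inductive step proceeds bottom-up through the syntax tree of $\varphi$, maintaining the invariant that a subformula $\psi$ with $\sad(\psi)=k$ carries an NBA of size at most $\Tower(k,p(|\psi|))$. Propositional existential quantification becomes projection (no blowup); disjunction and conjunction become union and product; $\Next$ and $\Yesterday$ are handled by one-step shifts; $\Until$ and $\Since$ use the standard alternating construction that is then co-determinised. The operations that may raise the tower height are exactly those flagged by the $\sad$ definition: switching quantifier polarity at the current depth, or applying an $\Until$/$\Since$/$\Release$/$\PastRelease$ operator for which one of the three side conditions fails. At those points a Safra-style complementation is applied, paying a single exponential and landing in $\Tower(k{+}1,p(|\psi|))$, matching the increment of $\sad$.

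The main obstacle, and the source of the improvement over~\cite{BozzelliMP15}, is to certify that the side conditions on $\Until$ and $\Since$ precisely capture when complementation is unnecessary. Intuitively, an existential subformula sitting in the left operand of $\Until$ (the invariant part) can be absorbed into the non-deterministic loop of the standard alternating construction for until, and symmetrically in $\Since$; but a universal subformula in the right operand, or an existential one nested inside an $\Until$/$\Since$ on the right, forces a universal branching that cannot be merged without determinising. Making this tight will require a careful alternating B\"uchi construction combined with a Miyano--Hayashi / Safra step applied only at genuine alternations, and verifying that the tracking of $\sad$ coincides with the resulting tower height is the delicate technical step; once established, the complexity bound follows by unwinding the recursion on $\sad$.
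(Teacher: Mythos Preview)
Your high-level plan (compositional automaton construction with an inductive tower bound, lower bound from~\cite{SistlaVW87}) matches the paper's, but there are two genuine gaps.

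First, your intuition for the $\Until$ side conditions is reversed. You write that ``an existential subformula sitting in the left operand of $\Until$ (the invariant part) can be absorbed into the non-deterministic loop.'' The definition of $\sad$ says the opposite: \emph{any} existential or universal subformula of maximal depth appearing in the left operand $\varphi_1$ of $\varphi_1\Until\varphi_2$ forces an increment. What can be absorbed without cost is a maximal-depth \emph{existential} subformula sitting in the \emph{right} operand $\varphi_2$ (in a suitable position). The reason is that the automaton for $\varphi_1\Until\varphi_2$ nondeterministically guesses a single position where $\varphi_2$ holds and launches the $\SNBA$ for $\varphi_2$ once from there; the states of that $\SNBA$ become \emph{globally nondeterministic} strata in the resulting two-way $\HAA$, and the paper's refined $\HAA\to\SNBA$ translation (Proposition~\ref{prop:FromHAAtoSNBA}) shows that such strata contribute only logarithmically to the exponent. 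This notion of globally nondeterministic components, and the corresponding sharpening of the Miyano--Hayashi bound, is the key new technical device and is missing from your sketch. By contrast, $\varphi_1$ must be checked at \emph{every} intermediate position, so a quantified subformula there behaves universally and does cost an exponential.

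Second, one-way $\NBA$ with ``one-step shifts'' for $\Yesterday$ do not suffice for full $\QPTL$ with past. When a quantifier occurs in the scope of a past modality, the semantic object to track is the set of \emph{pointed} traces satisfying the subformula, not merely the set of traces satisfying it at position $0$; projection must commute with moving the current timestamp in both directions. The paper handles this by working throughout with two-way automata ($\SNBA$ and two-way $\HAA$) over pointed traces; your one-way construction, as written, would break on a quantifier under $\Since$ or $\Yesterday$.

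A minor point: your target automaton size $\Tower(h,p(|\varphi|))$ is off by one. The paper obtains $\Tower(h{+}1,O(|\varphi|))$ for existential-type sentences (and $\Tower(h{+}2,\cdot)$ for universal-type), which after on-the-fly nonemptiness in $\NLOGSPACE$ gives the claimed $h$-\EXPSPACE\ bound.
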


\noindent \textbf{(Fair) Model checking against $\SGHLTL{\emptyset}$.}
We provide now linear-time reductions of (fair) model checking against
$\SGHLTL{\emptyset}$ to (and from) satisfiability of $\QPTL$ which
preserve the strong alternation depth.
We start with the reduction of (fair) model checking $\SGHLTL{\emptyset}$ to
$\QPTL$ satisfiability.

\begin{restatable}{theorem}{theoFromEmptyGammaFragmentToQPTL}
  \label{theorem:fromEmptyGammaFragmentToQPTL}
  Given a fair finite Kripke structure $(\KS,\FStates)$ and a
  $\SGHLTL{\emptyset}$ sentence $\varphi$, one can construct in linear
  time a $\QPTL$ sentence $\psi$  with the same strong alternation
  depth as $\varphi$  such that $\psi$ is satisfiable if and only if
  $\Lang(\KS,\FStates)\models\varphi$.
\end{restatable}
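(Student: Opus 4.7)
The plan is to construct $\psi$ by a syntax-directed recursion on $\varphi$ that represents each trace variable of $\varphi$ by a block of freshly quantified $\QPTL$ propositions and that aligns the $\QPTL$ position with the current ``global step'' of the $\SGHLTL{\emptyset}$ evaluation.  For each trace variable $x$ occurring in $\varphi$ I would introduce: propositions $p^x$ for every $p\in\AP$ encoding $\trace_x$ pointwise; $\lceil\log|\States|\rceil$ propositions encoding the current $\KS$-state along $\trace_x$; and a marker $\ini^x$ forced to hold at exactly one $\QPTL$ position, which encodes the initial pointed position $i_x$ under the shift ``$\QPTL$ position $j$ represents trace position $j-p_x$'', where $p_x$ is the position at which $\ini^x$ holds.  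A fixed $\PLTL$ formula $\Xi_x$ collects the structural constraints: $\ini^x$ is a singleton, the state sequence from the $\ini^x$ position onwards is a fair $\FStates$-path of $\KS$, and the labeling is consistent with $\Lab$ and $p^x$.  This puts the valuations of the encoder block that satisfy $\Xi_x$ in bijection with the pointed traces of $(\KS,\FStates)$.

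Next, I would define the translation $\tau(\varphi)$ by induction.  Boolean connectives translate literally.  Each atom $\ctx{x}\Rel{\theta}{x}$ with $\theta\in\PLTL$ translates to a $\QPTL$ formula $[\theta]^x$ obtained by replacing every proposition $p$ with $p^x$, translating future modalities of $\theta$ directly (sound because $p^x$ at $\QPTL$ position $j$ represents $\trace_x(j-p_x)$), and guarding every past modality by $\Once\ini^x$ so that past access is restricted to $\QPTL$ positions $\geq p_x$, matching the $\PLTL$ semantics on $\trace_x$.  Outer $\Next$ and $\Until$ translate directly because the in-scope variables advance synchronously with the $\QPTL$ position.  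Outer $\Yesterday\,\varphi'$ translates to $\bigwedge_{x}(\Yesterday\Once\ini^x)\wedge\yesterday\,\tau(\varphi')$ and outer $\varphi_1\Since\varphi_2$ to $\tau(\varphi_1)\Since\bigl(\tau(\varphi_2)\wedge\bigwedge_{x}\Once\ini^x\bigr)$, with the conjunctions ranging over the statically in-scope variables; these guards implement exactly the $\SGHLTL{\emptyset}$ semantics of $\PRED$ being undefined as soon as some in-scope variable reaches trace position $0$.  Finally, $\exists^\Pt x.\,\varphi'$ translates to $\exists \bar p^x.\,(\Xi_x\wedge\tau(\varphi'))$ and $\forall^\Pt x.\,\varphi'$ to $\forall \bar p^x.\,(\Xi_x\Impl\tau(\varphi'))$, where $\bar p^x$ denotes the block of fresh propositions for $x$.

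The construction is linear in $|\varphi|+|\KS|$ because each subformula contributes a bounded number of propositions and a bounded amount of $\PLTL$ material (with the $\log|\States|$ factor absorbed into $\Xi_x$).  Correctness is by induction on $\varphi$ in both directions under the invariant that the $\QPTL$ position equals the current global step: a satisfying $\QPTL$ model yields pointed traces of $(\KS,\FStates)$ via $\Xi_x$, and conversely a set of pointed witnesses of $\varphi$ induces a satisfying $\QPTL$ model by reading off the encoder propositions.  Strong alternation depth is preserved because each $\exists^\Pt$ (respectively $\forall^\Pt$) contributes only existential (respectively universal) propositional quantifiers, bundled into one block at the same alternation layer, and the temporal operators translate one-to-one with only $\Once\ini^x$-style guards that introduce no quantifiers.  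The main technical obstacle will be the uniform treatment of past modalities inside $\ctx{x}\Rel{\theta}{x}$ and in the outer formula: the $\Once\ini^x$ and $\Yesterday\Once\ini^x$ guards are designed precisely to reject $\QPTL$ past accesses that fall strictly below the $\ini^x$-anchor (which would correspond to negative trace positions), thereby synchronising the boundary behaviour of $\PRED$ across all in-scope variables.
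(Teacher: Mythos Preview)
Your encoding has a genuine gap: it cannot represent a pointed trace $(\trace_x,i_x)$ whose timestamp $i_x$ exceeds the current $\QPTL$ position. Under your shift convention, when $\exists^{\Pt}x.\varphi'$ is evaluated at $\QPTL$ position $j$ you need $\ini^x$ at position $p_x=j-i_x$, which is negative whenever $i_x>j$; since the whole sentence is evaluated from position $0$, already a single outermost pointed quantifier can only select $i_x=0$. Concretely, take the $\SGHLTL{\emptyset}$ sentence
\[
\varphi \;=\; \exists^{\Pt}x.\,\exists^{\Pt}y.\,\bigl(\ctx{x}\Rel{\neg\Yesterday\top}{x}\ \wedge\ \ctx{y}\Rel{\Yesterday\top}{y}\ \wedge\ \Always(\Rel{p}{x}\leftrightarrow\Rel{p}{y})\bigr).
\]
Over the singleton model $\{p^{\omega}\}$ this holds (pick $i_x=0$, $i_y=1$). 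In your translation both quantifiers are processed at $\QPTL$ position $0$; the inner one needs $i_y\geq 1$, hence $p_y\leq -1$, which is impossible, so your $\psi$ is unsatisfiable. Wrapping everything in an outer $\Future$ does not help either: a \emph{universal} pointed quantifier $\forall^{\Pt}y$ must range over all $i_y\in\nat$, and no fixed starting $\QPTL$ position can accommodate arbitrarily large $i_y$.

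The paper solves exactly this alignment problem, and it is the crux of the proof. For each variable $x$ it uses \emph{two} disjoint blocks of propositions, a forward block $p_{\overrightarrow{x}},\state_{\overrightarrow{x}},\pad_{\overrightarrow{x}}$ and a backward block $p_{\overleftarrow{x}},\state_{\overleftarrow{x}},\pad_{\overleftarrow{x}}$. The forward block encodes the suffix of $\trace_x$ from the current alignment point, while the backward block stores the \emph{reversed} prefix that would otherwise fall below $\QPTL$ position $0$. The translation $\TMap(\dir,\phi)$ carries a direction flag $\dir\in\{\rightarrow,\leftarrow\}$ recording whether the current $\QPTL$ position refers to the forward or the backward block, and temporal modalities flip direction when crossing position $0$ (for instance, $\TMap(\rightarrow,\Yesterday\phi)$ contains the disjunct $\neg\Yesterday\top\wedge\Next(\TMap(\leftarrow,\phi)\wedge\neg\Halt_{\leftarrow})$, switching from forward-$\Yesterday$ to backward-$\Next$). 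When $\exists^{\Pt}x$ is encountered in forward mode, the translation nondeterministically chooses either a forward encoding (covering $i_x\leq j$) or a backward encoding (covering $i_x>j$). Your plan is otherwise sound in spirit---the state-bit encoding of $\KS$, the per-variable fairness constraint, and the block quantification that preserves strong alternation depth are all as in the paper---but without this bidirectional folding the reduction is incorrect.
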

\begin{proof}[Sketched proof]
  Let $\KS= \mktuple{\States,\States_0,\Trans,\Lab}$.
  In the reduction of model checking $(\KS,\FStates)$ against
  $\varphi$ to $\QPTL$ satisfiability, we need to merge multiple traces
  into a unique trace where just one position is considered at any
  time.
  An issue is that the hyper quantifiers range over arbitrary pointed
  traces so that the positions of the different pointed traces in the
  current trace assignment do not necessarily coincide (intuitively, the different
  pointed traces are not aligned with respect to the relative current
  positions).
  However, we can solve this issue because the offsets between the
  positions of the pointed traces in the current trace assignment
  remain constant during the evaluation of the temporal modalities.
  In particular, assume that $(\trace,i)$ is the first pointed trace
  selected by a hyper quantifier during the evaluation along a path in
  the syntax tree of $\varphi$.
  We encode $\trace$ by keeping track also of the variable $\vartrace$
  to which $(\trace,i)$ is bound and the $\FStates$-fair path of $\KS$
  whose associated trace is $\trace$.
  Let $(\trace',i')$ be another pointed trace introduced by another
  hyper quantifier $y$ during the evaluation of $\varphi$.
  If $i'<i$, we consider an encoding of $\trace'$ which is similar to
  the previous encoding but we precede it with a \emph{padding prefix}
  of length $i-i'$ of the form
  $\{\pad_{\overrightarrow{\vartraceAux}}\}^{i-i'}$.
  The arrow $\rightarrow$ indicates that the encoding is along the
  \emph{forward direction}.
  Now, assume that $i'>i$.
  In this case, the encoding of $\trace'$ is the merging of two
  encodings over disjoint sets of propositions: one along the forward
  direction which encodes the suffix $(\trace')^{i'-i}$ and another
  one along the \emph{backward direction} which is of the form
  $\{\pad_{\overleftarrow{\vartraceAux}}\}\cdot \rho
  \cdot\{\pad_{\overleftarrow{\vartraceAux}}\}^{\omega}$, where $\rho$
  is a backward encoding of the \emph{reverse} of the prefix of
  $\trace'$ of length $i'-i$.
  In such a way, the encodings of the pointed traces later introduced
  in the evaluation of $\varphi$ are aligned with the reference
  pointed trace $(\trace,i)$.
  Since the positions in the backward direction overlap some positions
  in the forward direction, in the translation, we keep track of whether
  the current position refers to the forward or to the backward
  direction.
  The details of the reduction are to $\QPTL$ satisfiability can be found
  in Appendix~\ref{app:fromEmptyGammaFragmentToQPTL}.
\end{proof}

By an adaptation of the known reduction of satisfiability of $\QPTL$
without past to model checking of $\HCTLStar$~\cite{ClarksonFKMRS14},
we obtain the following result (for details, see Appendix~\ref{app:fromQPTLtoEmptyGammaFragment}).

\begin{restatable}{theorem}{theoFromQPTLtoEmptyGammaFragment}
  \label{theorem:fromQPTLtoEmptyGammaFragment}
  Given a $\QPTL$ sentence $\psi$ over $\AP$, one can build in
  linear time a finite Kripke structure $\KS_\AP$ (depending only on
  $\AP$) and a singleton-free $\SGHLTL{\emptyset}$ sentence $\varphi$
  having the same strong alternation depth as $\psi$ such that $\psi$
  is satisfiable \emph{iff} $ \Lang(\KS_\AP)\models\varphi$.
\end{restatable}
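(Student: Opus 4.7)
The plan is to adapt the standard reduction of $\QPTL$ satisfiability (without past) to $\HCTLStar$ model checking of~\cite{ClarksonFKMRS14}, extending it to handle both the past operators of $\QPTL$ and the syntactic restrictions of singleton-free $\SGHLTL{\emptyset}$, all while preserving the strong alternation depth. First I would define $\KS_\AP$ to be the universal Kripke structure over $\AP$: the state set is $2^\AP$, every pair of states is a transition, every state is initial, and the labeling is the identity. Then $\Lang(\KS_\AP) = (2^\AP)^\omega$ contains every infinite word over $2^\AP$, so $\psi$ is satisfiable iff some trace of $\KS_\AP$ witnesses it, and $\KS_\AP$ indeed depends only on $\AP$ as required.

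Next I would translate $\psi$ into $\varphi$ by structural induction, maintaining an environment $\eta$ that maps each currently bound propositional variable $p$ of $\psi$ to a trace variable $\vartrace_p$. Atomic propositions $p$ are rewritten as $\Rel{p}{\eta(p)}$; Boolean connectives are preserved; and the temporal operators (including the past operators $\Yesterday$ and $\Since$) are translated component-wise, each annotated with the subscript $\emptyset$ so that stuttering is trivial and the default global context $\Var$ applies. Each propositional quantifier $\exists p$ is replaced by a fresh trace quantifier binding $\vartrace_p$ over $\Lang(\KS_\AP)$, and the environment is extended by $p\mapsto \vartrace_p$ when recursing into its scope. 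For quantifiers appearing under temporal modalities, the fresh trace has to be positioned at the current evaluation instant so that past and future modalities interact correctly with it; to this end I would use the pointed quantifier $\exists^\Pt \vartrace_p$ combined with an alignment gadget built from past and future modalities on already-active traces and a reference trace $\vartrace_0$ introduced at the outermost level via the ordinary $\exists$.

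Since propositional quantifiers of $\psi$ are in bijection with trace quantifiers of $\varphi$ of the same polarity, and the alignment gadget consists only of temporal modalities over already-bound traces without any new existential or universal quantifiers, the strong alternation depth is preserved. The translation produces only atomic formulas of the form $\Rel{p}{\vartrace_p}$ for atomic $p\in\AP$, hence $\varphi$ is singleton-free by construction, and the recursive construction runs in linear time. Correctness then follows by a routine structural induction on $\psi$: at every subformula, the trace variables in $\eta$ jointly encode a propositional truth valuation of $\psi$, and the alignment gadget ensures that the $\SGHLTL{\emptyset}$ evaluation instant coincides with the $\QPTL$ evaluation time, so the translated formula holds in $\Lang(\KS_\AP)$ iff the source subformula holds at the matched time point.

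The hard part will be realising the alignment gadget within the syntactic restrictions of singleton-free $\SGHLTL{\emptyset}$. Unlike the state-local trace quantifier of $\HCTLStar$, the quantifier $\exists$ of simple $\GHLTL$ places fresh traces at position $0$ while all currently active traces advance in lockstep under the default global context $\Var$, so a trace introduced inside a temporal operator is naturally misaligned with the current evaluation position; worse, the equal-level predicate that would most naturally enforce alignment relies on multi-context modalities that fall outside singleton-free $\SGHLTL{\emptyset}$. The alignment must instead be expressed using only atomic formulas $\Rel{p}{\vartrace_p}$ and the past and future modalities of the fragment, and it must neither inflate the strong alternation depth nor break the linear-time construction; getting this gadget right, and then verifying that the resulting inductive invariant matches the $\QPTL$ semantics at every nesting depth, is the main technical step I expect to carry out carefully.
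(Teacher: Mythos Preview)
Your high-level plan (bijection between propositional and trace quantifiers, structural induction, alignment gadget) is sound, but there are two concrete gaps that prevent the proposal from establishing the theorem as stated.

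\textbf{The Kripke structure is too large.} The universal structure over $\AP$ has $2^{|\AP|}$ states, so it cannot be built in linear time. The paper avoids this by \emph{serialising} each time step: a position of the original trace is encoded as a block of $|\AP|+1$ positions, the first carrying a tag proposition $\Tag$ (plus a marker $\ini$ at the very first block), and position $j$ of the block carrying only the truth value of the $j$-th proposition. The resulting Kripke structure has $2|\AP|+2$ states and is genuinely linear in $|\AP|$. This is not a cosmetic difference: without some such serialisation you cannot meet the linear-time bound.

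\textbf{The alignment gadget cannot be realised over the universal structure.} You correctly identify alignment as the hard part, but the proposal leaves it open, and over the universal structure it is in fact not realisable within singleton-free $\SGHLTL{\emptyset}$. In that fragment the only singleton-context subformulas allowed are $\ctx{\vartrace}\Rel{p}{\vartrace}$ with $p$ atomic, so you cannot use $\ctx{\vartrace}\neg\Yesterday\top$ to detect ``position $0$ of $\vartrace$'', nor the multi-context equal-level gadget $\ctx{\{\vartrace,\vartraceAux\}}\Once(\ctx{\vartrace}\neg\Yesterday\top\wedge\ctx{\vartraceAux}\neg\Yesterday\top)$. In the universal structure every state looks alike, so there is no atomic proposition that marks position $0$; your ``reference trace $\vartrace_0$'' idea does not help, because you still need to say ``$\vartrace_p$ is at the same position as $\vartrace_0$'' and that is exactly the equal-level predicate you cannot express. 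The paper's block encoding solves this for free: the dedicated proposition $\ini$ holds only at position $0$ of every encoded trace, so alignment of a freshly quantified $\vartrace_k$ with the current trace $\vartrace_h$ is expressed, entirely within the singleton-free fragment, by
\[
\Once\Bigl(\Rel{\ini}{\vartrace_h}\wedge\Rel{\ini}{\vartrace_k}\wedge\Always\bigwedge_{j\neq k}\bigl(\Rel{p_j}{\vartrace_h}\leftrightarrow\Rel{p_j}{\vartrace_k}\bigr)\Bigr),
\]
which simultaneously forces the two pointed traces to be at the same offset from position $0$ and to agree on all propositions except the one being requantified. The temporal operators are then translated as $\Next\mapsto\Next^{n+1}$, $\Yesterday\mapsto\Yesterday^{n+1}$, and $\Until,\Since$ relativised to $\Tag$-positions. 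In short, the block encoding is the single technical idea that discharges both the size bound and the alignment problem; your proposal is missing precisely this idea.
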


Hence, by
Theorems~\ref{theorem:QPTLsatisfiability}--\ref{theorem:fromQPTLtoEmptyGammaFragment},
we obtain the following result.

\begin{corollary}
  For all $h\geq 0$,   (fair) model checking against
  $\SGHLTL{\emptyset}$ sentences with strong alternation depth at most $h$ is
  $h$-\EXPSPACE-complete.
\end{corollary}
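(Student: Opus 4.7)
The plan is to obtain the corollary by simply composing the three preceding theorems, one for each direction of the complexity characterization, and verifying that the composition preserves the strong alternation depth.

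For the upper bound, I would proceed as follows. Fix $h\geq 0$ and let $(\KS,\FStates)$ be a fair finite Kripke structure together with a $\SGHLTL{\emptyset}$ sentence $\varphi$ of strong alternation depth at most $h$. By Theorem~\ref{theorem:fromEmptyGammaFragmentToQPTL}, we can construct in linear time a $\QPTL$ sentence $\psi$ whose strong alternation depth is again at most $h$ and such that $\psi$ is satisfiable iff $\Lang(\KS,\FStates)\models\varphi$. Since by Theorem~\ref{theorem:QPTLsatisfiability} satisfiability of $\QPTL$ sentences of strong alternation depth at most $h$ lies in $h$-\EXPSPACE, and the reduction is linear (hence the resulting formula has size polynomial in the input), fair model checking of $\SGHLTL{\emptyset}$ at alternation depth $h$ is in $h$-\EXPSPACE. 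Standard model checking is a special case (take $\FStates$ to be the full state set), so the same upper bound applies to it as well.

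For the matching lower bound, I would go through the reverse reduction. Given a $\QPTL$ sentence $\psi$ of strong alternation depth at most $h$, Theorem~\ref{theorem:fromQPTLtoEmptyGammaFragment} yields in linear time a finite Kripke structure $\KS_\AP$ and a singleton-free $\SGHLTL{\emptyset}$ sentence $\varphi$ of the same strong alternation depth such that $\psi$ is satisfiable iff $\Lang(\KS_\AP)\models\varphi$. Combined with the $h$-\EXPSPACE\ lower bound of Theorem~\ref{theorem:QPTLsatisfiability}, this shows that already standard model checking against $\SGHLTL{\emptyset}$ sentences of strong alternation depth at most $h$ is $h$-\EXPSPACE-hard. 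Because every instance of standard model checking is trivially an instance of fair model checking (taking $\FStates$ to be all states), the hardness transfers to the fair setting.

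The only thing to double-check is that the reductions of Theorems~\ref{theorem:fromEmptyGammaFragmentToQPTL} and~\ref{theorem:fromQPTLtoEmptyGammaFragment} really leave the strong alternation depth invariant at the specific value $h$ (and not only bound it), which is already asserted in their statements; no extra argument is required here. There is no genuine obstacle in this final step, as all the technical work has been absorbed into the two translations and into the complexity bound for $\QPTL$; the corollary is a straightforward two-way reduction combined with a known complexity result.
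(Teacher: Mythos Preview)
Your proposal is correct and follows exactly the approach the paper takes: the corollary is stated immediately after Theorems~\ref{theorem:QPTLsatisfiability}, \ref{theorem:fromEmptyGammaFragmentToQPTL}, and~\ref{theorem:fromQPTLtoEmptyGammaFragment} with the one-line justification ``Hence, by Theorems~\ref{theorem:QPTLsatisfiability}--\ref{theorem:fromQPTLtoEmptyGammaFragment}, we obtain the following result.'' Your write-up simply makes the two directions of the reduction explicit, which is precisely the intended argument.
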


\noindent
\textbf{Reduction to fair model checking against $\SGHLTL{\emptyset}$.}
We solve the (fair) model checking problem for simple $\GHLTL$ by a reduction to
fair model checking against the fragment $\SGHLTL{\emptyset}$.
Our reduction is exponential in the size of the given sentence and is
an adaptation of the  reduction from model checking simple
$\SHLTL$ to model checking $\HLTL$ shown in~\cite{BozzelliPS21}.
As a preliminary step, we first show, by an easy adaptation of the
standard automata-theoretic approach for $\PLTL$~\cite{VardiW94}, that
the problem for a simple $\GHLTL$ sentence $\varphi$ can be reduced in
exponential time to the fair model checking problem against a
singleton-free sentence in the fragment $\SGHLTL{\Gamma}$ for some set
$\Gamma$ of \emph{atomic propositions} depending on $\varphi$.
For details, see Appendix~\ref{app:FormSimpleToProposition}.

\begin{restatable}{proposition}{propFormSimpleToProposition}
  \label{prop:FormSimpleToProposition}
  Given a simple $\GHLTL$ sentence $\varphi$ and a fair finite Kripke
  structure $(\KS,\FStates)$ over $\AP$, one can build in single
  exponential time in the size of $\varphi$, a fair finite Kripke
  structure $(\KS',\FStates')$ over an extension $\AP\,'$ of $\AP$ and
  a singleton-free $\SGHLTL{\Gamma'}$ sentence $\varphi'$ for some
  $\Gamma'\subseteq \AP\,'$ such that
  $\Lang(\KS',\FStates')\models \varphi'$ if and only if
  $\Lang(\KS,\FStates)\models \varphi$.
  Moreover, $\varphi'$ has the same strong alternation depth as
  $\varphi$, $|\varphi'|=O(|\varphi|)$, and
  $|\KS'|=O(|\KS|* 2^{O(|\varphi|)})$.
\end{restatable}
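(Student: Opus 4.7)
The plan is to reduce simple $\GHLTL$ model checking to the singleton-free $\SGHLTL{\Gamma'}$ case (with $\Gamma'$ a set of atomic propositions) by precomputing, via a product automaton construction, the truth values of the $\PLTL$ subformulas appearing either in a subscript of a stuttering modality or in a singleton-context subformula $\ctx{\vartrace}\Rel{\psi}{\vartrace}$. First, I would collect the set $\Psi$ of all $\PLTL$ formulas occurring in $\varphi$ in one of these two positions, and for each $\theta\in\Psi$ introduce a fresh proposition $p_\theta$. Set $\AP\,' \DefinedAs \AP\cup\{p_\theta:\theta\in\Psi\}$ and let $\Gamma'\DefinedAs\{p_\theta:\theta\in\Gamma(\varphi)\}$, where $\Gamma(\varphi)$ is the union of all the $\Gamma$ subscripts appearing in $\varphi$.

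Next, using the standard automata-theoretic translation of $\PLTL$ to Büchi automata~\cite{VardiW94}, I would build an $\AP\,'$-labeled nondeterministic Büchi automaton $\Au_\Psi$ whose accepted traces are exactly the traces $\trace'$ over $\AP\,'$ such that, letting $\trace$ be the $\AP$-projection, $p_\theta\in\trace'(i)$ iff $(\trace,i)\models\theta$ for every $\theta\in\Psi$ and every $i\geq 0$; past subformulas are tracked deterministically by a subset construction over past subformulas, while eventualities are enforced correctly by nondeterministic guessing combined with Büchi fairness. The size of $\Au_\Psi$ is $2^{O(|\varphi|)}$. Then I would take $(\KS',\FStates')$ to be the synchronized product of $(\KS,\FStates)$ with $\Au_\Psi$, where $\FStates'$ combines $\FStates$ with the Büchi condition of $\Au_\Psi$ in the standard way. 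By construction the projection of $\Lang(\KS',\FStates')$ onto $\AP$ equals $\Lang(\KS,\FStates)$, and for each fair trace of $\KS$ the product has (at least one, and up to irrelevant choices exactly one) extension that correctly labels the $p_\theta$'s. This gives $|\KS'|=O(|\KS|\cdot 2^{O(|\varphi|)})$.

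Finally, I would define $\varphi'$ as the formula obtained from $\varphi$ by the two syntactic substitutions: replace each subscript $\Gamma=\{\theta_1,\ldots,\theta_k\}$ by $\{p_{\theta_1},\ldots,p_{\theta_k}\}$, and replace each subformula $\ctx{\vartrace}\Rel{\psi}{\vartrace}$ by $\Rel{p_\psi}{\vartrace}$ (equivalently $\ctx{\vartrace}\Rel{p_\psi}{\vartrace}$). The result lies in $\SGHLTL{\Gamma'}$, is singleton-free, has the same nesting structure as $\varphi$, and in particular preserves the strong alternation depth; moreover $|\varphi'|=O(|\varphi|)$. Correctness, i.e.\ $\Lang(\KS',\FStates')\models\varphi'$ iff $\Lang(\KS,\FStates)\models\varphi$, follows by a straightforward structural induction on $\varphi$: for every extended fair trace $\trace'$ in $\Lang(\KS',\FStates')$ and every position $i$, the value of $p_\theta$ at $(\trace',i)$ coincides with the value of $\theta$ at $(\trace'|_\AP,i)$, so the $\Gamma$-stutter factorizations and the evaluations of relativized $\PLTL$ formulas agree on both sides, and the context, quantifier, and temporal cases go through unchanged.

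The main technical obstacle is the correctness of the labeling construction for \emph{every} position of each trace and for $\PLTL$ formulas that mix past and future operators: one must argue that a single product automaton can simultaneously certify, at every time point, the truth of each $\theta\in\Psi$, so that there is a canonical fair extended trace corresponding to each original fair trace. This is handled by a standard separation of past (deterministic memory) and future (nondeterministic guessing validated by Büchi acceptance) sub-automata, as in the $\PLTL$-to-Büchi translation, but requires care in defining the combined fairness condition $\FStates'$ and in checking that nondeterministic choices outside $\FStates'$ never corrupt the labeling.
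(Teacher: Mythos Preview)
Your proposal is correct and follows essentially the same approach as the paper's proof: collect the relevant $\PLTL$ formulas (from stuttering subscripts and singleton-context subformulas), build a B\"uchi automaton via the standard atom-based construction of~\cite{VardiW94} that certifies their truth at every position, take the synchronous product with $(\KS,\FStates)$ combining the two fairness conditions, and perform the obvious syntactic substitution in $\varphi$, with correctness established by structural induction. The paper's treatment differs only in presentation details (it works with the full closure of the formula set and states the labeling correctness as a separate proposition), not in substance.
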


Let us fix a non-empty finite set $\Gamma\subseteq \AP$ of atomic
propositions.
We now show that fair model checking of the singleton-free fragment of
$\SGHLTL{\Gamma}$ can be reduced in polynomial time to fair model
checking of $\SGHLTL{\emptyset}$.
We observe that in the singleton-free fragment of $\SGHLTL{\Gamma}$,
when a pointed trace $(\trace,i)$ is selected by a pointed quantifier
$\exists^{\Pt}\vartrace$,  the positions of $\trace$ which are
visited during the evaluation of the temporal modalities are all in
the $(\Gamma,\omega)$-stutter factorization of $\trace$ with the
possible exception of the position $i$ chosen by
$\exists^{\Pt}\vartrace$.
Thus, given a set $\Lang$ of traces and a special proposition
$\pad\notin \AP$, we define an extension
$\stfr_{\Gamma}^{\pad}(\Lang)$ of the set
$\stfr_{\Gamma}(\Lang)=\{\stfr_{\Gamma}(\trace)\mid \trace\in
\Lang\}$ as follows.
Intuitively, we consider for each trace $\trace\in\Lang$, its
$\Gamma$-stutter trace $\stfr_{\Gamma}(\trace)$ and the extensions of
$\stfr_{\Gamma}(\trace)$ which are obtained by adding an extra
position marked by proposition $\#$ (this extra position does not
belong to the $(\Gamma,\omega)$-stutter factorization of $\trace$).
Formally, $\stfr_{\Gamma}^{\pad}(\Lang)$ is the smallest set
containing $\stfr_{\Gamma}(\Lang)$ and satisfying the following
condition:
\begin{itemize}
\item for each trace $\trace\in \Lang$ with $(\Gamma,\omega)$-stutter
  factorization $\{\ell_k\}_{k\geq 0}$ and position
  $i\in (\ell_k,\ell_{k+1})$ for some $k\geq 0$, the trace
  $\trace(\ell_0)\ldots
  \trace(\ell_k)\,(\trace(i)\cup\{\pad\})\,\trace(\ell_{k+1})\,\trace(\ell_{k+2})\ldots
  \in \stfr_{\Gamma}^{\pad}(\Lang)$.
\end{itemize}

Given a singleton-free formula $\varphi$ in $\SGHLTL{\Gamma}$, we
denote by $\TMap_{\pad}(\varphi)$ the formula in $\SGHLTL{\emptyset}$
obtained from $\varphi$ by applying inductively the following
transformations:
\begin{itemize}
\item the $\Gamma$-relativized temporal modalities are replaced with
  their $\emptyset$-relativized counterparts;
\item each formula $\exists^{\Pt} \vartrace.\,\phi$ is replaced with
  $\exists^{\Pt}\vartrace.\,\bigl(\TMap_{\pad}(\phi)\wedge
  \ctx{x}(\Next\Always\neg\Rel{\pad}{\vartrace} \wedge (\Yesterday\top
  \rightarrow \Yesterday\Historically
  \neg\Rel{\pad}{\vartrace}))\bigr)$.
\end{itemize}

Intuitively, the formula
$\TMap_{\pad}(\exists^{\Pt} \vartrace.\,\phi)$ states that for the
pointed trace $(\trace,i)$ selected by the pointed quantifier, at most
position $i$ may be marked by the special proposition $\#$.
By the semantics of the logics considered, the following holds.

\begin{remark}\label{remark:StutteringExtension}
  Given a singleton-free sentence $\varphi$ of $\SGHLTL{\Gamma}$ and a
  set $\Lang$ of traces, it holds that $\Lang\models \varphi$ if and
  only if $\stfr_{\Gamma}^{\pad}(\Lang)\models \TMap_{\#}(\varphi)$.
\end{remark}

Let us fix now a fair finite Kripke structure $(\KS,\FStates)$.
We first show that one can build in polynomial time a finite
Kripke structure $(\KS_\Gamma,\FStates_\Gamma)$ and a $\LTL$ formula
$\theta_\Gamma$ such that $\stfr_{\Gamma}^{\pad}(\Lang(\KS,\FStates))$
coincides with the traces of $\Lang(\KS_\Gamma,\FStates_\Gamma)$ satisfying
$\theta_\Gamma$ (details are  in~Appendix~\ref{app:ExtendedStutterTrace}).

\begin{restatable}{proposition}{propExtendedStutterTrace}
  \label{prop:ExtendedStutterTrace}
  Given $\emptyset\neq \Gamma\subseteq \AP$ and a fair finite Kripke
  structure $(\KS,\FStates)$ over $\AP$, one can construct in
  polynomial time a fair finite Kripke structure
  $(\KS_\Gamma,\FStates_\Gamma)$ and a $\LTL$ formula $\theta_\Gamma$
  such that $\stfr_{\Gamma}^{\pad}(\Lang(\KS,\FStates))$ is the set of
  traces $\trace\in\Lang(\KS_\Gamma,\FStates_\Gamma)$ so that
  $\trace\models \theta_\Gamma$.
\end{restatable}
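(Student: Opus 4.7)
The plan is to construct $\KS_\Gamma$ as an augmentation of $\KS$ that directly emits the $\Gamma$-stutter trace of a $\KS$-path, with an optional single $\pad$-insertion. Since $\Gamma \subseteq \AP$, the $\Gamma$-valuation of a state is locally readable from its label, so for each ordered pair $(s, s')$ of $\KS$-states one can decide in polynomial time whether $s'$ is reachable from $s$ via a non-empty path whose intermediate states all share $\Lab(s) \cap \Gamma$ and whose final edge leaves the $\Gamma$-class; I call such a witness a \emph{stutter jump} from $s$ to $s'$, and all such jumps are precomputed by reachability in the subgraph of $\KS$ restricted to each $\Gamma$-class. The state space of $\KS_\Gamma$ consists of triples $(s, m, b)$ with $s \in \States$, $m \in \{\text{jump}, \text{tail}\}$, and $b \in \{0,1\}$ flagging whether a $\pad$-position has already been emitted, together with $\pad$-marker copies $(s, \pad)$; the label of $(s, m, b)$ is $\Lab(s)$ and the label of $(s, \pad)$ is $\Lab(s) \cup \{\pad\}$, so $\KS_\Gamma$ is labelled over $\AP \cup \{\pad\}$ (the bookkeeping components $m$ and $b$ live only in the state space). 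In jump-mode the only outgoing transitions are stutter jumps; in tail-mode they mirror $\KS$-transitions that preserve $\Gamma$-valuation; a one-way nondeterministic switch from jump-mode to tail-mode is enabled at any moment; and optional detours $s \rightarrow (t, \pad) \rightarrow s'$, flipping $b$ from $0$ to $1$, are added whenever $t$ is an intermediate state of a stutter jump from $s$ to $s'$.

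Fairness is lifted from $\KS$ to $\KS_\Gamma$ by annotating each stutter jump with a bit indicating whether its witnessing $\KS$-sub-path visits $\FStates$, and defining $\FStates_\Gamma$ to collect the jump-targets reached via fair-annotated jumps together with the tail-mode copies of states in $\FStates$. The $\LTL$ formula $\theta_\Gamma$ then captures the only global well-formedness constraint that the transition structure does not already enforce, namely that $\pad$ occurs at most once, which is expressible as $\Always(\pad \rightarrow \Next \Always \neg \pad)$. Both $|\KS_\Gamma|$ and $|\theta_\Gamma|$ are polynomial in $|\KS|$ and $|\Gamma|$, since stutter-jump reachability is a polynomial computation and $\theta_\Gamma$ is constant-sized.

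Correctness is established by two inclusions. For $(\subseteq)$, any $\trace' \in \stfr_\Gamma^{\pad}(\Lang(\KS,\FStates))$ arises from some $\trace \in \Lang(\KS,\FStates)$ with $(\Gamma,\omega)$-stutter factorization $\{\ell_k\}$: traversing the factorization points via stutter jumps in jump-mode, optionally detouring through $(t, \pad)$ at the chosen intermediate position $t$ of $\trace$, and switching to tail-mode once $m_\infty < \infty$, yields a fair $\KS_\Gamma$-path whose induced trace is $\trace'$ and which satisfies $\theta_\Gamma$. For $(\supseteq)$, any fair $\KS_\Gamma$-path satisfying $\theta_\Gamma$ decomposes into a jump-mode prefix, at most one $\pad$-detour, and an optional tail-mode suffix; expanding each jump along its witnessing $\KS$-sub-path and concatenating the tail-mode steps produces a fair $\KS$-path whose induced trace, after $\Gamma$-stutter compression and possible $\pad$-insertion at the detour's intermediate position, equals the trace of the $\KS_\Gamma$-path. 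The main obstacle will be the bookkeeping that transfers $\FStates$-fairness through stutter jumps: since one jump may absorb many $\KS$-steps, the fair-annotation must be defined so that $\FStates$ is visited infinitely often in the underlying $\KS$-path if and only if $\FStates_\Gamma$ is visited infinitely often in $\KS_\Gamma$, and this correspondence must persist across the (unique) switch from jump-mode to tail-mode; handling this requires a careful case analysis distinguishing $m_\infty = \infty$ from $m_\infty < \infty$ and locating the $\pad$-detour relative to the mode switch.
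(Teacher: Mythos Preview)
Your construction is correct and takes a genuinely different decomposition from the paper's. The paper keeps $\KS_\Gamma$ minimal---states are $\States\times 2^{\{\acc,\pad\}}$, and transitions include \emph{all} original $\KS$-edges alongside the summary ``jump'' edges---and pushes the stutter-freeness requirement entirely into a non-trivial $\theta_\Gamma$ that at every position enforces the three-way disjunction ``the $\Gamma$-valuation changes next, or it stays constant forever, or the next step is the unique $\pad$.'' You instead internalise that case-split in the state via the jump/tail mode, so your $\theta_\Gamma$ is essentially vacuous (indeed, your $b$-flag already rules out a second $\pad$, so $\theta_\Gamma=\top$ would do). The paper's choice yields a smaller $\KS_\Gamma$ at the cost of a formula whose size grows with $|\Gamma|$; yours yields a constant-size formula at the cost of a constant-factor blow-up in states. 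Two details you should make explicit when filling in the proof: (i) the transition relation of a Kripke structure must be left-total, but a tail-mode state $(s,\text{tail},b)$ is a dead end whenever $s$ has no $\Gamma$-preserving $\KS$-successor, so you need an unfair sink or must prune such states; (ii) ``jump-targets reached via fair-annotated jumps'' is an edge property, not a state property, so to define $\FStates_\Gamma$ you must add an accepting bit to the state recording whether the incoming summary edge absorbed an $\FStates$-visit---the paper does exactly this with its $\acc$ flag.
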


Fix now a singleton-free sentence $\varphi$ of $\SGHLTL{\Gamma}$.
For the given fair finite Kripke structure $(\KS,\FStates)$ over
$\AP$, let $(\KS_\Gamma,\FStates_\Gamma)$ and $\theta_\Gamma$ as in
the statement of Proposition~\ref{prop:ExtendedStutterTrace}.
We consider the $\SGHLTL{\emptyset}$ sentence $\TMap(\varphi)$
obtained from $\TMap_{\pad}(\varphi)$ by inductively replacing 
each subformula $\exists^{\Pt} \vartrace.\,\phi$ of
$\TMap_{\pad}(\varphi)$ with
$\exists^{\Pt} \vartrace.\,(\TMap(\phi)\wedge
\ctx{\vartrace}\Once(\neg\Yesterday\top\wedge\Rel{\theta_\Gamma}{\vartrace}))$.
In other terms, we ensure that in $\TMap_{\pad}(\varphi)$ the hyper
quantification ranges over traces which satisfy the $\LTL$ formula
$\theta_{\Gamma}$.
By Remark~\ref{remark:StutteringExtension} and
Proposition~\ref{prop:ExtendedStutterTrace}, we obtain that
$\Lang(\KS,\FStates)\models \varphi$ if and only if
$\Lang(\KS_\Gamma,\FStates_\Gamma)\models \TMap(\varphi)$.
Thus, together with Proposition~\ref{prop:FormSimpleToProposition}, we obtain the
following result.

\begin{theorem}\label{theor:ReductionToEmptyGammaFragment}
  The (fair) model checking problem against simple $\GHLTL$ can be
  reduced in singly exponential time to fair model checking against
  $\SGHLTL{\emptyset}$.
\end{theorem}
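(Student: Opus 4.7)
The plan is to chain together Proposition~\ref{prop:FormSimpleToProposition}, Remark~\ref{remark:StutteringExtension}, and Proposition~\ref{prop:ExtendedStutterTrace} in the obvious way, so the bulk of the work is verifying that the exponential blow-up is only incurred once and that $\TMap(\varphi)$ genuinely captures $\SGHLTL{\Gamma}$ semantics over $\SGHLTL{\emptyset}$.

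First I would apply Proposition~\ref{prop:FormSimpleToProposition} to the given input $(\KS,\FStates)$ and simple $\GHLTL$ sentence $\varphi$. This yields, in time singly exponential in $|\varphi|$, a fair finite Kripke structure $(\KS',\FStates')$ over an extended alphabet $\AP\,'$ together with a singleton-free $\SGHLTL{\Gamma'}$ sentence $\varphi'$ for some $\Gamma'\subseteq \AP\,'$, such that $\Lang(\KS',\FStates')\models \varphi'$ iff $\Lang(\KS,\FStates)\models \varphi$. If $\Gamma'=\emptyset$, we are already in $\SGHLTL{\emptyset}$ and we output $(\KS',\FStates',\varphi')$. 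Otherwise we proceed to the second reduction below; note that only this step contributes the exponential factor.

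Next, assuming $\Gamma'\neq \emptyset$, I would invoke Proposition~\ref{prop:ExtendedStutterTrace} on $\Gamma'$ and $(\KS',\FStates')$ to obtain, in polynomial time, a fair finite Kripke structure $(\KS'_{\Gamma'},\FStates'_{\Gamma'})$ and an $\LTL$ formula $\theta_{\Gamma'}$ satisfying $\stfr_{\Gamma'}^{\pad}(\Lang(\KS',\FStates'))=\{\trace\in\Lang(\KS'_{\Gamma'},\FStates'_{\Gamma'})\mid \trace\models\theta_{\Gamma'}\}$. Then I would construct $\TMap(\varphi')$ by structural recursion exactly as defined in the text: first form $\TMap_{\pad}(\varphi')$ by replacing every $\Gamma'$-relativized temporal modality by its $\emptyset$-relativized counterpart and every $\exists^{\Pt}\vartrace.\,\phi$ by $\exists^{\Pt}\vartrace.\,\bigl(\TMap_{\pad}(\phi)\wedge \ctx{\vartrace}(\Next\Always\neg\Rel{\pad}{\vartrace}\wedge(\Yesterday\top\rightarrow \Yesterday\Historically\neg\Rel{\pad}{\vartrace}))\bigr)$; then inside each such pointed quantifier additionally conjoin $\ctx{\vartrace}\Once(\neg\Yesterday\top\wedge \Rel{\theta_{\Gamma'}}{\vartrace})$ so that the quantification ranges only over traces of $\Lang(\KS'_{\Gamma'},\FStates'_{\Gamma'})$ that belong to the image $\stfr_{\Gamma'}^{\pad}(\Lang(\KS',\FStates'))$. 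This transformation is polynomial in $|\varphi'|+|\theta_{\Gamma'}|$.

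Correctness follows by combining Remark~\ref{remark:StutteringExtension}, which gives $\Lang(\KS',\FStates')\models \varphi'$ iff $\stfr_{\Gamma'}^{\pad}(\Lang(\KS',\FStates'))\models \TMap_{\pad}(\varphi')$, with Proposition~\ref{prop:ExtendedStutterTrace}, which identifies $\stfr_{\Gamma'}^{\pad}(\Lang(\KS',\FStates'))$ as a $\theta_{\Gamma'}$-definable sublanguage of $\Lang(\KS'_{\Gamma'},\FStates'_{\Gamma'})$. The $\theta_{\Gamma'}$-relativization baked into $\TMap$ exactly restricts pointed hyper quantification to this sublanguage, so $\Lang(\KS'_{\Gamma'},\FStates'_{\Gamma'})\models \TMap(\varphi')$ iff $\stfr_{\Gamma'}^{\pad}(\Lang(\KS',\FStates'))\models \TMap_{\pad}(\varphi')$. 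Chaining these equivalences with Proposition~\ref{prop:FormSimpleToProposition} yields $\Lang(\KS,\FStates)\models \varphi$ iff $\Lang(\KS'_{\Gamma'},\FStates'_{\Gamma'})\models \TMap(\varphi')$. Finally, summing the costs—exponential for Proposition~\ref{prop:FormSimpleToProposition} and polynomial for the remaining steps—gives the overall singly exponential bound. The main obstacle I anticipate is making the chain of equivalences bullet-proof on the case where a pointed quantifier picks a position not lying in the $(\Gamma',\omega)$-stutter factorization: this is precisely why $\pad$-marked insertions are admitted in $\stfr_{\Gamma'}^{\pad}$ and why $\TMap_{\pad}$ restricts the number of $\pad$-positions visible to each bound variable to at most one (namely, the selected position). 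Verifying that this single-$\pad$ discipline exactly mirrors the freedom of choosing an arbitrary position inside a stutter segment in $\SGHLTL{\Gamma'}$ is the heart of the correctness argument, and requires an induction over the structure of singleton-free formulas where the invariant tracks, for each variable in the current assignment, whether the bound position is a stutter-factorization position or a strictly interior one.
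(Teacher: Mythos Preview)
Your proposal is correct and follows essentially the same route as the paper: apply Proposition~\ref{prop:FormSimpleToProposition} to pass to a singleton-free $\SGHLTL{\Gamma'}$ instance with propositional $\Gamma'$, then use Proposition~\ref{prop:ExtendedStutterTrace} and the $\TMap_{\pad}$/$\TMap$ transformations together with Remark~\ref{remark:StutteringExtension} to land in $\SGHLTL{\emptyset}$. The only difference is emphasis: the paper simply asserts Remark~\ref{remark:StutteringExtension} as following from the semantics, whereas you outline the structural induction (tracking, per variable, whether the bound position is a stutter-factorization point or a strictly interior one) that would underlie a fully detailed proof of that remark.
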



 \section{Conclusion}

We have introduced  a novel hyper logic $\GHLTL$  which merges two
known asynchronous temporal logics for hyperproperties, namely
stuttering $\HLTL$ and context $\HLTL$.
Even though model checking of the resulting logic $\GHLTL$ is undecidable, we have
identified a useful fragment, called \emph{simple $\GHLTL$}, that has a decidable model checking,
is strictly more expressive than $\HLTL$ and than previously proposed
 fragments of asynchronous temporal logics for
hyperproperties with a decidable model checking. For the fragment
$\SGHLTL{\emptyset}$ of simple $\GHLTL$, we have given optimal complexity bounds of (fair) model checking
in terms of the strong alternation depth of the given sentence. For arbitrary sentences in simple $\GHLTL$,
(fair) model checking is reduced in exponential time to fair model checking of $\SGHLTL{\emptyset}$.
It is worth noting that simple $\GHLTL$ can express non-regular trace properties
over singleton sets of traces which are not definable in $\MSOE$. An intriguing open
question is whether $\FOE$ can be embedded in simple $\GHLTL$.
In a companion paper, we study
asynchronous properties on finite traces by adapting simple $\GHLTL$ in prenex
form to
finite traces, 
and introduce practical
model-checking algorithms for useful fragments of this logic.


  \bibliographystyle{plainurl}
  \bibliography{biblio}
  \newpage
\appendix




\section{Proofs from Section~\ref{sec:NovelLogic}}\label{app:NovelLogic}

\subsection{Equally-expressive  extension of simple $\GHLTL$}\label{app:extensionSimpleGHLTL}

In this section, we show that the extension of simple $\GHLTL$ as defined at the end of Subsection~\ref{subSec:MeaningfulNovelFragments}
is not more expressive than simple $\GHLTL$. In the following,  a \emph{singleton-context} formula is a
quantifier-free $\GHLTL$ formula
of the form $\ctx{\vartrace}\xi$, where $\xi$  only uses singleton contexts $\ctx{\vartraceAux}$ and temporal modalities with subscript $\emptyset$.
A singleton-context formula is \emph{simple} if it is of the form $\ctx{\vartrace}\Rel{\psi}{\vartrace}$ for some $\PLTL$ formula $\psi$.
Thus, the considered extension of simple $\GHLTL$ is obtained by replacing simple singleton-context sub-formulas with arbitrary singleton-context sub-formulas.
Given two quantifier-free $\GHLTL$ formulas $\varphi$ and $\varphi'$, we say that $\varphi$ and $\varphi'$ are \emph{equivalent} if
(i) $\varphi$ and $\varphi'$ use the same trace variables, and (ii) for each  trace assignment $\TracesMap$ whose domain contains the variables of $\varphi$,
$(\TracesMap,\Var)\models \varphi$ \emph{iff} $(\TracesMap,\Var)\models \varphi'$.
In order to show  that the considered extension of simple $\GHLTL$ has the same expressiveness as simple $\GHLTL$, it suffices to show the following result.

\begin{proposition}\label{prop:ExtesionofSimpleGHLTL} Given a singleton-context  formula $\ctx{\vartrace}\xi$, one can construct a Boolean combination $\varphi$ of simple singleton-context  formulas and relativized propositions such that $\varphi$ and $\ctx{\vartrace}\xi$ are equivalent.
\end{proposition}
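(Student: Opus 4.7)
The key observation is that, once $\ctx{\vartrace}\xi$ switches the context to $\{\vartrace\}$, only $\vartrace$'s pointer evolves during the evaluation of $\xi$; consequently, any sub-formula whose value depends only on frozen pointers behaves as a temporal constant under the $\vartrace$-relativized modalities. Concretely: (i) the ``foreign'' relativized atoms $\Rel{p}{\vartraceAux}$ with $\vartraceAux\neq\vartrace$, and (ii) the foreign simple singleton-context sub-formulas $\ctx{\vartrace'}\Rel{\psi}{\vartrace'}$ with $\vartrace'\neq\vartrace$ are constant under $\{\vartrace\}$, while any ``local'' simple context $\ctx{\vartrace}\Rel{\psi}{\vartrace}$ collapses, under the ambient $\{\vartrace\}$-context, to the plain $\Rel{\psi}{\vartrace}$. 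I will exploit these two facts in a base lemma and bootstrap the general case by induction on the nesting depth of \emph{complex} (i.e.\ non-simple) contexts strictly inside $\xi$.

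\textbf{Base lemma.} Suppose that the only contexts appearing inside $\xi$ are simple singleton contexts, and let $C$ denote the finite set of foreign relativized atoms and foreign simple contexts that occur in $\xi$. For every truth assignment $\mu : C \to \{\top,\bot\}$, I replace each $c\in C$ inside $\xi$ by $\mu(c)$ and each local context $\ctx{\vartrace}\Rel{\phi}{\vartrace}$ by $\Rel{\phi}{\vartrace}$; the resulting formula uses only $\vartrace$-relativized atoms, Booleans and $\emptyset$-subscripted $\PLTL$ modalities, hence equals $\Rel{\psi_\mu}{\vartrace}$ for some $\PLTL$ formula $\psi_\mu$. Because every $c\in C$ is temporally constant under $\{\vartrace\}$, a routine induction on the structure of $\xi$ yields
\[
\ctx{\vartrace}\xi \ \equiv \ \bigvee_{\mu : C \to \{\top,\bot\}}\Bigl(\bigwedge_{\mu(c)=\top} c \ \wedge\ \bigwedge_{\mu(c)=\bot}\neg c\Bigr)\,\wedge\,\ctx{\vartrace}\Rel{\psi_\mu}{\vartrace},
\]
which is a Boolean combination of simple singleton-context formulas and relativized propositions, as required.

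\textbf{Inductive step.} For the general statement I induct on the number $n$ of complex contexts strictly inside $\xi$. If $n=0$, apply the base lemma directly. Otherwise pick an \emph{innermost} complex context $\ctx{\vartraceAux}\chi$ occurring inside $\xi$; by minimality, $\chi$ contains only simple (or no) contexts, so the base lemma, applied to $\ctx{\vartraceAux}\chi$, produces a Boolean combination $B$ of relativized atoms and simple singleton-context formulas with $B\equiv\ctx{\vartraceAux}\chi$. Substituting $B$ for that occurrence in $\xi$ yields an equivalent formula with $n-1$ complex inner contexts, and the inductive hypothesis finishes the argument.

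\textbf{Main obstacle.} The most delicate point is the semantic soundness of the substitution: one must verify that $\ctx{\vartraceAux}\chi$ and its replacement $B$ are evaluated against exactly the same trace assignment at the point where the substitution happens, so that the ``frozen'' foreign positions used implicitly inside $\ctx{\vartraceAux}\chi$ coincide with the values read by the atoms and simple contexts in $B$. This holds because, under the ambient $\{\vartrace\}$-context, every pointer other than $\vartrace$'s is frozen throughout the evaluation of $\xi$, so the simple contexts in $B$ indeed deliver those frozen values. As a side effect, each base-lemma step is exponential in $|C|$, and the iteration up to the maximum nesting depth of complex contexts produces the non-elementary blow-up acknowledged in the paper.
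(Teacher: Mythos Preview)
Your proposal is correct and follows essentially the same approach as the paper: both arguments hinge on the observation that foreign relativized atoms and foreign simple singleton-context sub-formulas are temporally constant under a fixed singleton context, and both eliminate nested contexts by a case-split (your ``guess'' $\mu$, the paper's mapping $\HMap$) over the truth values of these frozen sub-formulas. The only organizational difference is that the paper inducts on the nesting depth of context modalities and processes all outermost contexts simultaneously, whereas you induct on the number of complex contexts and peel off one innermost complex context at a time; both routes yield the same non-elementary blow-up. One small sharpening: in your ``Main obstacle'' paragraph, the substitution of $B$ for $\ctx{\vartraceAux}\chi$ is sound not because of the ambient $\{\vartrace\}$-context (which may already have been overridden if $\ctx{\vartraceAux}\chi$ sits inside another complex context), but because both $\ctx{\vartraceAux}\chi$ and $B$ are \emph{context-insensitive}---their truth depends only on the trace assignment, not on the context argument---so the base-lemma equivalence transfers to every evaluation point.
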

\begin{proof}
For the proof, we need  additional definitions. Let $\Lambda$ be a finite set whose elements are simple singleton-context formulas or relativized atomic propositions.
A \emph{guess} for $\Lambda$  is a mapping $\HMap:\Lambda \mapsto \{\top,\neg\top\}$ assigning to each formula in $\Lambda$ a Boolean value ($\top$ for \emph{true}, and $\neg\top$ for \emph{false}). We denote by $G_\Lambda$ the finite set of guesses for $\Lambda$.

Fix a singleton-context  formula $\ctx{\vartrace}\xi$ which is not simple.  We prove Proposition~\ref{prop:ExtesionofSimpleGHLTL} by induction on  the nesting depth of the context modalities in $\xi$.

For the base case, $\xi$ does not contain context modalities. Let $\Lambda$ be the set of relativized atomic propositions $\Rel{p}{\vartraceAux}$ occurring in $\xi$ such that
$\vartraceAux\neq \vartrace$. Then, the formula $\varphi$ satisfying Proposition~\ref{prop:ExtesionofSimpleGHLTL} is given by
\[
\varphi\DefinedAs\bigvee_{\HMap\in G_\Lambda}(\xi(\HMap)\wedge \bigwedge_{\theta\in \{\theta\in \Lambda\mid \HMap(\theta)=\top\}} \theta \wedge \bigwedge_{\theta\in \{\theta\in \Lambda\mid \HMap(\theta)=\neg\top\}}\neg\theta)
\]
where $\xi(\HMap)$ is obtained from $\xi$ by replacing all occurrences of the sub-formulas $\theta\in\Lambda$ in $\xi$ with $\HMap(\theta)$. Note that
$\varphi$ is a Boolean combination of simple singleton-context formulas and relativized propositions. Correctness of the construction follows from the fact that
by the semantics of $\GHLTL$, the position of the pointed trace assigned to a variable $\vartraceAux$ distinct from $\vartrace$ remains unchanged during the valuation of
the formula $\xi$ within the singleton context $\ctx{\vartrace}$.

For the induction step, assume that $\xi$ contains singleton context modalities.
By the induction hypothesis, one can construct a formula $\xi'$ which is a Boolean combination of simple singleton-context formulas and relativized propositions
such that $\ctx{\vartrace}\xi$ and $\ctx{\vartrace}\xi'$ are equivalent. Let $\Lambda$ be the set consisting of the relativized atomic propositions $\Rel{p}{\vartraceAux}$ occurring in $\xi'$ with $\vartraceAux\neq \vartrace$ which are not in scope of a context modality, and the sub-formulas of $\xi'$ of the form $\ctx{\vartraceAux}\theta$ such that $\vartraceAux\neq \vartrace$ (note that by hypothesis $\ctx{\vartraceAux}\theta$ is a simple singleton-context formula).
Then, the formula $\varphi$ satisfying Proposition~\ref{prop:ExtesionofSimpleGHLTL} is given by
\[
\varphi\DefinedAs\bigvee_{\HMap\in G_\Lambda}(\xi'(\HMap)\wedge \bigwedge_{\theta\in \{\theta\in \Lambda\mid \HMap(\theta)=\top\}} \theta \wedge \bigwedge_{\theta\in \{\theta\in \Lambda\mid \HMap(\theta)=\neg\top\}}\neg\theta)
\]
where $\xi'(\HMap)$ is obtained from $\xi'$ by replacing all occurrences of the sub-formulas $\theta\in\Lambda$ in $\xi'$ with $\HMap(\theta)$, and by removing the singleton context modality $\ctx{\vartrace}$. Correctness
of the construction directly follows from the induction hypothesis and the semantics of $\GHLTL$.
\end{proof}

\subsection{Proof of Theorem~\ref{theorem:EmbeddingKnowledge}}\label{app:EmbeddingKnowledge}

\theoEmbeddingKnowledge*
\begin{proof}
We focus on the construction of the $\GHLTL$ sentence $\varphi$ capturing the $\KLTL$ formula
$\psi$ under the asynchronous semantics w.r.t.~the given observation map $\Obs$. The construction of the
 $\SGHLTL{\emptyset}$ sentence $\varphi_\emptyset$ capturing
$\psi$ under the synchronous semantics w.r.t.~$\Obs$ is similar. We inductively define a mapping $\TMap_\Obs$ assigning to each pair $(\phi,\vartrace)$ consisting of a $\KLTL$ formula $\phi$ and a trace variable $\vartrace\in\Var$, a $\GHLTL$ formula $\TMap_\Obs(\phi,\vartrace)$. Intuitively,
$\vartrace$ is associated to the current evaluated pointed trace. The mapping $\TMap_\Obs$ is homomorphic w.r.t.~the Boolean connectives and the $\LTL$ temporal modalities
(i.e., $\TMap(\Op\phi,\vartrace)\DefinedAs\Op \TMap(\phi,\vartrace)$ for each $\Op\in\{\neg,\Next\}$ and $\TMap(\phi_1\,\Op\,\phi_2,\vartrace)\DefinedAs\TMap(\phi_1,\vartrace)\,\Op\, \TMap(\phi_2,\vartrace)$ for each $\Op\in\{\vee,\Until\}$) and is defined as follows when the given $\KLTL$ formula is an atomic proposition or its root operator is a knowledge modality:
\begin{itemize}
\item $\TMap(p,\vartrace)\DefinedAs \Rel{p}{\vartrace}$;
\item $\TMap(\know_a\phi,\vartrace)\DefinedAs \forall^\Pt\vartraceAux.\, (\theta_{\Obs}(a,\vartrace,\vartraceAux) \rightarrow \TMap_\Obs(\phi,\vartraceAux)) $ where
$\vartraceAux\neq \vartrace$ and\\
$
\theta_{\Obs}(a,\vartrace,\vartraceAux)\DefinedAs \bigwedge_{p\in \Obs(a)}\Historically_{\Obs(a)}(\Rel{p}{\vartrace} \leftrightarrow \Rel{p}{\vartraceAux})\,\wedge\,
\Once_{\Obs(a)}(\ctx{\vartrace}\neg \Yesterday\top \wedge \ctx{\vartraceAux}\neg \Yesterday\top)
$
\end{itemize}

\noindent Intuitively, $\theta_{\Obs}(a,\vartrace,\vartraceAux)$ asserts that for the pointed traces $(\trace,i)$ and $(\trace',i')$ bound to the trace variables
$\vartrace$ and $\vartraceAux$, respectively, $\trace[0,i]$ and $\trace[0,i']$ are asynchronously equivalent for agent $a$ w.r.t.~the given observation map $\Obs$.
Note that $\TMap_{\Obs}(\phi,\vartrace)$ is a simple $\GHLTL$ formula if $\phi$ is in the one-agent fragment of $\KLTL$.
Let $\Lang$ be a set of traces, $\vartrace\in\Var$, $(\trace,i)$ be a pointed trace over $\Lang$, and $\TracesMap$ be a trace assignment over $\Lang$ such that
$\vartrace\in\Dom(\TracesMap)$ and $\TracesMap(\vartrace)= (\trace,i)$. By a straightforward induction on the structure of a $\KLTL$ formula $\varphi$, one can show that
$(\trace,i)\models_{\Lang,\Obs}\phi$ if and only if $(\TracesMap,\Var) \models_\Lang\TMap_{\Obs}(\phi,\vartrace)$. Hence, the desired $\GHLTL$ sentence $\varphi$
is given by $\forall \vartrace.\,\TMap_{\Obs}(\psi,\vartrace)$ and we are done.
\end{proof}

\section{Proofs from Section~\ref{sec:modelCheckingFragment}}\label{app:modelCheckingFragment}

\subsection{Automata for QPTL and Upper bounds of Theorem~\ref{theorem:QPTLsatisfiability}}\label{app:QPTLsatisfiability}

\subsubsection{Automata for QPTL}\label{app:QPTLAutomata}

In this section, we recall the classes of automata exploited in~\cite{BozzelliMP15} for solving satisfiability of full $\QPTL$ with past.
In the  standard automata-theoretic approach for $\QPTL$ formulas $\varphi$ in prenex  form~\cite{SistlaVW87}, first, one converts the quantifier-free part $\psi$ of $\varphi$ into an equivalent nondeterministic B\"{u}chi  automaton ($\NBA$) accepting the set   $\Lang(\psi)$ of traces satisfying $\psi$. Then, by using
the closure of  $\NBA$ definable languages under projection and complementation, one obtains an  $\NBA$ accepting $\Lang(\varphi)$.
This approach would not work for arbitrary $\QPTL$ formulas $\varphi$, where quantifiers can occur in the scope of both past and future temporal modalities. In this case, for a   subformula $\varphi'$ of $\varphi$, we need to keep track of the full set $\Lang_\wp(\varphi')$ of pointed traces  satisfying $\varphi$, and not simply $\Lang(\varphi')$. Thus, the automata-theoretic approach proposed in~\cite{BozzelliMP15}
is based on
a compositional translation of $\QPTL$ formulas into a simple two-way extension of  $\NBA$, called
 \emph{simple two-way  $\NBA$} ($\SNBA$, for short) accepting languages of \emph{pointed traces}.
Moreover, each step of the translation into  $\SNBA$ is based on an intermediate formalism consisting in  a two-way extension  of the class of (one-way) \emph{hesitant alternating automata} ($\HAA$, for short) over infinite words introduced in~\cite{KupfermanVW00}.
We now recall these two classes of automata. Moreover, for the class of two-way $\HAA$, we consider additional requirements which may be fulfilled by some states of the automaton. These extra requirements allow to obtain
a more fine-grained complexity analysis in the translation of two-way $\HAA$ into equivalent    $\SNBA$, and they are important for obtaining an asymptotically optimal automata-theoretic approach for $\QPTL$ in terms of the strong alternation depth of a $\QPTL$ formula.\vspace{0.2cm}

\noindent \textbf{$\SNBA$~\cite{BozzelliMP15}.} An  $\SNBA$ over traces on $\AP$
is a tuple $\Au=\mktuple{Q,Q_0,\TransA,\FStates_-,\FStates_+}$, where  $Q$ is a finite set of states, $Q_0\subseteq Q$ is a set of initial states, $\TransA:Q\times \{\rightarrow,\leftarrow\}\times 2^{\AP}\rightarrow 2^Q$ is a transition function, and $\FStates_-$ and $\FStates_+$ are sets  of accepting states.
Intuitively, the symbols $\rightarrow$
and $\leftarrow$ are used to denote forward and backward  moves.
 A run of $\Au$
over a pointed trace $(\trace,i)$ is a pair $r=(r_\leftarrow,r_\rightarrow)$ such that $r_\rightarrow=q_i,q_{i+1}\ldots$ is an infinite sequence of states, $r_\leftarrow=q'_i,q'_{i-1}\ldots q'_0 q'_{-1}$ is a finite sequence of states, and: (i) $q_i=q'_i\in Q_0$;
(ii) for each  $h\geq i$, $q_{h+1}\in \TransA(q_h,\rightarrow,\trace(h))$; and (iii)
 for each $h\in [0,i]$, $q'_{h-1}\in \TransA(q'_{h},\leftarrow,\trace(h))$.

Thus, starting from the initial position $i$ in the input pointed trace $(\trace,i)$, the automaton splits in two copies: the first one moves forwardly along the suffix of $\trace$ starting from position $i$ and the second one moves backwardly along the prefix $\trace(0)\ldots \trace(i)$.
The run $r=(r_\leftarrow,r_\rightarrow)$ is \emph{accepting} if $q'_{-1}\in \FStates_-$ and $r_\rightarrow$ visits infinitely often some state in $\FStates_+$. A pointed trace $(\trace,i)$ is accepted by $\Au$ if there is an accepting run of $\Au$ over $(\trace,i)$. We denote by $\Lang_\wp(\Au)$ the set of pointed traces accepted by $\Au$ and by $\Lang(\Au)$ the set of traces $\trace$ such that $(\trace,0)\in\Lang_\wp(\Au)$.\vspace{0.2cm}

 \noindent \textbf{Two-way $\HAA$~\cite{BozzelliMP15}.}
Now, we recall the class of two-way $\HAA$.
For a set $X$, $\PosBool(X)$ denotes the set of positive Boolean
formulas over $X$ built from elements in $X$ using $\vee$ and $\wedge$   (we also allow the formulas $\true$ and $\false$).
 For a formula
$\theta\in\PosBool(X)$, a \emph{model} $Y$ of $\theta$ is a subset $Y$
of $X$ which satisfies $\theta$. The model $Y$ of $\theta$ is \emph{minimal}
if no strict subset of $Y$ satisfies $\theta$.

A two-way $\HAA$ $\Au$  over traces on $\AP$
 is a tuple $\Au=\mktuple{ Q,q_0,\TransA, \FStates_-,\FamStratum}$, where $Q$ is a
finite set of states, $q_0\in Q$ is the initial state,  $\TransA:
Q\times 2^{\AP} \rightarrow \PosBool(\{\rightarrow,\leftarrow\}\times Q)$ is a transition
function, $\FStates_-\subseteq Q$ is the \emph{backward acceptance condition}, and $\FamStratum$ is a \emph{strata family} encoding a particular kind of parity acceptance condition and imposing some syntactical constraints on the transition function $\TransA$.
Before defining  $\FamStratum$,
 we give the notion of run which is independent of $\FamStratum$ and $\FStates_-$. We restrict ourselves to
  \emph{memoryless} runs, in which the behavior of the automaton
  depends only on the current input position and current state. Since
  later we will deal only with parity acceptance conditions,
  memoryless runs are sufficient (see e.g.~\cite{Zielonka98}).
Formally, given a pointed trace $(\trace,i)$
and a state $q'\in Q$, an \emph{$(i,q')$-run of $\Au$ over $\trace$} is a
directed graph $\mktuple{V,\Trans,v_0}$ with set of vertices $V\subseteq
(\nat\cup\{-1\})\times Q$ and initial vertex $v_0=(i,q')$. Intuitively, a vertex $(j,q)$ with $j\geq 0$ describes a copy of the automaton which is in state $q$ and reads the $j^{th}$ input position. Additionally, we require that the set of edges $\Trans$ is consistent with the transition function $\TransA$. Formally,
 for every vertex $v=(j,q)\in V$ such that $j\geq 0$, there is a \emph{minimal} model
$\{(\dir_1,q_1),\ldots,(\dir_n,q_n)\}$ of $\TransA(q,\trace(j))$ such that the set of successors of $v=(j,q)$ is $\{(j_1,q_1),\ldots,(j_n,q_n)\}$, where for all $k\in [1,n]$, $j_k=j+1$ if $\dir_k= \rightarrow$, and $j_k=j-1$ otherwise.

\noindent An infinite path $\pi$ of a run is \emph{eventually strictly-forward} whenever $\pi$ has
a suffix of the form $(i,q_1),(i+1,q_2),\ldots$ for some $i\geq 0$.

Now, we formally define $\FamStratum$ and give the semantic notion of acceptance.
 $\FamStratum$ is a \emph{strata family} of the
form
$\FamStratum=\{\mktuple{\rho_1,Q_1,\FStates_1},\ldots,\mktuple{\rho_k,Q_k,\FStates_k}\}$,
where $Q_1,\ldots,Q_k$ is a partition of the set of states $Q$ of $\Au$, and for all $i\in [1,k]$,
$\rho_i\in \{-,\tran,\Bu,\Co\}$ and $\FStates_i\subseteq Q_i$, such that
$\FStates_i=\emptyset$ whenever $\rho_i\in\{\tran,-\}$. A stratum
$\mktuple{\rho_i,Q_i,\FStates_i}$ is called a \emph{negative} stratum if $\rho_i=-$, a
\emph{transient} stratum if $\rho_i=\tran$, a B\"{u}chi 
stratum
(with B\"{u}chi acceptance condition $\FStates_i$) if $\rho_i=\Bu$, and a
coB\"{u}chi 
stratum (with coB\"{u}chi acceptance condition
$\FStates_i$) if $\rho_i=\Co$.  Additionally, 
there is a
partial order $\leq$ on the sets $Q_1,\ldots,Q_k$ such that the
following holds:
\begin{itemize}
\item \emph{Partial order requirement.} Moves from states in $Q_i$ lead to states in
  components $Q_j$ such that
  $Q_j\leq Q_i$;  additionally, if $Q_i$ belongs to a transient stratum, there are no
  moves from $Q_i$ leading to $Q_i$.
\item  \emph{Eventually syntactical requirement.} For all moves $(\dir,q')$ from states  $q\in Q_i$ such that $q'\in Q_i$ as well, the following holds:
 $\dir$ is $\leftarrow$ if the
    stratum of $Q_i$   is negative, and $\dir$ is
    $\rightarrow$ otherwise.

  \item Each component $Q_i$ satisfies one of the following two requirements:
    \begin{itemize}
  \item \emph{Existential requirement}: for all  states $q_i\in Q_i$ and input symbols $a$, if    $\TransA(q,a)$ is rewritten in disjunctive normal form, then each conjunction contains at most one occurrence of a state in $Q_i$.
 \item  \emph{Universal requirement}: for all  states $q_i\in Q_i$ and input symbols $a$, if    $\TransA(q,a)$ is rewritten in conjunctive normal form, then each disjunction contains at most one occurrence of a state in $Q_i$.
 \end{itemize}
\item Components of B\"{u}chi strata satisfy the existential requirement, while components of coB\"{u}chi strata satisfy the universal requirement.
\end{itemize}
The partial order requirement ensures
that every infinite path $\pi$ of a run gets trapped in the component
$Q_i$ of some  stratum.  The
\emph{eventually syntactical requirement}  ensures that $Q_i$
belongs to a B\"{u}chi or coB\"{u}chi stratum and that $\pi$ is
eventually strictly-forward. Moreover, the existential requirement for a component $Q_i$ establishes that
from each state $q\in Q_i$, at most one copy of the automaton is sent to the next input
symbol in component $Q_i$  (all the other copies move to states in strata with order lower than $Q_i$). Finally, the
 universal requirement corresponds to the dual of the existential requirement.
Note that transient components trivially satisfy both the existential and the universal requirement. Is is worth noting that the existential and universal requirements are not considered in~\cite{BozzelliMP15}. On the other hand, they are crucial in the automata-theoretic approach for $\QPTL$. 

Now we define when a run is accepting.  Let $\pi$ be an infinite path
 of  a  run,  $\mktuple{\rho_i,Q_i,\FStates_i}$ be
the B\"{u}chi or coB\"{u}chi stratum in which $\pi$ gets trapped, and
 $\Inf(\pi)$ be the states from $Q$ that occur infinitely many
times in $\pi$. The path $\pi$ is \emph{accepting} whenever
$\Inf(\pi)\cap \FStates_i\neq \emptyset$ if $\rho_i=\Bu$ and $\Inf(\pi)\cap
\FStates_i= \emptyset$ otherwise (i.e. $\pi$ satisfies the corresponding
B\"{u}chi or coB\"{u}chi requirement).
 A run is
\emph{accepting} if: (i) all its infinite paths are accepting and (ii) for each
vertex $(-1,q)$ reachable from the initial vertex, it holds that $q\in \FStates_-$ (recall that $\FStates_-$ is the backward acceptance condition of $\Au$).
 The
\emph{$\omega$-pointed language} $\Lang_{p}(\Au)$ of $\Au$ is the set
of pointed traces $(\trace,i)$  such that there is an
accepting $(i,q_0)$-run of $\Au$ on $\trace$.

The \emph{dual automaton} $\dual{\Au}$ of a two-way $\HAA$ $\Au=\mktuple{Q,q_0,\TransA,F_-,\FamStratum}$ is defined
as $\dual{\Au}=\mktuple{Q,q_0,\dual{\TransA},Q\setminus \FStates_-,\dual{\FamStratum}}$, where
$\dual{\TransA}(q,a)$ is the dual formula of $\TransA(q,a)$ (obtained from $\TransA(q,a)$ by
 switching $\vee$ and $\wedge$, and switching $\true$ and $\false$),
and $\dual{\FamStratum}$ is obtained from $\FamStratum$ by converting
a B\"{u}chi stratum $\mktuple{\Bu,Q_i,\FStates_i}$ into the coB\"{u}chi stratum
$\mktuple{\Co,Q_i,\FStates_i}$ and a coB\"{u}chi stratum $\mktuple{\Co,Q_i,\FStates_i}$ into
the B\"{u}chi stratum $\mktuple{\Bu,Q_i,\FStates_i}$.  By construction the dual automaton $\dual{\Au}$ of $\Au$ is
still a two-way $\HAA$. Following standard arguments (see
e.g.~\cite{Zielonka98}),  the following holds.

\begin{proposition}[\cite{BozzelliMP15}]\label{prop:Dualization}
  The dual automaton $\dual{\Au}$ of a two-way $\HAA$ $\Au$ is a two-way $\HAA$ accepting the complement of
 $\Lang_\wp(\Au)$.
\end{proposition}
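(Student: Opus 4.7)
The plan is to establish the two assertions separately: first, that $\dual{\Au}$ meets the structural definition of a two-way $\HAA$; second, that $\Lang_\wp(\dual{\Au}) = (2^{\AP})^{\omega}\times\nat \setminus \Lang_\wp(\Au)$ by a standard game-theoretic duality argument.

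For the first part I would check the structural constraints one by one, using the fact that dualization touches only (i) the Boolean structure of transition formulas (swapping $\vee\leftrightarrow\wedge$ and $\true\leftrightarrow\false$), (ii) the backward acceptance set (replaced by $Q\setminus \FStates_-$), and (iii) the labels B\"uchi$\leftrightarrow$coB\"uchi of strata (keeping negative and transient strata unchanged). The partition $Q_1,\ldots,Q_k$ is preserved, and since dualization does not introduce or delete pairs $(\dir,q)$ occurring as atoms in the transition formulas, the partial order requirement and the eventually syntactical requirement still hold (transient strata retain no self-loops, direction labels on intra-stratum moves are unchanged). The delicate step is the existential/universal requirement: writing $\TransA(q,a)$ in disjunctive normal form yields $\dual{\TransA}(q,a)$ in conjunctive normal form, and conversely; so components previously satisfying the existential requirement now satisfy the universal one and vice versa. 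This exactly matches the swap of B\"uchi and coB\"uchi labels in $\dual{\FamStratum}$, so the final clause (B\"uchi-existential, coB\"uchi-universal) is preserved.

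For the second part I would use the standard view of acceptance in alternating automata as a two-player parity game on the configuration graph, between an existential player Automaton (resolving disjunctions and choosing minimal models) and a universal player Pathfinder (resolving conjunctions). A pointed trace $(\trace,i)$ belongs to $\Lang_\wp(\Au)$ iff Automaton has a winning strategy from the initial configuration $(i,q_0)$, where the winning condition is the conjunction of: all infinite plays satisfy the B\"uchi/coB\"uchi condition of the stratum in which they get trapped, and every backward configuration $(-1,q)$ visited satisfies $q\in\FStates_-$. Because the stratification induces a finite-priority parity condition on infinite plays, this is a parity game, hence memoryless determined by the Büchi-Landweber / Emerson-Jurdzi\'nski theorem. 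Dualization swaps the Boolean operators (swapping the roles of Automaton and Pathfinder at each node), complements $\FStates_-$ (swapping the backward winning condition), and swaps B\"uchi with coB\"uchi (complementing the parity condition). Hence a winning strategy for Automaton in the game for $\dual{\Au}$ is exactly a winning strategy for Pathfinder in the game for $\Au$. By determinacy, $(\trace,i)\in\Lang_\wp(\dual{\Au})$ iff $(\trace,i)\notin\Lang_\wp(\Au)$, which is the required complementation.

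The main obstacle I expect is the existential/universal requirement step, since this is the only structural property that is genuinely exchanged rather than simply preserved by the dualization; one must verify explicitly that if each disjunct in the DNF of $\TransA(q,a)$ mentions at most one state of $Q_i$, then each disjunct in the CNF of $\dual{\TransA}(q,a)$ does as well, and conversely — which follows from the fact that the dual of a DNF is obtained by taking a CNF whose clauses are exactly the original conjuncts with directions intact. Everything else in the proof is routine bookkeeping on top of the well-known memoryless-determinacy argument.
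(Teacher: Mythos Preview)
Your plan is correct and is precisely the ``standard argument'' the paper has in mind: the paper does not actually prove this proposition but cites it from~\cite{BozzelliMP15}, pointing to~\cite{Zielonka98} for the memoryless-determinacy machinery. Your two-part outline (syntactic check that dualization preserves the $\HAA$ structure, then complementation via determinacy of the associated parity game) is exactly the content of that reference, including the observation that swapping $\vee/\wedge$ exchanges the existential and universal requirements in lockstep with the B\"uchi/coB\"uchi swap.
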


Note that  $\SNBA$ correspond to a subclass of two-way $\HAA$.

\begin{proposition}[\cite{BozzelliMP15}]\label{remark:FromSNWAtoHAA} An  $\SNBA$ $\Au$ can be  converted ``on the fly'' in linear time into a two-way $\HAA$ accepting $\Lang_\wp(\Au)$.
\end{proposition}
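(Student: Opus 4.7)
The plan is to build a two-way \HAA\ $\Au' = \mktuple{Q', q'_0, \TransA', \FStates'_-, \FamStratum'}$ that, starting at position $i$ of its input $(\trace,i)$, guesses an initial \SNBA-state $q_0 \in Q_0$ and immediately forks a forward copy and a backward copy mimicking $r_\rightarrow$ and $r_\leftarrow$ respectively. Concretely, I will take $Q' = \{q_{\mathsf{init}}\} \uplus Q^\rightarrow \uplus Q^\leftarrow$, where $Q^\rightarrow$ and $Q^\leftarrow$ are disjoint tagged copies of $Q$, set $q'_0 = q_{\mathsf{init}}$ and $\FStates'_- = \{q^\leftarrow \mid q\in \FStates_-\}$, and define
\[
  \TransA'(q_{\mathsf{init}}, a) \DefinedAs \bigvee_{q_0\in Q_0}\,\bigvee_{\substack{q_F \in \TransA(q_0, \rightarrow, a)\\ q_B \in \TransA(q_0, \leftarrow, a)}} \bigl((\rightarrow, q_F^\rightarrow)\wedge(\leftarrow, q_B^\leftarrow)\bigr),
\]
\[
  \TransA'(q^\rightarrow, a) \DefinedAs \bigvee_{q' \in \TransA(q, \rightarrow, a)} (\rightarrow, (q')^\rightarrow),\qquad
  \TransA'(q^\leftarrow, a) \DefinedAs \bigvee_{q' \in \TransA(q, \leftarrow, a)} (\leftarrow, (q')^\leftarrow).
\]
The strata family $\FamStratum'$ would consist of three components ordered as $Q^\rightarrow, Q^\leftarrow < \{q_{\mathsf{init}}\}$: $\{q_{\mathsf{init}}\}$ transient, $Q^\rightarrow$ B\"uchi with accepting set $\{q^\rightarrow \mid q\in \FStates_+\}$, and $Q^\leftarrow$ negative.

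Verifying the strata constraints is routine. The partial order requirement holds because every move from $q_{\mathsf{init}}$ targets the lower strata and no transition loops back to $q_{\mathsf{init}}$, while moves inside $Q^\rightarrow$ (resp.~$Q^\leftarrow$) stay in $Q^\rightarrow$ (resp.~$Q^\leftarrow$). The eventually syntactical requirement is immediate, since intra-stratum moves in $Q^\rightarrow$ go forward and those in $Q^\leftarrow$ go backward. Finally, every $\TransA'$-formula is already a pure disjunction whose disjuncts contain at most one element of $Q^\rightarrow$ and at most one of $Q^\leftarrow$, so the existential requirement for the B\"uchi stratum is trivially met.

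Correctness then reduces to translating between the two run notions. From an accepting \SNBA-run $r=(r_\leftarrow, r_\rightarrow)$ on $(\trace,i)$ with $r_\rightarrow = q_i q_{i+1}\cdots$ and $r_\leftarrow = q'_i q'_{i-1}\cdots q'_{-1}$, I build the \HAA-run whose root transition selects $q_0 = q_i = q'_i$ with children $(i+1, q_{i+1}^\rightarrow)$ and $(i-1, (q'_{i-1})^\leftarrow)$, and which then propagates along $r_\rightarrow$ and $r_\leftarrow$; the unique infinite path lies in $Q^\rightarrow$ and meets $\{q^\rightarrow\mid q\in\FStates_+\}$ infinitely often, while the backward branch ends at vertex $(-1, (q'_{-1})^\leftarrow)$ with $(q'_{-1})^\leftarrow\in \FStates'_-$. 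Conversely, because every $\TransA'$-formula restricted to a state of $Q^\rightarrow$ or $Q^\leftarrow$ is a pure disjunction, its minimal models are singletons, so an accepting \HAA-run determines unique forward and backward state sequences which, together with the $q_0$ chosen at $q_{\mathsf{init}}$, yield an accepting \SNBA-run. The construction is on-the-fly and linear in $|\Au|$ because $|Q'| = 2|Q|+1$ and each transition formula is computed locally from $\TransA$. I do not expect any serious obstacle; the only delicate point is the existential requirement for the B\"uchi stratum $Q^\rightarrow$, which is immediate from the nondeterminism of the \SNBA.
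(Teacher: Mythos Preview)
Your construction is correct, and since the paper does not supply its own proof for this proposition (it is simply cited from~\cite{BozzelliMP15}), there is no in-paper argument to compare against in detail. Two small points are worth noting.

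First, the explicit DNF you give for $\TransA'(q_{\mathsf{init}},a)$ has $\sum_{q_0\in Q_0}|\TransA(q_0,\rightarrow,a)|\cdot|\TransA(q_0,\leftarrow,a)|$ disjuncts, which can be quadratic in $|\Au|$ and so threatens the linear-time claim. The obvious fix is the factored form
\[
\TransA'(q_{\mathsf{init}},a)\;=\;\bigvee_{q_0\in Q_0}\Bigl[\bigl(\textstyle\bigvee_{q_F\in\TransA(q_0,\rightarrow,a)}(\rightarrow,q_F^{\rightarrow})\bigr)\wedge\bigl(\textstyle\bigvee_{q_B\in\TransA(q_0,\leftarrow,a)}(\leftarrow,q_B^{\leftarrow})\bigr)\Bigr],
\]
which is linear and whose minimal models are still exactly the pairs you need; the existential requirement for the B\"uchi stratum concerns only transitions \emph{from} states of $Q^{\rightarrow}$, so there is no constraint on the shape of the transient-state formula.

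Second, later in the paper (Subcase~4.3 in the proof of Theorem~\ref{theorem:FromQPTLtoSNBA}) the translation is invoked with the tacit assumption that the resulting two-way \HAA\ has \emph{two} strata, not three. This does not affect the correctness of Proposition~\ref{remark:FromSNWAtoHAA} as you state it, but for compatibility with that downstream use one would absorb $q_{\mathsf{init}}$ into the B\"uchi stratum $Q^{\rightarrow}$: the backward move it emits targets $Q^{\leftarrow}$, a different component, so the eventually-syntactical requirement is unaffected, and each DNF conjunct at $q_{\mathsf{init}}$ still contains a single $Q^{\rightarrow}$-state, so the existential requirement continues to hold.
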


 \noindent \textbf{Additional requirements on components of two-way $\HAA$.} Let   $\Au=\mktuple{Q,q_0,\TransA,F_-,\FamStratum}$ be a two-way \HAA.
 A component $Q_i$ of $\Au$ is \emph{globally nondeterministic} if the following  inductively holds:
 \begin{itemize}
   \item $Q_i$ satisfies the existential requirement and $Q_i$ is not a coB\"{u}chi component;
   \item for each component $Q_k$ distinct from $Q_i$ such that there are moves from $Q_k$ leading to $Q_i$ (hence, $Q_k>Q_i$), (i) $Q_k$ is \emph{globally nondeterministic}, and (ii) for each $q\in Q_k$ and input symbol $a$, if
       $\TransA(q,a)$ is rewritten in disjunctive normal form, then each conjunction contains at most one occurrence of a state in $Q_i$.
     \end{itemize}
     The previous requirement ensures that in each run $r$ and for each input position $\ell$, there is at most one copy of the automaton that is in some state
of $Q_i$ reading position $\ell$. A state of $\Au$ is \emph{globally nondeterministic} if it belongs to some globally nondeterministic component of $\Au$.\vspace{0.2cm}

 \noindent \textbf{From two-way $\HAA$ to $\SNBA$~\cite{BozzelliMP15}.} Two-way $\HAA$ can be translated in singly exponential time
 into equivalent $\SNBA$~\cite{BozzelliMP15}. The translation in~\cite{BozzelliMP15}  is a  generalization of the Miyano-Hayashi
construction~\cite{MiyanoH84} (for converting by an exponential blowup a B\"{u}chi alternating automaton into an equivalent
 $\NBA$)  and exploits the so  called \emph{ranking} construction~\cite{KupfermanV01} for converting in quadratic time
 a coB\"{u}chi alternating automaton  into an  equivalent B\"{u}chi alternating automaton. The translation can be easily adapted for handling
 globally nondeterministic components in two-way $\HAA$. Thus, we obtain the following result.

  \begin{proposition}[\cite{BozzelliMP15}]\label{prop:FromHAAtoSNBA}
  Given a two-way $\HAA$ $\Au$,  one can construct ``on the fly'' and in singly
  exponential time a B\"{u}chi $\SNBA$  accepting $\Lang_\wp(\Au)$ with
  $2^{O(n\cdot \log(n+ k))}$ states, where $k$ is the number of states
  in the globally nondeterministic components of $\Au$,  and $n$ is the number of remaining states of $\Au$.
\end{proposition}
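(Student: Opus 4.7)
The plan is to combine three classical techniques---the Kupferman--Vardi ranking construction~\cite{KupfermanV01} for turning a coB\"uchi condition into a B\"uchi one, the Miyano--Hayashi breakpoint construction~\cite{MiyanoH84} for removing universal branching under a B\"uchi condition, and a subset construction for the backward branch of the alternating run---while treating the globally nondeterministic components by a single-slot encoding so that their contribution to the exponent is only logarithmic. As a first step I would preprocess every coB\"uchi stratum $\langle\Co,Q_i,\FStates_i\rangle$: each $q\in Q_i$ is replaced by pairs $(q,r)$ with $r\in[0,2|Q_i|]$, equipped with the usual rank-monotonicity and strict-decrease-at-$\FStates_i$ discipline. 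This is polynomial in $|Q_i|$, respects the partial order on strata, the existential/universal split, and the globally nondeterministic classification, leaving the remaining (transient, negative, B\"uchi) strata untouched.

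For the actual translation, a state of the target $\SNBA$ is a pair $(\sigma_\leftarrow,\sigma_\rightarrow)$ mimicking the two branches of an $\SNBA$ run. The backward component $\sigma_\leftarrow$ is a subset produced by a standard subset construction on the minimal models of $\TransA(\cdot,\cdot)$ restricted to backward moves; its backward-acceptance requirement is that every state it contains lies in $\FStates_-$. The forward component $\sigma_\rightarrow$ is a Miyano--Hayashi breakpoint pair $(S,O)$, with $S$ the current forward frontier (after ranking) and $O\subseteq S$ the pending B\"uchi obligations; the $\SNBA$ fires its B\"uchi acceptance exactly when $O$ empties, at which point $O$ is reset to the non-accepting part of $S$. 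The eventually-syntactical requirement guarantees that every infinite path of the alternating run is eventually trapped in a single B\"uchi (after ranking) or coB\"uchi stratum and is eventually strictly forward, so this single breakpoint check captures exactly the hesitant acceptance of $\Au$.

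The improved state bound is obtained by splitting $(S,O)$ along the partition of $Q$ into strata. For the $n$ non-globally-nondeterministic states, ranking together with the subset-with-breakpoint encoding costs $n^{O(n)}=2^{O(n\log n)}$ possibilities. For each globally nondeterministic component $Q_i$ the defining invariant forces $|S\cap Q_i|\le 1$ at every position, so I would store a single element of $Q_i\cup\{\bot\}$ rather than a subset; recording to which non-globally-nondeterministic thread each such slot is attached costs a factor $O(k+1)$ per slot, yielding $(n+k)^{O(n)}=2^{O(n\log(n+k))}$ states overall, and successors of an $\SNBA$ state are computed locally so the construction is on-the-fly. The main obstacle will be rigorously maintaining the invariant $|S\cap Q_i|\le 1$ across every Miyano--Hayashi step---including the breakpoint reset $O\leftarrow S\setminus \FStates$---by induction along the partial order on strata, using the minimal-model analysis of $\TransA(q,a)$ in disjunctive normal form to show that a single disjunct (hence at most one outgoing copy into $Q_i$) is picked per position, and then verifying that the single-slot encoding remains sound and complete for the resulting B\"uchi condition.
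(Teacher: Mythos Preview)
The paper does not actually prove this proposition: it cites \cite{BozzelliMP15} and remarks only that the translation there---a generalization of the Miyano--Hayashi breakpoint construction combined with the Kupferman--Vardi ranking for coB\"uchi strata---``can be easily adapted for handling globally nondeterministic components.'' Your plan instantiates precisely these ingredients (ranking on coB\"uchi strata, breakpoint on the resulting B\"uchi alternating automaton, a subset for the backward branch, and a single-slot encoding for each globally nondeterministic component), so it is the same approach the paper points to.

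Two small points deserve tightening. First, the phrase ``recording to which non-globally-nondeterministic thread each such slot is attached'' is structurally backwards: by definition every component that can reach a globally nondeterministic $Q_i$ is itself globally nondeterministic, so the gnd components sit \emph{above} (not below) the alternating part and are not ``attached'' to any non-gnd thread. The honest contribution of the gnd part is $\prod_i(|Q_i|+1)$ over the gnd components, together with one owing bit per B\"uchi gnd component; you should argue the stated bound from that product (in the paper's intended use---Subcase~4.3 of Theorem~\ref{theorem:FromQPTLtoSNBA}---there are only a constant number of gnd components, which is what makes the $\log(n+k)$ exponent work). Second, your backward component $\sigma_\leftarrow$ is described as a plain subset; since globally nondeterministic \emph{negative} strata exist (e.g.\ $\mathcal{S}_-$ in that same subcase) you must apply the single-slot optimisation on the backward side as well, or $\sigma_\leftarrow$ alone would already cost $2^{k}$ states and defeat the bound.
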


\subsubsection{Upper bounds of Theorem~\ref{theorem:QPTLsatisfiability}}

In this section, in order to provide the upper bounds of Theorem~\ref{theorem:QPTLsatisfiability}, we first illustrate  a compositional translation of a $\QPTL$ formula   into an equivalent $\SNBA$ which improves the one given in~\cite{BozzelliMP15}. There, occurrences of temporal modalities immediately preceding propositional quantification always count as additional alternations, and so they lead to an additional exponential blowup in the compositional translation.
 The relevant case in the translation is when the outermost operator of the currently processed sub-formula $\varphi$ (assumed to be in negation normal form)  is a temporal modality (the other cases easily follow from the  closure of $\SNBA$-definable pointed languages under union, intersection, and projection). This case is handled by first building a two-way $\HAA$ $\Au$ accepting $\Lang_\wp(\varphi)$ and then by applying Proposition~\ref{prop:FromHAAtoSNBA}.  The construction of  $\Au$ is obtained by a generalization of the standard linear-time translation of $\LTL$ formulas into B\"{u}chi alternating automata which exploits the (inductively built)  $\SNBA$ associated with the maximal universal and existential sub-formulas of $\varphi$.
 Formally, we establish the following result, where
for a $\QPTL$    formula $\varphi$, we say that $\varphi$  \emph{is of universal-type} if there is a universal sub-formula
$\psi$ of the negation normal form of $\varphi$ such that
$\sad( \psi)=\sad(\varphi)$; otherwise, we say that $\varphi$ \emph{is of existential-type}.

\begin{theorem}\label{theorem:FromQPTLtoSNBA}
Let $\varphi$ be a  $\QPTL$ formula of existential-type (resp., of universal-type)  and $h=\sad(\varphi)$. Then, one can construct ``on the fly'' an  $\SNBA$ $\Au_\varphi$ accepting $\Lang_\wp(\varphi)$   in time $\Tower(h+1,O(|\varphi|))$ (resp., $\Tower(h+2,O(|\varphi|))$).
\end{theorem}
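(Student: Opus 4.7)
I would proceed by structural induction on $\varphi$ in negation normal form, producing at each step an $\SNBA$ whose size is controlled by $\sad$. The base cases ($\varphi=p$ and $\varphi=\neg p$) give a constant-size $\SNBA$. Boolean disjunction and conjunction are handled by the usual union and product constructions on $\SNBA$, and propositional quantification $\exists p.\,\psi$ by projection on the input alphabet; all of these cases cause only polynomial blowup and preserve $\sad$, so they do not increase the tower height.

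The critical case is a temporal root, say $\varphi = \varphi_1 \Until \varphi_2$ (the others being analogous). I would first extend the standard linear-time translation of $\PLTL$ into alternating automata to produce a two-way $\HAA$ $\Au_\varphi$, stopping the recursion at the set $\Psi$ of maximal existential/universal subformulas of $\varphi$. Each $\psi\in\Psi$ is translated inductively into an $\SNBA$ $\Au_\psi$, which by Proposition~\ref{remark:FromSNWAtoHAA} can be viewed as an $\HAA$ and pasted into $\Au_\varphi$ as an $\HAA$ sub-component sitting in a lower stratum than the new Until-stratum. The strata family $\FamStratum$ is assembled so that the partial-order and eventually-syntactical requirements hold trivially (the pasted components are never re-entered from the new stratum), and so that the pasted-in components are globally nondeterministic in the sense of Proposition~\ref{prop:FromHAAtoSNBA}. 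A single application of Proposition~\ref{prop:FromHAAtoSNBA} then yields an $\SNBA$ for $\Lang_\wp(\varphi)$.

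The complexity is accounted for as follows. If $\varphi$ is of existential-type with $\sad(\varphi)=h$, then all members of $\Psi$ of the opposite quantifier polarity have strong alternation depth strictly smaller than $h$, so by the inductive hypothesis they are $\SNBA$ of size at most $\Tower(h,O(|\varphi|))$. By the globally-nondeterministic refinement of Proposition~\ref{prop:FromHAAtoSNBA}, the states inherited from the $\Au_\psi$ contribute only a logarithmic factor to the exponent, while only the $O(|\varphi|)$ new states introduced at the current $\HAA$ layer are genuinely exponentiated, giving $\Tower(h+1,O(|\varphi|))$ overall. For the universal-type case one dualizes the $\HAA$ that would be built for $\neg\varphi$ via Proposition~\ref{prop:Dualization}---which turns a universal-type formula into an existential-type formula of the same depth---obtains an $\SNBA$ for $\Lang_\wp(\neg\varphi)$ in $\Tower(h+1,O(|\varphi|))$, and then complements it by the standard Safra-style construction, paying one further exponential and reaching $\Tower(h+2,O(|\varphi|))$.

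The main obstacle I anticipate is the careful bookkeeping required in the temporal step: the pasted-in $\Au_\psi$ components must be placed in strata of the correct polarity (B\"uchi vs.\ coB\"uchi) and mode (existential vs.\ universal requirement) so that the two conditions of Proposition~\ref{prop:FromHAAtoSNBA}---the partial-order/eventually-syntactical requirement and the globally-nondeterministic condition---are simultaneously satisfied. This is precisely the point at which the construction of~\cite{BozzelliMP15} was suboptimal: temporal modalities interleaved between quantifiers at the same alternation level must be absorbed into the current $\HAA$ layer rather than triggering a fresh $\HAA\to\SNBA$ exponentiation, and the strata must be designed so that the new Until-states layered above the old $\SNBA$ black boxes violate neither the existential nor the universal local requirement. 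Provided this accounting is done carefully, the inductive invariant $|\Au_\varphi|\leq \Tower(\sad(\varphi)+1,O(|\varphi|))$ (respectively $\Tower(\sad(\varphi)+2,O(|\varphi|))$) goes through, and because every step of the construction is performed compositionally on subformulas, the overall procedure runs ``on the fly''.
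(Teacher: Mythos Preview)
Your overall architecture---structural induction in NNF, building a two-way $\HAA$ at each temporal layer and then invoking Proposition~\ref{prop:FromHAAtoSNBA}---matches the paper. The gap is in how you decompose the temporal case and in your use of the globally-nondeterministic refinement.

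You stop the recursion at the full set $\Psi=P$ of maximal existential/universal subformulas of $\varphi=\varphi_1\Until\varphi_2$ and assert that \emph{all} pasted-in $\SNBA$ components can be arranged to be globally nondeterministic. This does not go through. Global nondeterminism of a component $Q_i$ requires every stratum that calls into $Q_i$ to itself be globally nondeterministic, and in particular not coB\"uchi. But the $\PLTL$-style skeleton you build contains a coB\"uchi singleton stratum for every Release/PastRelease subformula; any $\psi\in P$ sitting below such a stratum therefore fails the condition. The definition of $\sad$ does \emph{not} exclude this: an existential $\psi$ with $\sad(\psi)=h$ is permitted to occur in the left argument of a Release inside $\varphi_2$. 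So your blanket claim that ``the pasted-in components are globally nondeterministic'' is false in general, and with it your complexity accounting (``only the $O(|\varphi|)$ new states are genuinely exponentiated'') collapses.

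The paper avoids this by a different decomposition in the critical subcase (some $\psi\in P$ has $\sad(\psi)=h$). It does \emph{not} unfold $\varphi_2$: instead it treats $\varphi_2$ as a single leaf, builds its $\SNBA$ recursively (size $\Tower(h+1,O(|\varphi_2|))$, since $\sad(\varphi_2)=h$ and $\varphi_2$ is existential-type), and pastes that $\SNBA$ in directly below the single B\"uchi Until-stratum $\{\varphi\}$. Because $\{\varphi\}$ is the only caller and is B\"uchi, the two strata of this one $\SNBA$ are globally nondeterministic. The left argument $\varphi_1$ \emph{is} fully unfolded to $P_1$, but by the definition of $\sad$ every $\psi\in P_1$ has $\sad(\psi)<h$; for these the paper does not build $\SNBA$s but rather $\HAA$s of size $\Tower(h,\cdot)$ (for universal $\psi$, by dualizing the $\SNBA$ for $\neg\psi$ via Proposition~\ref{prop:Dualization}, not by building an $\SNBA$ for $\psi$ itself). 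These small $\HAA$s are \emph{not} claimed globally nondeterministic---they count toward $n$, not $k$, in Proposition~\ref{prop:FromHAAtoSNBA}---but being of size only $\Tower(h,\cdot)$ they still yield the $\Tower(h+1,\cdot)$ bound after one exponentiation. Your claim that the universal members of $\Psi$ yield $\SNBA$s of size $\Tower(h,\cdot)$ is also off: a universal-type $\psi$ with $\sad(\psi)=h-1$ gives an $\SNBA$ of size $\Tower(h+1,\cdot)$ by the inductive hypothesis, which is why the paper goes through the $\HAA$ dualization instead.
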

\begin{proof} We assume without loss of generality that $\varphi$ is in negation normal form.
The proof is given by induction on the structure of $\varphi$.  The base case  is trivial. For the induction step, we distinguish four cases depending on the type of root operator of $\varphi$ (either positive  boolean connective, or existential quantifier, or universal quantifier, or temporal modality).\vspace{0.2cm}

 \noindent \textbf{Case 1:}  $\varphi$ is of the form $\varphi=\varphi_1\wedge\varphi_2$ or $\varphi=\varphi_1\vee \varphi_2$. Assume that
  $\varphi=\varphi_1\wedge\varphi_2$ (the other case being similar).
  We use the fact that like B\"{u}chi nondeterministic automata, $\SNBA$ are trivially and efficiently  closed under intersection. In particular, given
two   $\SNBA$ $\Au_1$ and $\Au_2$, one can construct ``on the fly'' and in time $O(|\Au_1||\Au_2|)$ an $\SNBA$
accepting the language $\Lang_\wp(\Au_1)\cap \Lang_\wp(\Au_2)$.
  Since $\sad(\varphi)=\max(\sad(\varphi_1),\sad(\varphi_2))$,  the result easily follows from the induction hypothesis. \vspace{0.2cm}

\noindent \textbf{Case 2:}  $\varphi$ is an existential formula of the form $\varphi=\exists p.\,\varphi'$.
Hence, in particular, $\varphi$ is  of existential-type.
Let $h=\sad(\varphi)$ and $h'=\sad(\varphi')$.
We observe that like B\"{u}chi nondeterministic automata, $\SNBA$ are  efficiently  closed under projection. In particular, given
an  $\SNBA$ $\Au$ over traces and $p\in \AP$, one can construct ``on the fly'' and in linear time an  $\SNBA$
accepting the pointed language $\{(\trace,i)  \mid \text{ there is } (\trace',i)\in\Lang_\wp(\Au) \text{ such that }\trace'=_{\AP\setminus \{p\}}\trace\}$.
Thus, by applying the induction hypothesis, it follows that one can construct ``on the fly'' an  $\SNBA$ accepting $\Lang_\wp(\varphi)$ of size
$\Tower(h'+1,O(|\varphi'|))$ if $\varphi'$ is of existential-type, and of size $\Tower(h'+2,O(|\varphi'|))$ otherwise.
Since $h'= h$ if $\varphi'$ is of existential-type, and $h'=h-1$ otherwise (\ie $\varphi'$ is of universal-type), the result follows.\vspace{0.2cm}

\noindent \textbf{Case 3:}  $\varphi$ is a  universal formula of the form $\varphi=\forall p.\,\varphi'$.
Hence, in particular, $\varphi$ is of universal-type.
Let $h=\sad(\varphi)$ and $\dual{\varphi}\,'$ be the negation normal form of $\neg\varphi'$.
We have that $\sad( \exists p.\, \widetilde{\varphi}\,')=h$ and $\Lang_\wp(\exists p.\, \widetilde{\varphi}\,')=\Lang_\wp(\neg\varphi)$. Hence, by Case~2, one can construct ``on the fly'' an  $\SNBA$ $\Au_{\neg\varphi}$
  of size $\Tower(h+1,O(|\varphi|))$ accepting $\Lang_\wp(\neg\varphi)$. By Propositions~\ref{prop:Dualization}, \ref{remark:FromSNWAtoHAA}  and~\ref{prop:FromHAAtoSNBA},  it follows that one
  can construct ``on the fly'' an  $\SNBA$ $\Au_{\varphi}$
  of size $\Tower(h+2,O(|\varphi|))$ accepting $\Lang_\wp(\varphi)$. Hence, the result follows.\vspace{0.2cm}

\noindent \textbf{Case 4:}  the root operator of $\varphi$ is a temporal modality.
Let $h=\sad(\varphi)$ and
 $P$ be the set of existential and universal sub-formulas of $\varphi$  which do not occur in the scope of a quantifier. If $P=\emptyset$, then
$\varphi$ is a $\PLTL$ formula and $h=0$. In this case, by a straightforward adaptation of the standard translation of $\LTL$ into B\"{u}chi word automata~\cite{Var88},  one can construct a  $\SNBA$  of size  $2^{O(|\varphi|)}$ accepting $\Lang_\wp(\varphi)$. Hence, the result follows.

Assume now that $P\neq \emptyset$. Then, $\varphi$ can be viewed as a $\PLTL$ formula in negation normal form, written $\PLTL(\varphi)$, over the set of atomic propositions given by $P\cup\AP$.\vspace{0.1cm}

\noindent \textbf{Subcase 4.1:} we first assume that for each $\psi\in P$, $\sad(\psi)<\sad(\varphi)$. Hence, for all $\psi\in P$,
$\sad(\psi)\leq h-1$ and $h\geq 1$. Note that
$\varphi$ is of existential-type.  Moreover, for each universal formula $\forall p.\, \xi\in P$, we have that
$\sad(\forall p.\,\xi)=\sad(\neg\exists p.\,\widetilde{\xi})$ and
$\Lang_\wp(\forall p.\,\xi)=\Lang_\wp(\neg\exists p.\,\widetilde{\xi})$, where $\widetilde{\xi}$ is the negation normal form
of $\xi$. Thus, since each existential formula is of existential-type,
 by applying the induction hypothesis, Proposition~\ref{remark:FromSNWAtoHAA} and the complementation result for two-way $\HAA$ (see Proposition~\ref{prop:Dualization}), it follows that for each $\psi\in P$, one can construct ``on the fly'' in time at most $\Tower(h,O(|\psi|))$, a two-way $\HAA$ $\Au_\psi$ accepting $\Lang_\wp(\psi)$. Then, by an easy generalization of the standard linear-time translation of $\LTL$ formulas into B\"{u}chi alternating word automata and by using the two-way $\HAA$ $\Au_\psi$ with $\psi\in P$, we show that one can construct ``on the fly'',
 in time $\Tower(h,O(|\varphi|))$, a two-way $\HAA$ $\Au_\varphi$ accepting $\Lang_\wp(\varphi)$.
 Intuitively, given an input pointed trace, each copy of  $\Au_\varphi$ keeps track of the current subformula of $\PLTL(\varphi)$ which needs to be evaluated. The evaluation simulates the semantics of $\PLTL$ 
 by using universal and existential branching, but when the current subformula $\psi$ is in $P$, then the current copy of $\Au_\varphi$ activates a copy of $\Au_\psi$ in the initial state.

 Formally, for each $\psi\in P$, let $\Au_\psi=\mktuple{Q_\psi,q_\psi,\TransA_\psi, \FStates_\psi^{-}, \FamStratum_\psi}$. Without loss of generality, we assume that the state sets of the two-way $\Au_\psi$ are pairwise distinct. Then, $\Au_\varphi=\mktuple{Q,q_0,\TransA,\FStates_-,\FamStratum}$, where:
 \begin{itemize}
   \item $Q = \displaystyle{\bigcup_{\psi\in P}Q_\psi}\cup \Sub(\varphi)$, where $\Sub(\varphi)$ is the set of subformulas of $\PLTL(\varphi)$;
   \item $q_0=\varphi$;
   \item The transition function $\TransA$ is defined as follows: $\TransA(q,a) = \TransA_\psi(q,a)$ if $q\in Q_\psi$ for some $\psi\in P$.
   If instead  $ q\in \Sub(\varphi)$, then $\TransA(q,a)$ is inductively defined as follows:
   \begin{itemize}
     \item $\TransA(p,a) = \true$ if $p\in a$, and $\TransA(p,a) = \false$ otherwise (for all $p\in\AP\cap \Sub(\varphi)$);
     \item $\TransA(\neg p,a) = \false$ if $p\in a$, and $\TransA(\neg p,a) = \true$ otherwise (for all $p\in\AP\cap \Sub(\varphi)$);
     \item $\TransA(\phi_1\wedge \phi_2,a)=\TransA(\phi_1,a)\wedge\TransA(\phi_2,a)$ and $\TransA(\phi_1\vee \phi_2,a)=\TransA(\phi_1,a)\vee\TransA(\phi_2,a)$;
     \item $\TransA(\Next\phi,a)=(\rightarrow,\phi)$ and $\TransA(\Yesterday\phi,a)=(\leftarrow,\phi)$;
      \item $\TransA(\phi_1\Until \phi_2,a)=\TransA(\phi_2,a)\vee (\TransA(\phi_1,a)\wedge (\rightarrow,\phi_1\Until \phi_2))$;
       \item $\TransA(\phi_1\Since \phi_2,a)=\TransA(\phi_2,a)\vee (\TransA(\phi_1,a)\wedge (\leftarrow,\phi_1\Since \phi_2))$;
       \item $\TransA(\phi_1\Release \phi_2,a)=\TransA(\phi_2,a)\wedge (\TransA(\phi_1,a)\vee (\rightarrow,\phi_1\Release \phi_2))$;
       \item $\TransA(\phi_1\PastRelease \phi_2,a)=\TransA(\phi_2,a)\wedge (\TransA(\phi_1,a)\vee (\leftarrow,\phi_1\PastRelease \phi_2))$;
      \item for each $\psi\in P$, $\TransA(\psi,a)=\TransA_\psi(q_\psi, \sigma)$.
   \end{itemize}
   \item $\FStates_-=\displaystyle{\bigcup_{\psi\in P}}\FStates_\psi^{-}$
   \item $\FamStratum = \displaystyle{\bigcup_{\psi\in P}\FamStratum_\psi} \cup \bigcup_{\phi\in \Sub(\varphi)}\{\mathcal{S}_\phi$\}, where for each
   $\phi\in \Sub(\varphi)$, $\mathcal{S}_\phi$ is defined as follows:
   \begin{itemize}
     \item if $\phi$ has as root a past temporal modality, then $\mathcal{S}_\phi$ is the negative stratum $(-,\{\phi\},\emptyset)$;
     \item if $\phi$ has as root the (future) until modality, then $\mathcal{S}_\phi$ is the B\"{u}chi stratum $(\Bu,\{\phi\},\emptyset)$;
          \item if $\phi$ has as root the (future) release modality, then $\mathcal{S}_\phi$ is the coB\"{u}chi stratum  $(\Co,\{\phi\},\emptyset)$;
     \item otherwise, $\mathcal{S}_\phi$ is the transient stratum given by $(\tran,\{\phi\},\emptyset)$.
   \end{itemize}
 \end{itemize}
 Finally, since $h\geq 1$ and the size of  the two-way $\HAA$ $\Au_\varphi$ is $\Tower(h,O(|\varphi|))$,
by applying Proposition~\ref{prop:FromHAAtoSNBA}, one can construct ``on the fly'' an   $\SNBA$ accepting $\Lang_\wp(\varphi)$ of size
$\Tower(h+1,O(|\varphi|))$. Hence, the result follows.\vspace{0.2cm}

\noindent \textbf{Subcase 4.2:} for some $\psi\in P$, $\sad(\psi)= \sad(\varphi)$, and either $\varphi=\Next\varphi_1$ or $\varphi=\Yesterday\varphi_2$.
This case easily follows from the induction hypothesis and the fact that given an $\SNBA$ $\Au$ one can easily construct in linear time
two $\SNBA$ $\Au_\Next$ and $\Au_\Yesterday$ such that $\Lang_\wp(\Au_\Next)=\{(\trace,i)\mid (\trace,i+1)\in \Lang_\wp(\Au)\}$
and $\Lang_\wp(\Au_\Yesterday)=\{(\trace,i)\mid i>0 \text{ and } (\trace,i-1)\in \Lang_\wp(\Au)\}$.\vspace{0.2cm}

\noindent \textbf{Subcase 4.3:} for some $\psi\in P$, $\sad(\psi)= \sad(\varphi)=h$, and either $\varphi=\varphi_1\Until\varphi_2$ or $\varphi=\varphi_1\Since\varphi_2$.
Let $P_1$ be the set of universal and existential sub-formulas of $\varphi_1$. Since $\psi$ is a universal or existential sub-formula of $\varphi$ and
$\sad(\psi)= \sad(\varphi)$, by definition of the strong alternation depth, it follows that (i) $\varphi$ and $\varphi_2$ are of existential-type, (ii) $\sad(\varphi_2)=\sad(\varphi)$, and (iii) for each $\psi_1\in P_1$, $\sad(\psi_1)<h$. We focus on the case where $\varphi= \varphi_1\Until\varphi_2$ (the case where $\varphi=\varphi_1\Since\varphi_2$ is similar).  
By the induction hypothesis, one can construct an $\SNBA$ $\Au_2$ in time $\Tower(h+1,O(|\varphi|))$ accepting $\Lang_\wp(\varphi_2)$
 We distinguish two cases:
 \begin{itemize}
   \item $h=0$: hence, $P_1=\emptyset$ and $\varphi_1$ is a $\PLTL$ formula.
   By an easy adaptation of the standard translation of $\LTL$ into B\"{u}chi word automata~\cite{Var88} and by exploiting the $\SNBA$ $\Au_2$ for the formula $\varphi_2$,  one can construct an  $\SNBA$ $\Au_\varphi$  of size  $2^{O(|\varphi|)}$ accepting $\Lang_\wp(\varphi)$.
     Intuitively, given an input pointed trace  $(\trace,i)$, $\Au_\varphi$ guesses a position $j\geq i$ and checks that $(\trace,j)\in\Lang_\wp(\Au)$ and for all $k\in [i,j)$, $(\trace,k)\models \varphi_1$ as follows.
      Initially, $\Au_\varphi$ keeps track of both the guessed set $\Lambda_0$ of sub-formulas of $\varphi_1$ holding at the current position $i$, and the guessed state $q$ of $\Au_2$ which represents the state where the backward copy of $\Au_2$ would be on reading the
      $i^{th}$ position of $\trace$ in some guessed accepting run of $\Au_2$ over $(\trace,j)$. If $j=i$, then $q$ needs to be some initial state of $\Au_2$, and $\Au_2$ simply simulates the behavior of $\Au_2$ on $(\trace,i)$ and propagates the guesses about the sub-formulas of $\varphi_1$ in accordance to the semantics of $\PLTL$. Otherwise, $\Au_\varphi$ splits in two copies: the backward copy simulates the backward copy of $\Au_2$ and \emph{deterministically} checks that the initial guessed set $\Gamma_0$ of sub-formulas of $\varphi_1$ contains $\varphi_1$ and is consistent in the interval of positions $[0,i]$, while the forward copy of $\Au_\varphi$ behaves as follows. In the first step, the forward copy of $\Au_\varphi$ moves to the same state $(q,\Gamma_0)$, and after this step, such a copy starts to simulate  in forward-mode the backward copy of $\Au_2$ until, possibly, a `switch' occurs at the guessed position $j$, where the forward copy of $\Au_\varphi$ simulates in a unique step from the current state some initial split of $\Au_2$ in the backward and forward copy. In the phase before the switch, the current guessed set of sub-formulas always contains $\varphi_1$.
      After such a switch (if any), the forward copy of $\Au_\varphi$ simply simulates the forward copy of $\Au_2$ and propagates the guesses about the sub-formulas of $\varphi_1$ in accordance to the semantics of $\PLTL$. We use two flags to distinguish the different phases of the simulation (in particular, the initial phase and the switch phase). The acceptance condition is a  generalized B\"{u}chi condition which can be converted into a B\"{u}chi condition in a standard way.
   \item $h\geq 1$: in this case, we construct a two-way $\HAA$ $\Au_\varphi$ accepting $\Lang_\wp(\varphi)$ as done for the subcase~4.1 but we replace
   the set $P$ with the set $P_1\cup \{\varphi_2\}$. Note that for each $\psi\in P_1$, being $\sad(\psi)<h$, the two-way $\HAA$ associated with formula $\psi$ has size
   at most $\Tower(h,O(|\psi|))$. Moreover, let $\Au_{\varphi_2}$ be the two-way $\HAA$ associated with the $\SNBA$ $\Au_2$ accepting $\Lang_\wp(\varphi_2)$.
   $\Au_{\varphi_2}$ has size at most $\Tower(h+1,O(|\varphi_2|))$ and since it is the two-way $\HAA$ associated to an $\SNBA$, it has just two strata: a negative \emph{existential} stratum, say $\mathcal{S}_-$, and a B\"{u}chi stratum (a B\"{u}chi stratum is always existential), say $\mathcal{S}_{\Bu}$. Thus, applying the construction illustrated
   for the subcase~4.1, we have that the upper stratum $\mathcal{S}_{\varphi}$ of the two-way $\HAA$ $\Au_\varphi$ associated with the formula $\varphi=\varphi_1\Until\varphi_2$  is a
   B\"{u}chi stratum consisting of the single state $\varphi$ which is the initial state. Additionally, $\mathcal{S}_{\varphi}$ is the unique stratum from which it is possible to move to the strata of $\Au_{\varphi_2}$, and for all input symbols $a$, if $\TransA_\varphi(\varphi,a)$ is rewritten in disjunctive normal
   form, then each disjunction contains at most one state from $\mathcal{S}_{-}$ and at most one state from $\mathcal{S}_{\Bu}$. In other terms, the strata
   $\mathcal{S}_{-}$ and $\mathcal{S}_{\Bu}$ are globally nondeterministic in $\Au_\varphi$. Thus, $\Au_\varphi$ has at most  $\Tower(h+1,O(|\varphi_2|))$
   globally nondeterministic states while the number of remaining states is at most $\Tower(h,O(|\varphi_1|))$. By applying Proposition~\ref{prop:FromHAAtoSNBA}, one can construct ``on the fly'' an   $\SNBA$ accepting $\Lang_\wp(\varphi)$ of size
$\Tower(h+1,O(|\varphi|))$. Hence, the result follows.
 \end{itemize}

\noindent \textbf{Subcase 4.4:} for some $\psi\in P$, $\sad(\psi)= \sad(\varphi)$, and either $\varphi=\varphi_1\Release\varphi_2$ or $\varphi=\varphi_1\PastRelease\varphi_2$.  Since $\psi$ is a universal or existential sub-formula of $\varphi_1$ or $\varphi_2$, by definition
of $\sad(\varphi)$, $\psi$ must be a universal formula. Hence, $\varphi$ is of universal-type.
      Let $\dual{\varphi}\,'$ be the negation normal form of $\neg\varphi'$.
We have that $\sad(\dual{\varphi}\,')=h$ and $\Lang_\wp(\dual{\varphi}\,')=\Lang_\wp(\neg\varphi)$. Hence, by the subcase 4.3, one can construct ``on the fly'' an  $\SNBA$ $\Au_{\neg\varphi}$
  of size $\Tower(h+1,O(|\varphi|))$ accepting $\Lang_\wp(\neg\varphi)$. By Propositions~\ref{prop:Dualization}, \ref{remark:FromSNWAtoHAA}  and~\ref{prop:FromHAAtoSNBA}, it follows that one
  can construct ``on the fly'' an  $\SNBA$ $\Au_{\varphi}$
  of size $\Tower(h+2,O(|\varphi|))$ accepting $\Lang_\wp(\varphi)$. Hence, the result follows. 
\vspace{0.1cm}

\noindent This concludes the proof of Theorem~\ref{theorem:FromQPTLtoSNBA}.
\end{proof}

By exploiting Theorem~\ref{theorem:FromQPTLtoSNBA}, we can provide the upper bounds of Theorem~\ref{theorem:QPTLsatisfiability}.

\begin{theorem}\label{theorem:UpperBoundsQPTLsatisfiability}
For all $h\geq 0$, satisfiability of $\QPTL$ sentences $\varphi$ with strong alternation depth at most $h$  is in  $h$-\EXPSPACE.
\end{theorem}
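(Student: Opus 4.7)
The plan is to reduce satisfiability of a $\QPTL$ sentence $\varphi$ with $\sad(\varphi)\le h$ to a pointed-trace membership question on the $\SNBA$ supplied by Theorem~\ref{theorem:FromQPTLtoSNBA}. First, I would observe that since every atomic proposition and every temporal modality of a sentence lies in the scope of a propositional quantifier, the truth value $(\trace,0)\models\varphi$ is independent of $\trace$; hence satisfiability of $\varphi$ coincides with validity and with membership of any fixed pointed trace $(\trace_0,0)$ in $\Lang_\wp(\varphi)$, where $\trace_0$ may be chosen canonically, say the trace with empty valuation at every position.

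Second, I would establish a small structural lemma: every $\exists$-rooted sentence $\exists p.\varphi_1$ is of existential-type. This follows by direct inspection of the two cases of the inductive rule for $\sad(\exists p.\varphi_1)$: either $\sad(\exists p.\varphi_1)=\sad(\varphi_1)$, in which case by definition no universal subformula of $\varphi_1$ has $\sad$ equal to $\sad(\varphi_1)$; or $\sad(\exists p.\varphi_1)=\sad(\varphi_1)+1$, in which case every universal subformula of $\varphi_1$ has $\sad\le\sad(\varphi_1)<\sad(\exists p.\varphi_1)$. In either case no universal subformula of $\exists p.\varphi_1$ attains $\sad(\exists p.\varphi_1)$. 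Dually, every $\forall$-rooted sentence $\forall p.\varphi_1$ is universal-type, but its NNF-negation $\exists p.\dual{\neg\varphi_1}$ is $\exists$-rooted, hence existential-type, with the same strong alternation depth.

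Third, I would proceed by induction on the structure of $\varphi$ in negation normal form. When the root is a constant or an $\exists$-quantifier, $\varphi$ is existential-type with $\sad(\varphi)\le h$, so Theorem~\ref{theorem:FromQPTLtoSNBA} produces an on-the-fly $\SNBA$ $\Au_\varphi$ of size $\Tower(h+1,O(|\varphi|))$ accepting $\Lang_\wp(\varphi)$; checking $(\trace_0,0)\in\Lang_\wp(\Au_\varphi)$ reduces to nondeterministically guessing an accepting run while storing only the current state, which requires $O(\log \Tower(h+1,O(|\varphi|)))=\Tower(h,O(|\varphi|))$ bits, i.e., N-$h$-\EXPSPACE, and hence $h$-\EXPSPACE by Savitch's theorem. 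When the root is a $\forall$-quantifier, I reduce to the previous case via $\dual{\neg\varphi}$ (an $\exists$-rooted, existential-type sentence with the same $\sad$), deciding its satisfiability within $h$-\EXPSPACE and complementing the outcome using closure of $h$-\EXPSPACE under complement. When the root is $\vee$ or $\wedge$, the immediate subformulas are themselves sentences with $\sad\le h$; and for sentences, satisfiability coincides with truth at $(\trace_0,0)$, which distributes over Boolean connectives, so the algorithm recurses on each subformula. The recursion has depth $O(|\varphi|)$ with each branch operating within $\Tower(h,O(|\varphi|))$ space, and the overall procedure fits in $h$-\EXPSPACE.

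The main technical obstacle I anticipate is precisely the structural lemma that $\exists$-rooted sentences are existential-type (and its dual use for $\forall$-rooted sentences through negation): this is what sidesteps the cost of the universal-type case of Theorem~\ref{theorem:FromQPTLtoSNBA}, which would yield an $\SNBA$ of size $\Tower(h+2,O(|\varphi|))$ and only the weaker bound $(h+1)$-\EXPSPACE. Exploiting the fact that for sentences satisfiability coincides with validity is precisely what makes the detour through the negation sound and recovers the tight $h$-\EXPSPACE upper bound; verifying that the Boolean decomposition preserves the $\sad$ bound on immediate subformulas is a routine check of the $\sad$ rule for $\vee$ and $\wedge$.
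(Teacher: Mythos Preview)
Your proposal is correct and follows essentially the same approach as the paper's proof: both use that satisfiability equals validity for $\QPTL$ sentences, reduce via Boolean decomposition to the $\exists$-rooted and $\forall$-rooted cases, observe that $\exists$-rooted sentences are of existential-type so that Theorem~\ref{theorem:FromQPTLtoSNBA} yields an $\SNBA$ of size $\Tower(h+1,O(|\varphi|))$, and handle the $\forall$-rooted case by negating (the paper phrases this as ``$\varphi$ is satisfiable iff $\exists p.\neg\varphi'$ is unsatisfiable''). The only cosmetic difference is that the paper checks non-emptiness of the resulting B\"uchi automaton whereas you check membership of a fixed pointed trace $(\trace_0,0)$; for sentences these are equivalent and both lie in $\NLOGSPACE$ relative to the automaton size.
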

\begin{proof}
We  observe that a $\QPTL$ sentence is satisfiable iff it is valid. Thus, since a $\QPTL$ sentence in negation normal form is a positive Boolean combination
of universal and existential sentences, it suffices to show the result for existential and universal $\QPTL$ sentences $\varphi$:
\begin{itemize}
  \item $\varphi=\exists p.\,\varphi'$: hence, $\varphi$ is of existential-type. By Theorem~\ref{theorem:FromQPTLtoSNBA}, one can construct ``on the fly'' an  $\SNBA$ $\Au_\varphi$ accepting $\Lang_\wp(\varphi)$   in time $\Tower(h+1,O(|\varphi|))$. We observe that an $\SNBA$ $\Au$ can be trivially converted into a B\"{u}chi nondeterminstic automaton accepting the set of traces $\trace$ such that $(\trace,0)\in \Lang_\wp(\Au)$. Thus, since  checking non-emptiness for B\"{u}chi nondeterministic automata is in \NLOGSPACE, the result follows.
  \item $\varphi=\forall p.\,\varphi'$: since $\varphi$ is a   sentence, we have that $\varphi$ is satisfiable iff  $\exists p.\,\neg\varphi$ is unsatisfiable.
  Thus, since $\sad(\varphi)=\sad(\exists p.\,\neg\varphi)$, this case reduces to the previous case.\qedhere
\end{itemize}
\end{proof}

\subsection{Detailed proof of Theorem~\ref{theorem:fromEmptyGammaFragmentToQPTL}}\label{app:fromEmptyGammaFragmentToQPTL}

For a $\QPTL$ formula $\varphi$ and $\AP'\subseteq \AP$ with $\AP' =\{p_1,\ldots, p_n\}$, we write $\exists \AP'.\varphi$ to mean
$\exists p_1.\ldots \exists p_n.\, \varphi$.
Given a fair Kripke structure $(\KS,\FStates)$, a \emph{$(\KS,\FStates)$-assignment} is a partial mapping $\TracesMap$ over $\Var$ assigning to each trace variable $\vartrace$ in its domain
$\Dom(\vartrace)$ a pair $(\Path,i)$ consisting of a $\FStates$-fair  path of $\KS$ and a position $i\geq 0$.

\theoFromEmptyGammaFragmentToQPTL*
 \begin{proof}
Let $\KS= \mktuple{\States,\States_0,\Trans,\Lab}$. The high-level description of the  reduction of model checking $(\KS,\FStates)$ against $\varphi$
  to $\QPTL$ satisfiability has been given in Section~\ref{sec:modelCheckingFragment}.  Here, we  provide the details of the reduction. We consider a new finite set $\AP'$ of atomic propositions defined as follows:
\[
\begin{array}{l}
\AP':= \bigcup_{\vartrace\in \Var}\AP_{\vartrace} \cup \States_{\vartrace} \cup \{\pad_{\overleftarrow{\vartrace}},\pad_{\overrightarrow{\vartrace}}\} \\
 \AP_{\vartrace} := \{p_{\overleftarrow{\vartrace}},p_{\overrightarrow{\vartrace}}\mid p\in\AP\} \text{ and }  \States_{\vartrace} :=\{\state_{\overleftarrow{\vartrace}},\state_{\overrightarrow{\vartrace}} \mid \state\in \States\}
\end{array}
\]
Let $\AP_{\overleftarrow{\vartrace}}:= \{p_{\overleftarrow{\vartrace}}\mid p\in\AP\}$, $\AP_{\overrightarrow{\vartrace}}:= \{p_{\overrightarrow{\vartrace}}\mid p\in\AP\}$,
$\States_{\overleftarrow{\vartrace}} :=\{\state_{\overleftarrow{\vartrace}} \mid \state\in \States\}$, and
$\States_{\overrightarrow{\vartrace}} :=\{\state_{\overrightarrow{\vartrace}} \mid \state\in \States\}$.
Thus, we associate to each variable $\vartrace\in \Var$ and atomic proposition $p\in \AP$, two fresh atomic propositions $p_{\overleftarrow{\vartrace}}$ and
$p_{\overrightarrow{\vartrace}}$, and to each variable $\vartrace\in \Var$ and state $\state$ of $\KS$, two fresh atomic proposition $\state_{\overleftarrow{\vartrace}}$ and $\state_{\overrightarrow{\vartrace}}$. Moreover, for each $\vartrace\in \Var$, $\pad_{\overleftarrow{\vartrace}}$ and $\pad_{\overrightarrow{\vartrace}}$ are exploited as padding propositions.\vspace{0.2cm}

\noindent \emph{Encoding of paths and pointed paths of $\KS$.} For each $\vartrace\in \Var$ and  an infinite word  $\Path=\state_0,\state_1,\ldots$ over $\States$, we denote by $\overrightarrow{\trace}(\vartrace,\Path)$   the trace over
$\AP_{\overrightarrow{\vartrace}}\cup \States_{\overrightarrow{\vartrace}}$ defined as follows for all $i\geq 0$:
\[
\overrightarrow{\trace}(\vartrace,\Path)(i):=\{(\state_i)_{\overrightarrow{\vartrace}}\}\cup \{p_{\overrightarrow{\vartrace}}\mid p\in \Lab(\state_i)\}
\]
For each finite word $\Path$ over $\States$, let $\overleftarrow{\trace}(\vartrace,\Path)$ be the trace over $\AP_{\overleftarrow{\vartrace}}\cup \States_{\overleftarrow{\vartrace}}$ given by $\{\pad_{\overleftarrow{\vartrace}}\}\cdot w\cdot \{\pad_{\overleftarrow{\vartrace}}\}^{\omega}$, where $|w|=|\Path|$ and for each $0\leq  i<|\Path|$,
$w(i)=\{(\state_i)_{\overleftarrow{\vartrace}}\}\cup \{p_{\overleftarrow{\vartrace}}\mid p\in \Lab(\state_i)\}$.
Now, we define the forward and backward encodings of a path $\Path$ of $\KS$. For each $k\in\nat$, the \emph{forward $\vartrace$-encoding of $\Path$ with offset $k$}  is the trace over $\AP_{\vartrace}\cup \States_{\vartrace}\cup\{\pad_{\overleftarrow{\vartrace}},\pad_{\overrightarrow{\vartrace}}\}$, denoted by $\overrightarrow{\trace}(\vartrace,\Path,k)$, defined as follows:
\begin{itemize}
  \item the projection of $\overrightarrow{\trace}(\vartrace,\Path,k)$ over $\AP_{\overrightarrow{\vartrace}}\cup \States_{\overrightarrow{\vartrace}}\cup\{\pad_{\overrightarrow{\vartrace}}\}$ is $\{\pad_{\overrightarrow{\vartrace}}\}^{k}\cdot \overrightarrow{\trace}(\vartrace,\Path)$;
  \item the projection of $\overrightarrow{\trace}(\vartrace,\Path,k)$ over $\AP_{\overleftarrow{\vartrace}}\cup \States_{\overleftarrow{\vartrace}}\cup\{\pad_{\overleftarrow{\vartrace}}\}$ is $\{\pad_{\overleftarrow{\vartrace}}\}^{\omega}$.
\end{itemize}

For each $k>0$,   the \emph{backward $\vartrace$-encoding of $\Path$ with offset $k>0$}, is the trace over $\AP_{\vartrace}\cup \States_{\vartrace}\cup\{\pad_{\overleftarrow{\vartrace}},\pad_{\overrightarrow{\vartrace}}\}$, denoted by $\overleftarrow{\trace}(\vartrace,\Path,k)$, defined as follows, where $\Path_{\leq k}^{R}$ denotes the reverse of the prefix $\Path(0)\ldots \Path(k-1)$ of $\Path$ until position $k-1$:
\begin{itemize}
  \item the projection of $\overrightarrow{\trace}(\vartrace,\Path,k)$ over $\AP_{\overrightarrow{\vartrace}}\cup \States_{\overrightarrow{\vartrace}}\cup\{\pad_{\overrightarrow{\vartrace}}\}$ is $ \overrightarrow{\trace}(\vartrace,\Path^{k})$;
  \item the projection of $\overrightarrow{\trace}(\vartrace,\Path,k)$ over $\AP_{\overleftarrow{\vartrace}}\cup \States_{\overleftarrow{\vartrace}}\cup\{\pad_{\overleftarrow{\vartrace}}\}$ is $ \overleftarrow{\trace}(\vartrace,\Path_{\leq k}^{R})$.
\end{itemize}



\noindent \emph{Claim 1.}
  For all $\vartrace \in\Var$, one can construct in linear time two $\PLTL$ formulas $\theta(\vartrace,\rightarrow)$ and $\theta(\vartrace,\leftarrow)$ such that
  for all  pointed traces $(\trace,i)$ over $\AP'$, we have:
 \begin{itemize}
   \item $(\trace,i)\models\theta(\vartrace,\rightarrow)$ \emph{iff} there is a  $\FStates$-fair path $\Path$ of $\KS$ such that the projection of $\trace$ over $\States_{\vartrace}\cup \AP_{\vartrace}\cup\{\pad_{\overleftarrow{\vartrace}},\pad_{\overrightarrow{\vartrace}}\}$ is the forward $\vartrace$-encoding of $\Path$ for some offset $k\geq 0$.
   \item    $(\trace,i)\models\theta(\vartrace,\leftarrow)$ \emph{iff} there is a  $\FStates$-fair path $\Path$ of $\KS$ such that the projection of $\trace$ over $\States_{\vartrace}\cup \AP_{\vartrace}\cup\{\pad_{\overleftarrow{\vartrace}},\pad_{\overrightarrow{\vartrace}}\}$ is the backward $\vartrace$-encoding of $\Path$ for some offset $k> 0$.
 \end{itemize}

We give the details on the construction of the $\PLTL$ formula $\theta(\vartrace,\leftarrow)$ in Claim~1 (the construction of $\theta(\vartrace,\rightarrow)$ is similar).
For each state $\state\in\States$,
  $\Trans(\state)$ denotes the set of successors of $\state$ in $\KS$, while $\Trans^{-1}(\state)$ denotes the set of predecessors of $\state$ in $\KS$.
  The  $\PLTL$ formula $\theta(\vartrace,\leftarrow)$ is defined as follows:
\begin{eqnarray*}
\theta(\vartrace,\overleftarrow{})&:=&  \Once \bigl ( (\neg \Yesterday \top) \,\wedge\, \pad_{\overleftarrow{\vartrace}} \,\wedge\, \Always \neg \pad_{\overrightarrow{\vartrace}}  \wedge \bigvee_{\state\in\States}[\xi(\vartrace,\state,\rightarrow)\wedge \bigvee_{\state'\in\Trans^{-1}(\state)} \Next\xi(\vartrace,\state',\leftarrow)]   \bigr)\\
 \xi(\vartrace,\state,\rightarrow) &:= &   \state_{\overrightarrow{\vartrace}} \,\wedge\, \bigvee_{\state'\in \FStates}\Always\Future \,\state'_{\overrightarrow{\vartrace}} \,\,\wedge\,\, \bigwedge_{\state'\in \States} \Always\Bigl(\state'_{\overrightarrow{\vartrace}} \rightarrow \\
 & & \Bigl[\bigwedge_{p\in \Lab(\state')}p_{\overrightarrow{\vartrace}} \wedge
   \bigwedge_{p\in \AP\setminus \Lab(\state')}\neg p_{\overrightarrow{\vartrace}} \wedge \bigwedge_{\state''\in \States\setminus\{\state'\}}\neg  \state''_{\overrightarrow{\vartrace}} \wedge \bigvee_{\state''\in \Trans(\state')}\Next\, \state''_{\overrightarrow{\vartrace}} \Bigr] \Bigr)\\
 \xi(\vartrace,\state,\leftarrow) &:= &   \state_{\overleftarrow{\vartrace}} \,\wedge\,  \Future\Always\pad_{\overleftarrow{\vartrace}} \,\wedge\, \bigvee_{\state'\in \States_0}\Future (\state'_{\overleftarrow{\vartrace}}\wedge
 \Next \pad_{\overleftarrow{\vartrace}}) \,\,\wedge\,\,  \Historically\Always [\pad_{\overleftarrow{\vartrace}} \,\rightarrow\, \bigwedge_{q\in \AP\cup \States}\neg q_{\overleftarrow{\vartrace}}]  \,\, \wedge\,\, \\
  & & \bigwedge_{\state'\in \States} \Always\Bigl( \state'_{\overleftarrow{\vartrace}}
   \longrightarrow \Bigl[\bigwedge_{p\in \Lab(\state')}p_{\overleftarrow{\vartrace}} \,\,\wedge
   \bigwedge_{p\in \AP\setminus \Lab(\state')}\neg p_{\overleftarrow{\vartrace}} \,\wedge \, \neg \pad_{\overleftarrow{\vartrace}} \,\wedge  \\
 & &  \bigwedge_{\state''\in \States\setminus\{\state'\}}\neg  \state''_{\overleftarrow{\vartrace}} \,\wedge \, \Next \bigl(  \pad_{\overleftarrow{\vartrace}}\vee  \bigvee_{\state''\in \Trans^{-1}(\state')}   \state''_{\overleftarrow{\vartrace}}\bigr) \Bigr] \Bigr)
\end{eqnarray*}

 We now extend the notions of backward and forward encodings of paths of $\KS$ to pointed paths $(\Path,i)$  of $\KS$. For each $k\in\nat$, the \emph{forward $\vartrace$-encoding of $(\Path,i)$ with offset $k$} is the pointed trace
given by $(\overrightarrow{\trace}(\vartrace,\Path,k),i+k)$. We say that  the position $i+k$  is in \emph{forward mode} in the encoding.  For each $k>0$,    the \emph{backward $\vartrace$-encoding of $(\Path,i)$ with offset $k$} is the pointed trace
given by $(\overleftarrow{\trace}(\vartrace,\Path,k),j)$, where $j= i-k$ if $i\geq k$, and $j=k-i$ otherwise. In the first case, we say that position $j$ is in \emph{forward mode} in the encoding, and the second case, we say that position $j$ is in \emph{backward mode} in the encoding.
Intuitively, when $(\trace,\ell)$ is a backward or forward encoding of a pointed path $(\Path,i)$   of $\KS$, then  $\ell$ represents the position in the encoding associated to position $i$ of $\Path$.\vspace{0.2cm}

\noindent \emph{Encoding of $(\KS,\FStates)$-path assignments.} Let $\TracesMap$ be a $(\KS,\FStates)$-path assignment. A pointed trace $(\trace,i)$ over $\AP'$ is a
\emph{forward encoding of $\TracesMap$} if for each $\vartrace\in\Var$, the following holds, where $\trace_{\vartrace}$ denotes the projection of
$\trace$ over $\States_{\vartrace}\cup \AP_{\vartrace}\cup\{\pad_{\overleftarrow{\vartrace}},\pad_{\overrightarrow{\vartrace}}\}$:
\begin{itemize}
  \item if $\vartrace\notin \Dom(\TracesMap)$, then $\trace_{\vartrace}$ is $\emptyset^{\omega}$;
  \item if $\vartrace\in \Dom(\TracesMap)$, then $(\trace_{\vartrace},i)$ is a forward or backward encoding of
  $\TracesMap(\vartrace)$ where $i$ is in forward mode in the encoding.
\end{itemize}
When $\Dom(\TracesMap)\neq\emptyset$, we also consider the notion of a \emph{backward encoding} $(\trace,i)$ of $\TracesMap$ which is defined as a forward encoding but we require that for each $\vartrace\in\Dom(\TracesMap)$,
$(\trace_{\vartrace},i)$ is a  backward encoding of $\TracesMap(\vartrace)$ where $i$ is in backward mode in the encoding. Note that in this case, we have that
$i\geq 1$ holds.

Let $\TracesMap$ be a $(\KS,\FStates)$-path assignment and   $\TracesMap'$ be the trace assignment over $\Lang(\KS,\FStates)$ obtained from $\TracesMap$
in the obvious way. For each $\ell\geq 0$, we write $\SUCC^{\ell}(\TracesMap)$ to mean $\SUCC^{\ell}_{(\emptyset,\Var)}(\TracesMap')$ and $\PRED^{\,\ell}(\TracesMap)$
to mean $\PRED^{\,\ell}_{(\emptyset,\Var)}(\TracesMap')$.
By construction, we easily obtain the following result, where $\Halt_\rightarrow \DefinedAs \bigvee_{\vartrace\in \Var} \pad_{\overrightarrow{\vartrace}}$ and
$\Halt_\leftarrow \DefinedAs \bigvee_{\vartrace\in \Var} \pad_{\overleftarrow{\vartrace}}$.\vspace{0.2cm}

\noindent\emph{Claim 2.} Let $\TracesMap$ be a $(\KS,\FStates)$-path assignment and $\ell\geq 0$.
\begin{itemize}
  \item If $(\trace,i)$ is a forward encoding of $\TracesMap$, then:
  \begin{itemize}
  \item  $(\trace,i+\ell)$ is a forward encoding  of $\SUCC^{\ell}(\TracesMap)$;
  \item if $\ell\leq i$, then $\PRED^{\,\ell}(\TracesMap)\neq \Undef$  iff  $(\trace,i-\ell)\models \neg \Halt_{\rightarrow}$. Moreover, if
   $\PRED^{\,\ell}(\TracesMap)\neq \Undef$, then $(\trace,i-\ell)$ is a forward encoding of $\PRED^{\,\ell}(\TracesMap)$;
  \item if $\ell> i$, then $\PRED^{\,\ell}(\TracesMap)\neq \Undef$  iff    $(\trace,\ell-i)\models \neg \Halt_{\leftarrow}$. Moreover, if
   $\PRED^{\,\ell}(\TracesMap)\neq \Undef$, then $(\trace,\ell-i)$ is a backward encoding of $\PRED^{\,\ell}(\TracesMap)$.
\end{itemize}
  \item If $(\trace,i)$ is a backward encoding of $\TracesMap$, then:
  \begin{itemize}
  \item  If $\ell<i$, $(\trace,i-\ell)$ is a backward encoding  of $\SUCC^{\ell}(\TracesMap)$;
  \item  if $\ell\geq i$, $(\trace,\ell-i)$ is a forward encoding  of $\SUCC^{\ell}(\TracesMap)$;
  \item  $\PRED^{\,\ell}(\TracesMap)\neq \Undef$  iff   $(\trace,i+\ell)\models \neg \Halt_{\leftarrow}$. Moreover, if
   $\PRED^{\,\ell}(\TracesMap)\neq \Undef$, then $(\trace,i+\ell)$ is a backward encoding of $\PRED^{\,\ell}(\TracesMap)$.
\end{itemize}
\end{itemize}

\noindent \emph{Reduction to $\QPTL$ satisfiability.}
 We  define by structural induction a mapping $\TMap:\{\leftarrow,\rightarrow\}\times   \SGHLTL{\emptyset} \rightarrow \QPTL $ associating to each pair $(\dir,\phi)$ consisting of
   a direction $\dir\in \{\leftarrow,\rightarrow\}$ and a $\SGHLTL{\emptyset}$ formula $\phi$ over $\AP$ and $\Var$ a $\QPTL$ formula $\TMap(\dir,\phi)$ over $\AP'$.
   Define $\AP'_{\vartrace}:= \AP_{\vartrace}\cup \States_{\vartrace}\cup \{\pad_{\overleftarrow{\vartrace}},\pad_{\overrightarrow{\vartrace}}\}$.
\begin{itemize}
 \item $\TMap(\dir,\top)=\top$;
  \item $\TMap(\leftarrow,\Rel{p}{\vartrace}) = p_{\overleftarrow{\vartrace}}$ for all $p\in \AP$;
   \item $\TMap(\rightarrow,\Rel{p}{\vartrace}) = p_{\overrightarrow{\vartrace}}$ for all $p\in \AP$;
   \item $\TMap(\dir,\ctx{x}\Rel{\psi}{\vartrace}) = \TMap_{\vartrace}(\dir, \Rel{\psi}{\vartrace})$, where
   $\TMap_{\vartrace}(\dir, \Rel{\psi}{\vartrace})$ is obtained from $\TMap(\dir, \Rel{\psi}{\vartrace})$ by replacing each occurrence of $\Halt_{\leftarrow}$
   (resp., $\Halt_{\rightarrow}$) with $\pad_{\overleftarrow{\vartrace}}$
   (resp., $\pad_{\overrightarrow{\vartrace}}$);
  \item $\TMap(\dir,\neg \phi)= \neg \TMap(\dir,\phi)$;
   \item $\TMap(\dir,\phi_1\vee \phi_2)= \TMap(\dir,\phi_1)\vee \TMap(\dir, \phi_2)$;
   \item $\TMap(\leftarrow,\Next \phi)= \Yesterday (\TMap(\leftarrow,\phi)\wedge \Yesterday\top) \,\vee\, \Yesterday (\TMap(\rightarrow,\phi)\wedge \neg\Yesterday\top)$;
   \item $\TMap(\rightarrow,\Next \phi)= \Next\,\TMap(\rightarrow,\phi)$;
  \item $\TMap(\leftarrow,\Yesterday \phi)=\Next (\TMap(\leftarrow,\ \phi)\wedge\neg\Halt_{\leftarrow})$;
  \item $\TMap(\rightarrow,\Yesterday \phi)=\Yesterday (\TMap(\rightarrow,\ \phi)\wedge\neg\Halt_{\rightarrow})\vee [\neg\Yesterday\top \wedge \Next (\TMap(\leftarrow,\ \phi)\wedge\neg\Halt_{\leftarrow})]$;
  \item $\TMap(\leftarrow,\phi_1  \Until  \phi_2)= \TMap(\leftarrow,\phi_1)  \Since  (\Yesterday\top \wedge \TMap(\leftarrow,\phi_2)) \,\, \vee $\\
  \phantom{$\TMap(\leftarrow,\phi_1  \Until  \phi_2)=$}
  $[\Historically(\Yesterday\top  \rightarrow \TMap(\leftarrow,\phi_1))\wedge \Once(\neg \Yesterday\top \wedge \TMap(\rightarrow,\phi_1) \Until  \TMap(\rightarrow,\phi_2)) ]$;
   \item  $\TMap(\rightarrow,\phi_1  \Until \phi_2)= \TMap(\rightarrow,\phi_1)  \Until \TMap(\rightarrow,\phi_2)$;
    \item  $\TMap(\leftarrow,\phi_1 \Since  \phi_2)= \TMap(\leftarrow,\phi_1)   \Until (\TMap(\leftarrow,\phi_2)\wedge \neg\Halt_{\leftarrow})$;
    \item $\TMap(\rightarrow,\phi_1  \Since  \phi_2)= \TMap(\rightarrow,\phi_1)  \Since   (\TMap(\rightarrow,\phi_2)\wedge \neg\Halt_{\rightarrow})) \,\, \vee $\\
  \phantom{$\TMap(\rightarrow,\phi_1  \Since  \phi_2)=$}
  $[\Historically \TMap(\rightarrow,\phi_1) \wedge \Once\bigl( \neg\Yesterday\top \wedge \Next(\TMap(\leftarrow,\phi_1)  \Until  (\TMap(\leftarrow,\phi_2)\wedge \neg\Halt_{\leftarrow}))\bigr)]$;
    \item  $\TMap(\leftarrow,\exists \vartrace.\,\phi)= \exists \,\AP'_{\vartrace}.\,   (\theta(\vartrace,\leftarrow)\wedge \TMap(\leftarrow,\phi) \wedge \neg\pad_{\overleftarrow{\vartrace}}
    \wedge \Next  \pad_{\overleftarrow{\vartrace}})$;
     \item  $\TMap(\rightarrow,\exists \vartrace.\,\phi)= \exists \,\AP'_{\vartrace}.\,   (\theta(\vartrace,\rightarrow)\wedge \TMap(\rightarrow,\phi) \wedge \neg\pad_{\overrightarrow{\vartrace}}
    \wedge (\Yesterday \top \rightarrow \Yesterday \pad_{\overrightarrow{\vartrace}}))$;
  \item $\TMap(\leftarrow,\exists^{\Pt} \vartrace.\,\phi)= \exists \,\AP'_{\vartrace}.\,   (\theta(\vartrace,\leftarrow)\wedge \TMap(\leftarrow,\phi) \wedge \neg\pad_{\overleftarrow{\vartrace}})$;
  \item $\TMap(\rightarrow,\exists^{\Pt} \vartrace.\,\phi)= \exists \,\AP'_{\vartrace}.\,   (\TMap(\rightarrow,\phi) \wedge [\theta(\vartrace,\leftarrow)\vee
  (\theta(\vartrace,\rightarrow)\wedge \neg\pad_{\overrightarrow{\vartrace}})]) $;
\end{itemize}

\noindent where $\theta(\vartrace,\leftarrow)$ and $\theta(\vartrace,\rightarrow)$ are the $\PLTL$ formulas of Claim~1. By construction  $\TMap(\dir,\phi)$ has size linear in $\phi$ and has the same strong alternation depth as $\phi$.
Moreover, $\TMap(\dir,\phi)$ is a $\QPTL$ sentence if $\phi$ is a $\SGHLTL{\emptyset}$ sentence. Now, we prove that the construction is correct.
Given a $\SGHLTL{\emptyset}$ formula $\phi$ and a $(\KS,\FStates)$-assignment $\TracesMap$ such that $\Dom(\TracesMap)$ consists of all and only the trace variables occurring free in $\phi$, we write
$\TracesMap\models \phi$ to mean that $(\TracesMap',\Var)\models_{\Lang(\KS,\FStates)}\phi $, where  $\TracesMap'$ is the trace assignment over $\Lang(\KS,\FStates)$ obtained from $\TracesMap$
in the obvious way. For a trace $\trace$ over $\AP'$, a variable $\vartrace\in\Var$,   and a trace $\trace_{\vartrace}$ over
$\AP'_{\vartrace}$, we denote by $\trace[\vartrace \mapsto \trace_{\vartrace}]$ the trace  $\trace'$ over $\AP'$ such that
$\trace' =_{\AP'\setminus \AP'_{\vartrace}}\trace$  and the projection of $\trace'$ over $\AP'_{\vartrace}$ is $\trace_{\vartrace}$.
 For a $\SGHLTL{\emptyset}$ sentence $\varphi$,  we show that
$\TracesMap_{\emptyset} \models  \varphi \Leftrightarrow  (\emptyset^{\omega},0) \models \TMap(\rightarrow,\varphi)$ where $\Dom(\Pi_\emptyset)=\emptyset$.
The result directly follows from the following claim. \vspace{0.2cm}

\noindent \emph{Claim 3:} let $\dir\in \{\leftarrow,\rightarrow\}$, $\phi$ be a $\SGHLTL{\emptyset}$ formula, and
$\TracesMap$ be a $(\KS,\FStates)$-assignment such that
 $\Dom(\TracesMap)$ contains all the trace variables occurring free in $\phi$ and $\dir = \rightarrow$ if $\Dom(\TracesMap)=\emptyset$. Then, for all
   encodings $(\trace,i)$ of $\TracesMap$ such that $(\trace,i)$ is a forward encoding if $\dir=\rightarrow$, and a backward encoding otherwise, the following holds:
$
\TracesMap \models  \phi \Leftrightarrow  (\trace,i) \models \TMap(\dir,\phi).
$ \vspace{0.2cm}

\noindent The proof of Claim~3 is by structural induction on $\phi$. When $\phi$ is a trace-relativized atomic proposition, the result directly follows
from the definition of the map $\TMap$ and the notion of an encoding $(\trace,i)$ of $\Pi$.  The cases for the boolean connectives directly follow from the induction hypothesis, while the cases relative to the temporal modalities  easily follow from the induction hypothesis, Claim~2, and the definition of the map $\TMap$. For the other cases, the ones relative to the hyper quantifiers,  we proceed as follows:
\begin{itemize}
 \item $\phi= \exists \vartrace.\, \phi'$: we focus on the case, where $\dir=\leftarrow$ (the other case being similar). By hypothesis, $(\trace,i)$ is a backward encoding of $\TracesMap$ and $\Dom(\TracesMap)\neq\emptyset$. Hence, it holds that $i\geq 1$.
For the implication  $ \TracesMap \models  \phi \Rightarrow  (\trace,i) \models \TMap(\leftarrow,\phi) $, assume that
$\TracesMap \models  \phi$. Hence, there exists a $\FStates$-fair  path $\Path$ of $\KS$ such that
$\TracesMap[\vartrace \mapsto (\Path,0)]\models \phi'$. Since $i\geq 1$, by construction, the trace $\trace_{\vartrace}$ over $\AP'_{\vartrace}$ given by   $(\overleftarrow{\trace}(\vartrace,\Path,i),i)$ is the backward
$\vartrace$-encoding of $(\Path,0)$ where $i$ is in backward mode in the encoding.
Hence, the trace $\trace'$ given by $\trace[\vartrace \mapsto \trace_{\vartrace}]$ is a backward encoding of $\TracesMap[\vartrace \mapsto (\Path,0)]$.
By the induction hypothesis, $(\trace',i) \models \TMap(\leftarrow,\phi')$. Moreover, by construction and Claim~1,
    $(\trace_{\vartrace},i) \models \theta(\vartrace,\leftarrow)$, $\pad_{\overleftarrow{\vartrace}}\notin \trace_{\vartrace}(i)$ and
  $\pad_{\overleftarrow{\vartrace}}\in \trace_{\vartrace}(i+1)$. Hence, by definition of $\TMap(\leftarrow,\exists \vartrace.\, \phi')$,
  we obtain that  $(\trace,i) \models \TMap(\leftarrow,\phi)$.

For the converse implication, assume that $(\trace,i) \models \TMap(\leftarrow,\phi)$.
By definition of $\TMap(\leftarrow,\exists \vartrace.\, \phi')$ and  Claim~1, there exists a trace
$\trace_{\vartrace}$ over $\AP'_{\vartrace}$ such that $\trace_{\vartrace}$ is a
 backward $\vartrace$-encoding $\overleftarrow{\trace}(\vartrace,\Path,k)$ of some  $\FStates$-fair   path $\Path$ of $\KS$ for some offset $k>0$. Moreover,
 $(\trace',i)\models \TMap(\leftarrow,\phi')$, where $\trace'$ is given by $\trace[\vartrace \mapsto \trace_{\vartrace}]$,
  $\pad_{\overleftarrow{\vartrace}}\notin \trace_{\vartrace}(i)$ and
  $\pad_{\overleftarrow{\vartrace}}\notin \trace_{\vartrace}(i+1)$. This means that the offset $k$ is exactly $i$. Hence,
  $(\overleftarrow{\trace}(\vartrace,\Path,i),i)$ is a backward
$\vartrace$-encoding of $(\Path,0)$ where $i$ is in backward mode in the encoding, and $(\trace',i)$ is a backward encoding
of  $\TracesMap[\vartrace \mapsto (\Path,0)]$. Thus, by the induction hypothesis, the result directly follows.
\item $\phi= \exists^{\Pt} \vartrace.\, \phi'$: we focus on the case, where $\dir=\rightarrow$ (the other case being similar). By hypothesis, $(\trace,i)$ is a forward encoding of $\TracesMap$.
For the implication  $ \TracesMap \models  \phi \Rightarrow  (\trace,i) \models \TMap(\rightarrow,\phi) $, assume that
$\TracesMap \models  \phi$. Hence, there exists a $\FStates$-fair  path $\Path$ of $\KS$ and a position $\ell\geq 0$ such that
$\TracesMap[\vartrace \mapsto (\Path,\ell)]\models \phi'$. Let $(\trace_{\vartrace},i)$ be the trace over $\AP'_{\vartrace}$ defined as follows:
\begin{itemize}
  \item if $\ell\leq i$: we set $(\trace_{\vartrace},i)$ to $(\overrightarrow{\trace}(\vartrace,\Path,i-\ell),i)$ which by construction is a forward
$\vartrace$-encoding of $(\Path,\ell)$. By construction and Claim~1, $(\trace_{\vartrace},i)\models \theta(\vartrace,\rightarrow)\wedge \neg\pad_{\overrightarrow{\vartrace}}$;
  \item  if $\ell> i$: we set $(\trace_{\vartrace},i)$ to $(\overleftarrow{\trace}(\vartrace,\Path,\ell-i),i)$ which by construction is a backward
$\vartrace$-encoding of $(\Path,\ell)$ where $i$ is in forward mode in the encoding. By construction and Claim~1, $(\trace_{\vartrace},i)\models \theta(\vartrace,\leftarrow)$.
\end{itemize}
Hence, the trace $\trace'$ given by $\trace[\vartrace \mapsto \trace_{\vartrace}]$ is a forward encoding of $\TracesMap[\vartrace \mapsto (\Path,\ell)]$.
By the induction hypothesis, $(\trace',i) \models \TMap(\leftarrow,\phi')$. Hence, by definition of $\TMap(\rightarrow,\exists^{\Pt} \vartrace.\, \phi')$,
  we obtain that  $(\trace,i) \models \TMap(\rightarrow,\phi)$.

For the converse implication, assume that $(\trace,i) \models \TMap(\rightarrow,\phi)$.
By definition of $\TMap(\rightarrow,\exists \vartrace.\, \phi')$ and  Claim~1, there exists a  $\FStates$-fair  path $\Path$ of $\KS$ and a trace
$\trace_{\vartrace}$ over $\AP'_{\vartrace}$ such that $(\trace',i)\models \TMap(\rightarrow,\phi')$, where $\trace'$ is given by $\trace[\vartrace \mapsto \trace_{\vartrace}]$ and
 one of the following holds:
\begin{itemize}
  \item  $\trace_{\vartrace}$ is the forward
$\vartrace$-encoding     $\overrightarrow{\trace}(\vartrace,\Path,k)$ of  $\Path$ for some offset $k\geq 0$ and $\pad_{\overrightarrow{\vartrace}}\notin \trace_{\vartrace}(i)$. It follows that $k\leq i$ and $(\trace_{\vartrace},i)$ is the
  forward $\vartrace$-encoding of the pointed path $(\Path,i-k)$.
  \item  $\trace_{\vartrace}$ is the backward
$\vartrace$-encoding     $\overleftarrow{\trace}(\vartrace,\Path,k)$ of  $\Path$ for some offset $k> 0$.
Hence, $(\trace_{\vartrace},i)$ is
the backward $\vartrace$-encoding of the pointed path $(\Path,i+k)$ where $i$ is in forward mode in the encoding.
\end{itemize}
Hence, for some $\ell\geq 0$, $(\trace',i)$ is a forward encoding
of $\TracesMap[\vartrace \mapsto (\Path,\ell)]$. Thus, being $(\trace',i)\models \TMap(\rightarrow,\phi')$,  by the induction hypothesis, the result directly follows.
\end{itemize}
This concludes the proof of Claim~3 and  Theorem~\ref{theorem:fromEmptyGammaFragmentToQPTL} too.
 \end{proof}

\subsection{Proof of Theorem~\ref{theorem:fromQPTLtoEmptyGammaFragment}}\label{app:fromQPTLtoEmptyGammaFragment}

\theoFromQPTLtoEmptyGammaFragment*
\begin{proof}Without loss of generality, we only consider \emph{well-named} $\QPTL$ formulas, \ie
$\QPTL$  formulas where each quantifier introduces a different proposition.
Moreover, we can assume that $\AP$ is the set of all and only the propositions occurring in the given $\QPTL$ sentence.
 Let $\AP'=\AP\cup \{\Tag,\ini\}$, where $\Tag$ and $\ini$ are fresh propositions, and fix an ordering $\{p_1,\ldots, p_n\}$ of the propositions in $\AP$.
First, we encode a trace $\trace$ over $\AP$  by a trace $\en(\trace)$ over $\AP'$ defined as follows: $\en(\trace):=\trace_0 \cdot \trace_1 \cdot \ldots$, where
for each $i\geq 0$,  $\trace_i$ (the encoding of the $i^{th}$ symbol of $\trace$) is the finite word over $2^{\AP'}$ of length $n+1$ given by $P_{0} P_1,\ldots P_n$, where
(i) for all $k\in [1,n]$,  $P_k= \{p_k\}$ if $p_k\in \trace(i)$, and $P_k=\emptyset$ otherwise, and (ii) $P_0= \{\Tag\}$ if $i>0$ and $P_0=\{\Tag,\ini\}$ otherwise.
Note that proposition $\ini$ marks the first position of $\en(\trace)$.
 Then, the finite Kripke structure   $\KS_\AP= \mktuple{\States,\States_0,\Trans,\Lab}$ over $\AP'$ has size linear in $|\AP|$ and it is constructed in such a way that its set of traces $\Lang(\KS_\AP)$ is the set of the encodings of the traces over $\AP$.  Formally, $\KS_\AP$ is defined as follows:
\begin{itemize}
  \item $\States=\{p_h,\overline{p}_h\mid h\in\{1,\ldots,n\}\}\cup \{\Tag,\ini\}$ and $\States_0 = \{\ini\}$;
  \item $\Trans$ consists of the edges $(p_k,p_{k+1})$, $(p_k,\overline{p}_{k+1})$,
  $(\overline{p}_k,p_{k+1})$ and $(\overline{p}_k,\overline{p}_{k+1})$ for all $k\in [1,n-1]$, and the edges
  $(\state,p_1)$, $(\state,\overline{p}_1)$, $(p_n,\Tag)$, and $(\overline{p}_n,\Tag)$ where $\state\in \{\Tag,\ini\}$.
  \item $\Lab(\Tag)=\{\Tag\}$, $\Lab(\ini)=\{\ini,\Tag\}$, and  $\Lab(p_k)=\{p_k\}$ and $\Lab(\overline{p}_k)=\emptyset$ for all $k\in [1,n]$.
\end{itemize}

Let $\Lambda$ be the set of pairs $(h,\psi)$ consisting of a natural number $h\in [0,n]$ and
 a well-named $\QPTL$ formula $\psi$ over $\AP$ such that there is no quantifier in $\psi$ binding proposition $p_h$ if $h\neq 0$,
and $\psi$ is a sentence iff $h=0$. We inductively define a mapping assigning to each pair $(h,\psi)\in\Lambda$ a
 singleton-free $\SGHLTL{\emptyset}$ formula $\TMap(h,\psi)$ over $\AP'$ and $\Var=\{\vartrace_1,\ldots,\vartrace_n\}$ (intuitively, if $h\neq 0$, then $p_h$ represents the currently quantified proposition):
\begin{itemize}
 \item $\TMap(h,\top)=\top$;
  \item $\TMap(h,p_i) = \Next^{i}\, \Rel{p_i}{\vartrace_h}$ for all $p_i\in \AP$;
  \item $\TMap(h,\neg \psi)= \neg \TMap(h,\psi)$;
   \item $\TMap(h,\psi_1\vee \psi_2)= \TMap(h,\psi_1)\vee \TMap(h,\psi_2)$;
   \item $\TMap(h,\Next \psi)= \Next^{n+1} \TMap(h,\psi)$;
    \item $\TMap(h,\Yesterday \psi)= \Yesterday^{n+1} \TMap(h,\psi)$;
   \item  $\TMap(h,\psi_1 \Until  \psi_2)= (\Rel{\Tag}{\vartrace_h} \rightarrow \TMap(h,\psi_1)) \Until  (\TMap(h,\psi_2)\wedge \Rel{\Tag}{\vartrace_h})$;
    \item  $\TMap(h,\psi_1 \Since  \psi_2)= (\Rel{\Tag}{\vartrace_h} \rightarrow \TMap(h,\psi_1)) \Since  (\TMap(h,\psi_2)\wedge \Rel{\Tag}{\vartrace_h})$;
   \item  $
\begin{array}{ll}
 \TMap(h,\exists p_k.\psi) =
  \begin{cases}
  \exists^{\Pt}\,\vartrace_k. \,  \Bigl(\TMap(k,\psi)  \wedge \Once \bigl(\Rel{\ini}{\vartrace_h}\wedge \Rel{\ini}{\vartrace_k}\wedge  &\\
  \phantom{\exists^{\Pt}\,\vartrace_k. \,  \Bigl(\TMap(k,\psi)  \wedge \Once \bigl(}
  \Always \displaystyle{\bigwedge_{j\in [1,n]\setminus \{k\}}}(\Rel{p_j}{\vartrace_h} \leftrightarrow \Rel{p_j}{\vartrace_k}\bigr)\Bigr)
    & \text{if } h \neq 0\\
   \exists\,\vartrace_k. \, \TMap(k,\psi)
      & \text{ otherwise }
  \end{cases}
\end{array}
 $
\end{itemize}

By construction,  $\TMap(h,\psi)$ has size linear in $\psi$ and has the same strong alternation depth as $\psi$.
Moreover, $\TMap(h,\psi)$ is a $\SGHLTL{\emptyset}$ sentence \emph{iff} $\psi$ is a $\QPTL$ sentence. Now we show that the construction is correct.
A  \emph{$\Lang(\KS_{\AP})$-assignment} is a mapping  $\TracesMap:\Var \rightarrow \Lang(\KS_{\AP})$.
For all $i\geq 0$, $\Lang(\KS_{\AP})$-assignments $\TracesMap$, and $\SGHLTL{\emptyset}$ formulas $\varphi$ over $\AP'$, we write
$(\TracesMap,i)\models \varphi$ to mean that $(\TracesMap_i,\Var)\models_{\Lang(\KS_{\AP})} \varphi$, where $\TracesMap_i$ is the (pointed)
trace assignment over $\Lang(\KS_{\AP})$ assigning to each trace variable $\vartrace\in\Var$, the pointed trace $(\TracesMap(\vartrace),i)$.
Note that for each $\QPTL$ sentence $\psi$ over $\AP$, $\psi$ is satisfiable if for all traces $\trace$, $(\trace,0)\models \psi$.
Thus, correctness of the construction directly follows from the following claim,  where
for each $i\geq 0$,   $\wp(i):= i\cdot(n+1)$.  Intuitively, $\wp(i)$ is the $\Tag$-position associated with the $\AP'$-encoding of the position $i$ of a trace over $ \AP$.\vspace{0.2cm}

\noindent \emph{Claim.} Let $(h,\psi)\in\Lambda$. Then, for all pointed traces $(\trace,i)$ over $\AP$ and $\Lang(\KS_{\AP})$-assignments $\TracesMap$
  such that $\TracesMap(\vartrace_h) = \en(\trace)$ if $h\neq 0$, and $i=0$ if $h=0$, the following holds:
\[
(\trace,i) \models \psi \Leftrightarrow (\TracesMap,\wp(i)) \models  \TMap(h,\psi)
\]
 The claim is proved  by structural induction  on $\psi$. The cases for the boolean connectives  easily follow from the induction hypothesis. For the other cases, we proceed as follows:
\begin{itemize}
   \item $\psi = p_j$ for some $p_j\in \AP$: in this case $h\neq 0$ (recall that $(h,\psi)\in\Lambda$). Then, we have that
  $(\trace,i) \models p_j$  $\Leftrightarrow$ $p_j\in \trace(i)$ $\Leftrightarrow$  $p_j\in \en(\trace)(\wp(i)+j)$ $\Leftrightarrow$
  $p_j\in  \TracesMap(\vartrace_h) (\wp(i)+j)$ $\Leftrightarrow$ $(\TracesMap, \wp(i)) \models \Next^{j}\, \Rel{p_j}{x_h} $
  $\Leftrightarrow$ $(\TracesMap, \wp(i)) \models  \TMap(h,p_j)$.
 Hence, the result follows.
  \item $\psi =\Next \psi'$: hence, $h\neq 0$. We have that
  $(\trace,i) \models \Next \psi'$  $\Leftrightarrow$ $(\trace,i+1)\models \psi'$ $\Leftrightarrow$ (by the induction hypothesis)
   $(\TracesMap, \wp(i+1)) \models \TMap(h,\psi')$ $\Leftrightarrow$ (since $\wp(i+1)= \wp(i)+n+1$)
   $(\TracesMap, \wp(i)) \models  \Next^{n+1} \TMap(h,\psi')$ $\Leftrightarrow$ $(\TracesMap,\wp(i))\models \TMap(h,\Next\psi')$.
 Hence, the result follows.
 \item $\psi = \Yesterday \psi'$: similar to the previous case.
 \item $\psi =\psi_1\Until \psi_2$: hence, $h\neq 0$. We have that
  $(\trace,i) \models \psi_1 \Until \psi_2$  $\Leftrightarrow$
  there is $j\geq i$ such that
 $(\trace,j)\models \psi_2$ and $(\trace,\ell) \models\psi_1$ for all $i\leq \ell<j$ $\Leftrightarrow$
 (by the induction hypothesis)
   there is $j\geq i$ such that
 $(\TracesMap, \wp(j)) \models\TMap(h,\psi_2)$ and $(\TracesMap, \wp(\ell)) \models \TMap(h,\psi_1)$ for all $i\leq \ell<j$ $\Leftrightarrow$
 there is $j'\geq \wp(i)$ such that $(\TracesMap,j') \models \TMap(h,\psi_2)$ and $\Tag\in \TracesMap(x_h)(j')$, and for all
 $\wp(i)\leq \ell'<j'$ such that $\Tag\in  \TracesMap(x_h)(\ell')$, $(\TracesMap,\ell') \models \TMap(h,\psi_1)$
 $\Leftrightarrow$ $(\TracesMap,\wp(i)) \models \TMap(h,\psi_1\Until\psi_2)$.
  Hence, the result follows.
 \item $\psi =\psi_1\Since \psi_2$: similar to the previous case.
  \item $\psi= \exists p_k.\, \psi'$ and $h\neq 0$: by hypothesis, $k\neq h$.
For the implication, $ (\TracesMap,\wp(i)) \models \TMap(h,\psi) \Rightarrow (\trace,i) \models \psi $, assume that
$(\TracesMap,\wp(i)) \models \TMap(h,\psi)$. By definition of $\TMap(h,\exists p_k.\, \psi')$ and since $\ini$ marks the first position
of the encoding of a trace over $\AP$,  it easily follows that there exists  a pointed  trace over $\AP$ of the form $(\trace',i)$ such that $\trace'=_{\AP\setminus\{p_k\}}\trace$
and $(\TracesMap[x_k \leftarrow \en(\trace')], \wp(i))\models  \TMap(k,\psi')$. Since $(k,\psi')\in\Lambda$, by the induction hypothesis,
it follows that  $(\trace',i) \models \psi'$.
  Thus,   being $\trace'=_{\AP\setminus\{p_k\}}\trace$, we obtain that  $(\trace,i) \models \psi$.

The converse implication  $(\trace,i) \models  \psi \Rightarrow  (\Pi,  \wp(i))\models \TMap(h,\psi)$ is similar, and we omit the details here.
\item  $\psi= \exists p_k.\, \psi'$ and $h= 0$: hence, $\psi$ and $\TMap(0,\psi)$ are sentences.
We have that
  $(\trace,0) \models \exists p_k.\, \psi'$  $\Leftrightarrow$ ($\psi$ is a $\QPTL$ sentence) for some trace $\trace'$ over $\AP$, $(\trace',0)\models \psi'$ $\Leftrightarrow$ (by the induction hypothesis)
   $(\TracesMap[\vartrace_k \rightarrow \en(\trace')], \wp(0)) \models \TMap(k,\psi')$ $\Leftrightarrow$ (by definition of $\TMap(0,\exists p_k.\, \psi')$)
   $(\TracesMap, \wp(0)) \models    \TMap(0,\exists p_k.\, \psi')$.
\end{itemize}
This concludes the proof of Theorem~\ref{theorem:fromQPTLtoEmptyGammaFragment}.
  \end{proof}

\subsection{Proof of Proposition~\ref{prop:FormSimpleToProposition}}\label{app:FormSimpleToProposition}

In order to prove Proposition~\ref{prop:FormSimpleToProposition}, we need some
preliminary results.
Recall that a Nondeterministic B\"{u}chi Automaton over words ($\NBA$ for short) is a tuple
$\Au =\mktuple{\Sigma,Q,Q_0,\Delta,\Acc}$, where $\Sigma$ is a finite
alphabet, $Q$ is a finite set of states, $Q_0\subseteq Q$ is the set
of initial states, $\Delta\subseteq Q\times \Sigma \times Q$ is the
transition relation, and $\Acc\subseteq Q$ is the set of
\emph{accepting} states. Given a infinite word $w$ over $\Sigma$, a
run of $\Au$ over $w$ is an infinite sequence of states
$q_0,q_1,\ldots$ such that $q_0\in Q_0$ and for all $i\geq 0$,
$(q_i,w(i),q_{i+1})\in \Delta$.
The run is accepting if for infinitely many $i$, $q_i\in\Acc$.  The
language $\Lang(\Au)$ accepted by $\Au$ consists of the infinite words
$w$ over $\Sigma$ such that there is an accepting run over $w$.

Fix a non-empty set $\Gamma$ of $\PLTL$ formulas over $\AP$.
The closure $\cl(\Gamma)$ of $\Gamma$ is the set of $\PLTL$ formulas
consisting of the formulas $\top$, $\Yesterday\top$, the  sub-formulas of the formulas $\theta\in\Gamma$, and
the negations of such formulas (we identify $\neg\neg\theta$ with $\theta$).
Note that $\Gamma\subseteq \cl(\Gamma)$.
Without loss of generality, we can assume that $\AP\subseteq \Gamma$.
Precisely, $\AP$ can be taken as the set of propositions occurring in
the given simple $\GHLTL$ sentence and $\cl(\Gamma)$ contains all the
propositions in $\AP$ and their negations.
For each formula $\theta\in \cl(\Gamma)\setminus \AP$, we introduce a
fresh atomic proposition not in $\AP$, denoted by
$\at(\theta)$. Moreover, for allowing a uniform notation, for each
$p\in \AP$, we write $\at(p)$ to mean $p$ itself.
Let $\AP_\Gamma$ be the set $\AP$ extended with these new
propositions.
By a straightforward adaptation of the well-known translation of
$\PLTL$ formulas into equivalent $\NBA$~\cite{VardiW94}, we
obtain the following result, where for a trace $\trace_\Gamma$ over
$\AP_\Gamma$, $(\trace_\Gamma)_{\AP}$ denotes the projection of $\trace_\Gamma$ over $\AP$.

\begin{proposition}
  \label{prop:NFAforLTL}
  Given a finite set $\Gamma$ of $\PLTL$ formulas over $\AP$, one can
  construct in single exponential time an $\NBA$ $\Au_\Gamma$
  over $2^{\AP_\Gamma}$ with $2^{O(|\AP_\Gamma|)}$ states satisfying
  the following:
  \begin{enumerate}
  \item let $\trace_\Gamma\in\Lang(\Au_\Gamma)$: then for all $i\geq 0$ and
    $\theta\in \cl(\Gamma)$, $\at(\theta)\in \trace_\Gamma(i)$ \emph{iff}
    $((\trace_\Gamma)_{\AP},i)\models \theta$.
  \item for each trace $\trace$ over $\AP$,
    there exists $\trace_\Gamma\in \Lang(\Au_\Gamma)$ such that $\trace= (\trace_\Gamma)_{\AP}$.
\end{enumerate}
\end{proposition}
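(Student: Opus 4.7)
The plan is to adapt the classical Vardi--Wolper automata-theoretic translation of $\LTL$ to $\NBA$ to the past extension, and to decorate the automaton so that the auxiliary propositions $\at(\theta)$ in $\AP_\Gamma$ are forced to reflect the truth of $\theta$ along the projected trace. States will be \emph{atoms}, i.e., subsets $S\subseteq \cl(\Gamma)$ that are maximally consistent with the Boolean connectives (for each $\theta\in\cl(\Gamma)$ exactly one of $\theta,\neg\theta$ belongs to $S$, and $\theta_1\vee\theta_2\in S$ iff $\theta_1\in S$ or $\theta_2\in S$). Since $|\cl(\Gamma)|=O(|\AP_\Gamma|)$, there are at most $2^{O(|\AP_\Gamma|)}$ atoms, and the automaton can be built in single exponential time. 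An atom $S$ is initial iff $\Yesterday\top\notin S$ (recall that $\Yesterday\top\in\cl(\Gamma)$ by definition), which forces $\Yesterday\theta\notin S$ and $\theta_1\Since\theta_2\in S \Rightarrow \theta_2\in S$ for every relevant past formula, reflecting that no proper past exists at position $0$.

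The transition relation from $S$ to $S'$ on input letter $a\in 2^{\AP_\Gamma}$ would enforce three local-consistency constraints: (i) label consistency: for every $\theta\in\cl(\Gamma)$, $\at(\theta)\in a$ iff $\theta\in S$; (ii) next-consistency of future operators: $\Next\theta\in S$ iff $\theta\in S'$, and $\theta_1\Until\theta_2\in S$ iff $\theta_2\in S$ or $\theta_1\in S$ and $\theta_1\Until\theta_2\in S'$; (iii) yesterday-consistency of past operators: $\Yesterday\theta\in S'$ iff $\theta\in S$, and $\theta_1\Since\theta_2\in S'$ iff $\theta_2\in S'$ or $\theta_1\in S'$ and $\theta_1\Since\theta_2\in S$. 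Past obligations are thus witnessed by the immediately preceding state with no extra bookkeeping; future eventualities are handled by a generalized B\"uchi condition with one component per $\theta_1\Until\theta_2\in\cl(\Gamma)$: $S$ belongs to the corresponding component iff $\theta_2\in S$ or $\theta_1\Until\theta_2\notin S$. A standard degeneralization yields a single B\"uchi condition while preserving the $2^{O(|\AP_\Gamma|)}$ state bound.

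For correctness of property~(2), given a trace $\trace$ over $\AP$, I would define $\trace_\Gamma$ by extending $\trace(i)$ with the propositions $\at(\theta)$ for exactly those $\theta\in\cl(\Gamma)$ such that $(\trace,i)\models\theta$, and read off the run $S_0 S_1\ldots$ with $S_i=\{\theta\in\cl(\Gamma)\mid (\trace,i)\models\theta\}$. All three consistency constraints and the initial-state condition then follow directly from the $\PLTL$ semantics, and every until eventuality $\theta_1\Until\theta_2$ in $S_i$ is discharged at some $j\geq i$, so the generalized B\"uchi condition is met; hence $\trace_\Gamma\in\Lang(\Au_\Gamma)$. For property~(1), given any accepting run $S_0 S_1\ldots$ on some $\trace_\Gamma\in\Lang(\Au_\Gamma)$, I would prove by induction on $\theta\in\cl(\Gamma)$ that $\theta\in S_i$ iff $((\trace_\Gamma)_\AP,i)\models\theta$; the Boolean, next and yesterday cases are immediate from local consistency, the since case follows by induction on the distance to the initial position using the initial-state restriction, and the until case uses the B\"uchi condition to exclude an infinite unfulfilled postponement. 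Combined with label-consistency this yields $\at(\theta)\in\trace_\Gamma(i)$ iff $((\trace_\Gamma)_\AP,i)\models\theta$.

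The construction itself is entirely standard, so I do not expect any serious technical obstacle; the only points requiring a little care are ensuring that the initial-state predicate is strong enough to rule out all past witnesses at position~$0$ (this is where having $\Yesterday\top$ inside $\cl(\Gamma)$ is convenient), and carrying out the degeneralization of the generalized B\"uchi condition in a way that keeps the state count within $2^{O(|\AP_\Gamma|)}$ rather than multiplying by the number of until subformulas in an uncontrolled way.
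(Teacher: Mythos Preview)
Your approach is essentially identical to the paper's: atom-based states, the same local one-step constraints for $\Next$, $\Yesterday$, $\Until$, $\Since$, a generalized B\"uchi component per until-subformula, and the same two correctness arguments. One small slip worth fixing: the condition $\Yesterday\top\notin S$ does \emph{not} by itself force $\Yesterday\theta\notin S$ for other $\theta$, since atoms are only propositionally consistent and $\Yesterday\top,\Yesterday\theta$ are independent at the propositional level; to make the inductive argument for Property~(1) go through at position~$0$ you need the initial-state predicate to require $\Yesterday\theta\notin S$ for \emph{every} $\Yesterday\theta\in\cl(\Gamma)$ (the paper's write-up shares this imprecision and simply defers to ``standard arguments'').
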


\begin{proof} Here, we construct a \emph{generalized} $\NBA$
  $\Au_\Gamma$ satisfying Properties~(1) and~(2) of
  Proposition~\ref{prop:NFAforLTL}, which can be converted in linear
  time into an equivalent $\NBA$. Recall that a generalized
  $\NBA$ is defined as an $\NBA$ but the acceptance condition
  is given by a family $\Family=\{\Acc_1,\ldots,\Acc_k\}$ of sets
  of accepting states. In this case, a run is accepting if for each
  accepting component $\Acc_i\in \Family$, the run visits
  infinitely often states in $\Acc_i$.

The generalized $\NBA$ $\Au_\Gamma = \mktuple{2^{\AP_\Gamma},Q,Q_0,\Delta,\Family}$  is defined as follows.
$Q$ is the set of \emph{atoms} of $\Gamma$ consisting of the maximal  propositionally consistent subsets $A$ of $\cl(\Gamma)$.
Formally, an atom $A$ of $\Gamma$ is a subset of $\cl(\Gamma)$   satisfying the following:
 \begin{itemize}
 \item $\top\in A$
  \item for each $\theta\in\cl(\Gamma)$, $\theta\in A$ iff $\neg\theta\notin A$;
    \item for each $\theta_1\vee \theta_2\in \cl(\Gamma)$, $\theta_1\vee\theta_2\in A$ iff $\{\theta_1,\theta_2\}\cap A\neq \emptyset$.
\end{itemize}

The set $Q_0$ of initial states consists of the atoms $A$ of $\Gamma$ such that $\neg\Yesterday\top\in A$.
For an atom $A$, $\at(A)$ denotes the subset of propositions in $\AP_\Gamma$ associated with the formulas in $A$, i.e.
$\at(A)\DefinedAs\{\at(\theta)\mid \theta\in A\}$.
The transition relation $\Delta$ captures the semantics of the next and previous modalities  and the local fixpoint characterization of the until and since  modalities.
Formally, $\Delta$ consists of the transitions of the form $(A,\at(A),A')$ such that:
 \begin{itemize}
 \item for each $\Next\theta\in\cl(\Gamma)$, $\Next\theta\in A$ iff $\theta\in A'$;
 \item for each $\Yesterday\theta\in\cl(\Gamma)$, $\Yesterday\theta\in A'$ iff $\theta\in A$;
  \item for each $\theta_1\Until\theta_2\in\cl(\Gamma)$, $\theta_1\Until\theta_2\in A$ iff either $\theta_2\in A$, or $\theta_1\in A$  and $\theta_1\Until\theta_2\in A'$;
  \item for each $\theta_1\Since\theta_2\in\cl(\Gamma)$, $\theta_1\Since\theta_2\in A'$ iff either $\theta_2\in A'$, or $\theta_1\in A'$  and $\theta_1\Since\theta_2\in A$.
\end{itemize}
Finally, the generalized B\"{u}chi acceptance condition is used for ensuring the fulfillment of the liveness requirements $\theta_2$ in the until sub-formulas $\theta_1\Until\theta_2$ in $\Gamma$. Formally, for each $\theta_1\Until\theta_2\in\cl(\Gamma)$, $\Family$ has a component
consisting of the atoms $A$ such that either $\neg(\theta_1\Until\theta_2)\in A$ or $\theta_2\in A$.

Let $\trace_\Gamma\in\Lang(\Au_\Gamma)$. By construction, there is an accepting infinite sequence of atoms $\rho=A_0 A_1\ldots$  such that for all $i\geq 0$, $\trace_\Gamma(i)=\at(A_i)$. Let $\trace$ be the projection of $\trace_\Gamma$ over $\AP$ (note that $A_i\cap\AP = \trace(i)$ for all $i\geq 0$).
By standard arguments (see~\cite{VardiW94}), the following holds: for all $i\geq 0$ and $\theta\in \cl(\Gamma)$, $\theta\in A_i$ (hence, $\at(\theta)\in \trace_\Gamma(i)$) if and only if $(\trace,i)\models \theta$.
Hence, Property~(1) of Proposition~\ref{prop:NFAforLTL} follows.

For Property~(2), let $\trace$ be a trace over $\AP$ and let $\rho= A_0 A_1\ldots$ be the infinite sequence of atoms defined
as follows for all $i\geq 0$: $A_i=\{\theta\in \cl(\Gamma)\mid (\trace,i)\models \theta\}$. By construction and the semantics of $\PLTL$,
$\rho$ is an accepting run of $\Au_\Gamma$ over the word $\trace_\Gamma= at(A_0) at(A_1)\ldots$. Moreover, $\trace$ coincides with the projection of $\trace_\Gamma$ over $\AP$. Hence, the result follows.
\end{proof}

Let $\KS= \mktuple{\States,\States_0,\Trans,\Lab}$ be a finite Kripke structure over $\AP$ and $\FStates\subseteq \States$.
Next, we consider the synchronous product of the fair Kripke structure
$(\KS , \FStates)$ with the $\NBA$
$\Au_\Gamma = \mktuple{2^{\AP_\Gamma},Q,Q_0,\Delta,\Acc}$ over
$2^{\AP_\Gamma}$ of Proposition~\ref{prop:NFAforLTL} associated with
$\Gamma$.
More specifically, we construct a Kripke structure $\KS_\Gamma$ over
$\AP_\Gamma$ and a subset $\FStates_\Gamma$ of $\KS_\Gamma$-states such that
$\Lang(\KS_\Gamma,F_\Gamma)$ is the set of traces
$\trace_\Gamma\in\Lang(\Au_\Gamma)$ whose projections over $\AP$ are in
$\Lang(\KS,\FStates)$.
Formally, the \emph{$\Gamma$-extension of $(\KS,\FStates)$} is the fair
Kripke structure $(\KS_\Gamma,\FStates_\Gamma)$ where
$\KS_\Gamma=\mktuple{\States_\Gamma,\States_{0,\Gamma},\Trans_\Gamma,\Lab_\Gamma}$ and
$\FStates_\Gamma$ are defined as follows:
\begin{itemize}
\item $\States_\Gamma$ is the set of tuples
  $(\state,B,q,\ell)\in \States\times 2^{\AP_\Gamma}\times Q\times \{1,2\}$ such
  that $\Lab(s)= B\cap \AP$;
\item
  $\States_{0,\Gamma}= \States_\Gamma \cap (\States_0\times 2^{\AP_\Gamma}\times
  Q_0\times \{1\})$;
\item $\Trans_\Gamma$ consists of the following transitions:
  \begin{itemize}
  \item $((\state,B,q,1),(\state',B',q',\ell))$ such that $(\state,\state')\in \Trans$,
    $(q,B,q')\in \Delta$, and $\ell=2$ if $\state\in \FStates$ and $\ell=1$
    otherwise;
  \item $((\state,B,q,2),(\state',B',q',\ell))$ such that $(\state,\state')\in \Trans$,
    $(q,B,q')\in \Delta$, and $\ell=1$ if $q\in \Acc$ and $\ell=2$
    otherwise.
  \end{itemize}
\item for each $(\state,B,q,\ell)\in \States_\Gamma$, $\Lab_\Gamma((s,B,q,\ell))=B$;
\item $\FStates_\Gamma=\{(\state,B,q,2)\in \States_\Gamma\mid q\in \Acc\}$.
\end{itemize}
 
\noindent By construction and Proposition~\ref{prop:NFAforLTL}(2), we
easily obtain the following result.

\begin{proposition}
  \label{prop:SynchronousProduct}
  For each trace $\trace_\Gamma$ over $\AP_\Gamma$,
  $\trace_\Gamma\in \Lang(\KS_\Gamma,\FStates_\Gamma)$ if and only if
  $\trace_\Gamma\in\Lang(\Au_\Gamma)$ and $(\trace_\Gamma)_{\AP}\in \Lang(\KS,\FStates)$. Moreover,
  for each $\trace\in \Lang(\KS,\FStates)$, there exists
  $\trace_\Gamma\in\Lang(\KS_\Gamma,\FStates_\Gamma)$ such that $(\trace_\Gamma)_{\AP} =\trace$.
\end{proposition}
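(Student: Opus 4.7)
The plan is to reduce the stuttering subscripts and singleton-context subformulas of $\varphi$ to plain atomic propositions, which are then tracked by enriching the Kripke structure with an automaton that witnesses the truth of the $\PLTL$ formulas of interest. The two tools available are Propositions~\ref{prop:NFAforLTL} and~\ref{prop:SynchronousProduct}: the first provides, for any finite set $\Gamma^*$ of $\PLTL$ formulas, an $\NBA$ $\Au_{\Gamma^*}$ over an extended alphabet $2^{\AP_{\Gamma^*}}$ whose traces are precisely the traces over $\AP$ extended by new atomic propositions $\at(\theta)$ encoding truth values of formulas $\theta\in \cl(\Gamma^*)$; the second provides the product of $(\KS,\FStates)$ with $\Au_{\Gamma^*}$, giving a fair Kripke structure whose language is, in bijection (via projection), with $\Lang(\KS,\FStates)$.

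\textbf{Construction.} I would first let $\Gamma^*$ be the union of the (unique) stuttering subscript $\Gamma$ of $\varphi$ and the finite set of $\PLTL$ formulas $\psi$ occurring as $\psi$ in some singleton-context subformula $\ctx{\vartrace}\Rel{\psi}{\vartrace}$ of $\varphi$; note $|\Gamma^*|=O(|\varphi|)$. Set $\AP\,':=\AP_{\Gamma^*}$, apply Proposition~\ref{prop:NFAforLTL} to obtain $\Au_{\Gamma^*}$ of size $2^{O(|\varphi|)}$, and apply Proposition~\ref{prop:SynchronousProduct} to obtain the $\Gamma^*$-extension $(\KS',\FStates'):=(\KS_{\Gamma^*},\FStates_{\Gamma^*})$, which satisfies $|\KS'|=O(|\KS|\cdot 2^{O(|\varphi|)})$. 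Let $\Gamma':=\{\at(\theta)\mid \theta\in\Gamma\}\subseteq \AP\,'$. The sentence $\varphi'$ is then obtained from $\varphi$ by two purely syntactic substitutions carried out inductively on subformulas: (i) replace every stuttering subscript $\Gamma$ in a temporal modality by $\Gamma'$; (ii) replace every singleton-context subformula $\ctx{\vartrace}\Rel{\psi}{\vartrace}$ by $\Rel{\at(\psi)}{\vartrace}$. All trace quantifiers, pointed quantifiers, boolean connectives and context modalities are left untouched. By construction $\varphi'$ is a singleton-free sentence of $\SGHLTL{\Gamma'}$, has size $O(|\varphi|)$, and---since the substitutions do not insert or remove any trace quantifier or temporal modality outside of what is collapsed to atomic propositions (whose strong alternation depth is $0$)---has the same strong alternation depth as $\varphi$.

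\textbf{Correctness.} I would prove a stronger inductive claim: for every trace assignment $\TracesMap$ over $\Lang(\KS,\FStates)$, the associated trace assignment $\TracesMap'$ over $\Lang(\KS',\FStates')$ obtained by mapping each trace $\trace$ to its \emph{unique} extension $\trace'\in\Lang(\KS',\FStates')$ with $(\trace')_{\AP}=\trace$ (uniqueness and existence follow from Propositions~\ref{prop:NFAforLTL}(1)--(2) and~\ref{prop:SynchronousProduct}), every context $\Ctx\subseteq \Var$, and every subformula $\phi$ of $\varphi$ with corresponding translation $\phi'$, we have $(\TracesMap,\Ctx)\models_{\Lang(\KS,\FStates)}\phi$ iff $(\TracesMap',\Ctx)\models_{\Lang(\KS',\FStates')}\phi'$. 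The base cases are immediate: for plain $\Rel{p}{\vartrace}$ (with $p\in\AP$) this follows because $\trace'$ agrees with $\trace$ on $\AP$, and for singleton-context subformulas $\ctx{\vartrace}\Rel{\psi}{\vartrace}$, the equivalence $(\trace,i)\models\psi \Leftrightarrow \at(\psi)\in\trace'(i)$ is exactly Proposition~\ref{prop:NFAforLTL}(1). The cases for boolean connectives, context modalities and pointed/hyper quantifiers are immediate from the induction hypothesis together with the bijection $\trace\leftrightarrow \trace'$ between $\Lang(\KS,\FStates)$ and $\Lang(\KS',\FStates')$. Setting $\TracesMap=\TracesMap'=\TracesMap_\emptyset$ and $\Ctx=\Var$ yields the theorem.

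\textbf{Main obstacle.} The crux is the inductive step for the stutter-relativized temporal modalities $\Next_\Gamma$, $\Yesterday_\Gamma$, $\Until_\Gamma$, $\Since_\Gamma$. There the equivalence reduces to the identity $\SUCC_{(\Gamma,\Ctx)}(\TracesMap)\leftrightarrow \SUCC_{(\Gamma',\Ctx)}(\TracesMap')$ (and symmetrically for $\PRED$), which in turn requires that, for every $\trace\in\Lang(\KS,\FStates)$ with extension $\trace'$, the $\Gamma$-stutter factorization of $\trace$ coincides position-by-position with the $\Gamma'$-stutter factorization of $\trace'$. This coincidence follows precisely from Proposition~\ref{prop:NFAforLTL}(1): for every $\theta\in\Gamma$ and every index $i$, $(\trace,i)\models\theta$ iff $\at(\theta)\in\trace'(i)$, so the set of ``transition'' indices where the truth of some formula of $\Gamma$ changes along $\trace$ is literally the same as the set of indices where some proposition of $\Gamma'$ changes along $\trace'$. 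Once this is established, the inductive step for each temporal modality is routine, and combining it with Proposition~\ref{prop:SynchronousProduct} for the hyper/pointed quantifier cases completes the proof.
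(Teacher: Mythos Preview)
Your proposal does not prove Proposition~\ref{prop:SynchronousProduct}; it \emph{uses} it. What you have written is a proof sketch for Proposition~\ref{prop:FormSimpleToProposition} (the reduction from simple $\GHLTL$ to a singleton-free $\SGHLTL{\Gamma'}$ sentence over an extended structure), in which Propositions~\ref{prop:NFAforLTL} and~\ref{prop:SynchronousProduct} appear only as black-box ingredients. Proposition~\ref{prop:SynchronousProduct} itself is the routine correctness statement for the synchronous product $(\KS_\Gamma,\FStates_\Gamma)$ of $(\KS,\FStates)$ with the $\NBA$ $\Au_\Gamma$: its proof consists in checking that an $\FStates_\Gamma$-fair path of $\KS_\Gamma$ projects, in its $\state$-coordinate, to an $\FStates$-fair path of $\KS$ and, in its $(B,q)$-coordinates, to an accepting run of $\Au_\Gamma$ over the induced trace (the counter $\ell\in\{1,2\}$ is the standard round-robin trick that forces both the $\FStates$-fairness and the B\"uchi condition of $\Au_\Gamma$ to be met infinitely often), and conversely that any such pair can be zipped into an $\FStates_\Gamma$-fair path. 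The paper dispatches this in one line (``by construction and Proposition~\ref{prop:NFAforLTL}(2)''); none of this appears in your write-up.

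If your intended target was in fact Proposition~\ref{prop:FormSimpleToProposition}, then your sketch is essentially the paper's own proof: same choice of $\Gamma^*$ (called $\Gamma$ there), same product construction, same two syntactic substitutions, and the same inductive claim over subformulas, with the key step for the stutter-relativized temporal modalities reduced---exactly as you do---to the coincidence of the $\Gamma$-stutter factorization of $\trace$ with the $\Gamma'$-stutter factorization of its extension, via Proposition~\ref{prop:NFAforLTL}(1). The only cosmetic difference is direction: the paper phrases its claim for an arbitrary assignment $\TracesMap_\Gamma$ over the extended structure and projects down to $(\TracesMap_\Gamma)_{\AP}$, whereas you start from $\TracesMap$ over the original structure and lift via the unique extension; the bijection you invoke makes the two formulations interchangeable.
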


\noindent We can now provide a proof of Proposition~\ref{prop:FormSimpleToProposition}.

\propFormSimpleToProposition*
\begin{proof}
Let $\varphi$ be a $\GHLTL$ sentence over $\AP$ and $(\KS,\FStates)$
be a fair finite Kripke  structure $(\KS,\FStates)$ over $\AP$.
Then, there is a finite set $\Gamma_0$ of $\PLTL$ formulas  such that
$\varphi$ is in the fragment $\SGHLTL{\Gamma_0}$.
Let $\Gamma$ be the set of $\PLTL$ formulas consisting of the formulas
in $\Gamma_0$ and the $\PLTL$ formulas $\psi$ such that
$\ctx{x}\Rel{\psi}{\vartrace}$ is a sub-formula of $\varphi$ for some variable $\vartrace$.
Define $\AP'$, $\Gamma'\subseteq \AP'$, $(\KS',\FStates')$, and $\varphi'$ as follows:
\begin{itemize}
 \item $\AP'\DefinedAs \AP_\Gamma$ (recall that $\AP_\Gamma=\AP\cup \{\at(\theta)\mid \theta\in \Gamma\}$)
 and $\Gamma'$ is the $\AP_\Gamma$-counterpart of $\Gamma_0$, \ie
 $\Gamma'\DefinedAs \{\at(\theta)\mid \theta\in \Gamma_0\}$;
 \item $(\KS',\FStates')\DefinedAs (\KS_\Gamma,\FStates_\Gamma)$, where
   $(\KS_\Gamma,\FStates_\Gamma)$ is the $\Gamma$-extension of $(\KS,\FStates)$;
 \item $\varphi'\DefinedAs \TMap(\varphi)$  where the mapping $\TMap$ replaces (i) each sub-formula
    $\ctx{x}\Rel{\psi}{\vartrace}$ of $\varphi$ with its propositional $\vartrace$-version
    $\Rel{\at(\psi)}{\vartrace}$, and (ii) each $\Gamma_0$-relativized temporal modality  with its
    $\Gamma'$-relativized version. Note that $\TMap(\varphi)$ is a singleton-free $\SGHLTL{\Gamma'}$ formula
    where $\Gamma'$ is propositional (in particular, $\Gamma'\subseteq \AP'$) and has the same strong alternation
    depth as $\varphi$.
 \end{itemize}

It remains to show that the construction is correct, \ie $\Lang(\KS_\Gamma,\FStates_\Gamma)\models \TMap(\varphi)$ \emph{iff}
  $\Lang(\KS,\FStates)\models \varphi$. Let $\Lambda$ be the set of formulas $\phi$ in the
fragment  $\SGHLTL{\Gamma_0}$ such that for each sub-formula $\ctx{x}\Rel{\psi}{\vartrace}$ of $\phi$,
it holds that $\psi\in\Gamma$. Moreover, for a trace assignment $\TracesMap_\Gamma$ over
 $\Lang(\KS_\Gamma,\FStates_\Gamma)$, the \emph{$\AP$-projection of
 $\TracesMap_\Gamma$}, written
 $(\TracesMap_\Gamma)_{\AP}$, is the trace assignment with domain
 $\Dom(\TracesMap)$ obtained from
 $\TracesMap$ by replacing each pointed trace  $\TracesMap(\vartrace)$, where $\vartrace\in\Dom(\TracesMap)$ and
 $\TracesMap(\vartrace)$ is of the form $(\trace_\Gamma,i)$,
 with $((\trace_\Gamma)_\AP,i)$.  Note that by Proposition~\ref{prop:SynchronousProduct},
 $(\TracesMap_\Gamma)_{\AP}$ is a trace assignment over $\Lang(\KS,\FStates)$.
Correctness of the construction directly follows from the following claim.\vspace{0.2cm}

\noindent \emph{Claim.} Let $\phi\in\Lambda$ and $\TracesMap_\Gamma$ be a trace assignment over $\Lang(\KS_\Gamma,\FStates_\Gamma)$.
Then:
\[
  (\TracesMap_\Gamma,\Var) \models_{\Lang(\KS_\Gamma,\FStates_\Gamma)}\TMap(\phi)$ iff $((\TracesMap_\Gamma)_{\AP},\Var) \models_{\Lang(\KS,\FStates)} \phi
\]
The claim is proved by  structural induction on $\phi\in\Lambda$. The cases where the root modality of $\phi$ is a Boolean connective  directly follow from the induction hypothesis. For the other cases, we proceed as follows.
\begin{itemize}
  \item $\phi= \ctx{x}\Rel{\psi}{\vartrace}$, where $\psi\in\Gamma$. Hence, $\TMap(\phi)=  \Rel{\at(\psi)}{\vartrace}$. Let $\TracesMap_\Gamma(\vartrace)=(\trace_\Gamma,i)$.
  We have that $(\TracesMap_\Gamma)_\AP(\vartrace)=((\trace_\Gamma)_\AP,i)$. By Propositions~\ref{prop:NFAforLTL} and~\ref{prop:SynchronousProduct},
  $\at(\psi)\in \trace_\Gamma(i)$ \emph{iff}
    $((\trace_\Gamma)_{\AP},i)\models \psi$. Hence, the result follows.
  \item The root modality of $\phi$ is a $\Gamma_0$-relativized temporal modality. Assume  that
  $\phi = \phi_1 \Until_{\Gamma_0} \phi_2$ (the other cases being similar). Then, $\TMap(\phi)= \TMap(\phi_1) \Until_{\Gamma'} \TMap(\phi_2)$ (recall that
  $\Gamma'= \{\at(\theta)\mid \theta\in \Gamma_0\}$ and $\Gamma_0\subseteq \Gamma$).  By Propositions~\ref{prop:NFAforLTL} and~\ref{prop:SynchronousProduct}, for each pointed trace $(\trace_\Gamma,i)$ over $\Lang(\KS_\Gamma,\FStates_\Gamma)$ and $\theta\in \Gamma_0$, it holds that
  $\at(\theta)\in \trace_\Gamma(i)$ \emph{iff}
    $((\trace_\Gamma)_{\AP},i)\models \theta$. By construction, it follows that $\SUCC_{(\Gamma_0,\Var)}((\TracesMap_\Gamma)_{\AP})$
    is the $\AP$-projection of $\SUCC_{(\Gamma',\Var)}(\TracesMap_\Gamma)$. Hence, by the semantics of $\GHLTL$ and the induction hypothesis, the result follows.
  \item $\phi = \exists\vartrace.\,\phi'$. Hence, $\TMap(\phi)= \exists\vartrace.\,\TMap(\phi')$. By Proposition~\ref{prop:SynchronousProduct}, for each trace $\trace\in\Lang(\KS,\FStates)$,
   there exists
  $\trace_\Gamma\in\Lang(\KS_\Gamma,\FStates_\Gamma)$ such that $(\trace_\Gamma)_{\AP} =\trace$. Hence, the result easily follows from the induction hypothesis.
  \item $\phi = \exists^{\Pt}\vartrace.\,\phi'$: this case is similar to the previous one.
\end{itemize}
\end{proof}

\subsection{Proof of Proposition~\ref{prop:ExtendedStutterTrace}}\label{app:ExtendedStutterTrace}

\propExtendedStutterTrace*
\begin{proof}
Let $\KS=\mktuple{\States,\States_0,\Trans,\Lab}$.
Intuitively, the Kripke structure $\KS_\Gamma$ is obtained from
 $\KS$ by adding edges which are \emph{summaries} of finite paths $\Path$ of $\KS$ where \emph{either} the propositional valuation in $\Gamma$
 changes only at the final state of $\Path$, or the propositional valuation in $\Gamma$ does not change along $\Path$.
 Formally, let $\TransR_\Gamma(\KS)$ and $\TransR_\Gamma(\KS,\FStates)$ be the
 sets of state pairs in $\KS$ defined as follows:
 \begin{itemize}
 \item $\TransR_\Gamma(\KS)$ consists of the pairs $(\state,\state')\in \States\times \States$
   such that $\Lab(\state)\cap \Gamma\neq \Lab(\state')\cap \Gamma$ and there is a
   finite path of $\KS$ of the form $\state\cdot \rho \cdot \state'$ such that
   $\Lab(\state)\cap \Gamma= \Lab(\rho(i))\cap \Gamma$ for all $0\leq i<|\rho|$.
 \item $\TransR_\Gamma(\KS,\FStates)$ is defined similarly but, additionally, we
   require that the finite path $\state\cdot \rho \cdot \state'$ visits some
     state in $\FStates$.
 \end{itemize}
 Intuitively, $\TransR_\Gamma(\KS)$ keeps track  of the initial and final states of the finite paths
 of $\KS$ with length at least $2$ where the propositional valuation in $\Gamma$ changes only at the final state of the finite path.
 Additionally, $\TransR_\Gamma(\KS,\FStates)$ considers only those finite paths which visit  some state in $\FStates$.
 Moreover, let $\TransR^{\pad}_\Gamma(\KS)$ and $\TransR^{\pad}_\Gamma(\KS,\FStates)$ be the
 sets of state pairs in $\KS$ defined as follows:
 \begin{itemize}
\item $\TransR^{\pad}_\Gamma(\KS)$ consists of the pairs $(\state,\state')\in \States\times \States$
   such that $\Lab(\state)\cap \Gamma = \Lab(\state')\cap \Gamma$ and there is a
   finite path of $\KS$ of the form $\state\cdot \rho \cdot \state'$ such that
   $\Lab(\state)\cap \Gamma= \Lab(\rho(i))\cap \Gamma$ for all $0\leq i<|\rho|$.
 \item $\Trans^{\pad}_\Gamma(\KS,\FStates)$ is defined similarly but, additionally, we
   require that the finite path $\state\cdot \rho \cdot \state'$ visits some
   accepting state in $\FStates$.
 \end{itemize}
Thus, $\TransR^{\pad}_\Gamma(\KS)$ keeps track  of the initial and final states of the finite paths
 of $\KS$ with length at least $2$ where the propositional valuation in $\Gamma$ does not change.
 Additionally, $\TransR^{\pad}_\Gamma(\KS,\FStates)$ considers only those finite paths which visit  some state in $\FStates$.
 The finite sets $\TransR_\Gamma(\KS)$, $\TransR_\Gamma(\KS,\FStates)$, $\TransR^{\pad}_\Gamma(\KS)$, $\TransR^{\pad}_\Gamma(\KS,\FStates)$  can be easily
 computed in polynomial time by standard closure algorithms. By
 exploiting these finite sets, we define
 the finite Kripke structure $\KS_\Gamma=\mktuple{\States_\Gamma,\States_{\Gamma,0},\Trans_\Gamma,\Lab_\Gamma}$ and the set $\FStates_\Gamma\subseteq \States_\Gamma$ as follows.
 \begin{itemize}
\item $\States_\Gamma$ is given by $\States\times 2^{\{\acc,\pad\}}$ and
  $\States_{\Gamma,0}$ is the set of states of the form $(\state,\emptyset)$ for some $\state\in \States_0$.
\item $\Trans_\Gamma$ consists of the edges $((\state,T),(\state',T'))$ such
  that one of the following holds:
  \begin{itemize}
  \item$(\state,\state')\in \Trans \cup \TransR_\Gamma(\KS)$, $\pad\notin T'$, and ($\acc\in T'$
    iff $\state'\in \FStates$);
  \item $(\state,\state')\in \TransR_\Gamma(\KS,\FStates)$, $\pad\notin T'$, and $\acc\in T'$;
  \item $(\state,\state')\in  \TransR^{\pad}_\Gamma(\KS)$, $\pad\notin T$, $\pad\in T'$, and ($\acc\in T'$
    iff $\state'\in \FStates$);
  \item $(\state,\state')\in \TransR^{\pad}_\Gamma(\KS,\FStates)$, $\pad\notin T$, $\pad\in T'$, and $\acc\in T'$.
   \end{itemize}
 \item $\Lab_\Gamma{(\state,T)}=\Lab(\state)\cup\{\pad\}$ if $\pad\in T$, and $\Lab_\Gamma{(\state,T)}=\Lab(\state)$ otherwise;
 \item $\FStates_\Gamma$ is the set of $\KS_\Gamma$-states $(\state,T)$ such that $\acc\in T$.
 \end{itemize} 

Intuitively, proposition $\#$ marks only the $\KS$-states which are targets of pairs in $\TransR^{\pad}_\Gamma(\KS)\cup \TransR^{\pad}_\Gamma(\KS,\FStates)$,
while the flag $\acc$ marks either the states in $\FStates$ which are targets of $\KS$-edges, or the $\KS$-states which are targets
of pairs in $\TransR_\Gamma(\KS,\FStates)\cup \TransR^{\pad}_\Gamma(\KS,\FStates)$. We say that
a trace $\trace$ over $\AP\cup\{\pad\}$ is \emph{well-formed} if one of the following conditions holds:
  \begin{itemize}
  \item \emph{or} $\trace$ is a $\Gamma$-stutter free trace over $\AP$ (\ie $\stfr_{\Gamma}(\trace)=\trace$);
  \item \emph{or} there is a position $i\geq 0$  such that $i+1$ is the unique position  where $\pad$ holds and
    $\stfr_{\Gamma}(\trace)= \trace(0)\ldots \trace(i)\cdot\trace^{i+2}$.
   \end{itemize}

By construction, it easily follows that  $\stfr_{\Gamma}^{\pad}(\Lang(\KS,\FStates))$ is the set of \emph{well-formed} traces in $\Lang(\KS_\Gamma,\FStates_\Gamma)$.
Then, the $\LTL$ formula $\theta_\Gamma$ captures the well-formed requirement and is defined as follows.
\[
\begin{array}{l}
\hspace{-0.2cm}\neg\pad \wedge \Always(\pad \rightarrow \Next\Always \neg\pad)\wedge \Always\Bigl(\displaystyle{\bigvee_{p\in \Gamma}} \bigl((p \leftrightarrow \neg \Next p)
\wedge \neg\pad \wedge \Next\neg\pad\bigr) \, \vee\,
\displaystyle{\bigwedge_{p\in \Gamma}} \Always \bigl((p \leftrightarrow \Next p)\wedge \neg\pad \bigr) \,\vee \\
\bigl(\Next\pad\wedge  \displaystyle{\bigwedge_{p\in \Gamma}}(p \leftrightarrow \Next p) \wedge  \displaystyle{\bigvee_{p\in \Gamma}} (p \leftrightarrow \neg \Next^2 \neg p)  \bigr)\Bigr)
\end{array}
\]
\noindent This concludes the proof of Proposition~\ref{prop:ExtendedStutterTrace}.
\end{proof} 

\end{document}